\newcommand{\matice}[1]{\left( \begin{array}{cc} #1 \end{array} \right)}
\newcommand{\eq}[1]{\begin{equation} #1 \end{equation}}
\newcommand{\N}{\mathbb{N}}
\newcommand{\R}{{\mathbb{R}}}
\newcommand{\C}{{\mathbb{C}}}
\newcommand{\Z}{{\mathbb{Z}}}
\newcommand{\T}{{\mathcal{T}}}
\newcommand{\dd}{{{\rm d}}}
\newcommand{\ii}{{\rm i}}
\renewcommand{\P}{{\mathcal{P}}}
\newcommand{\PT}{{\mathcal{PT}}}
\newcommand{\Dom}{{\rm{Dom}\,}}
\renewcommand{\Re}{\text{\rm Re}\,}
\renewcommand{\Im}{\text{\rm Im}\,}
\newcommand{\cf}{\emph{cf.}}
\newcommand{\diag}{{\rm diag}}
\newcommand{\ie}{{\emph{i.e.}}}
\newcommand{\eg}{{\emph{e.g.}}}
\newcommand{\Jj}{J_1}
\newcommand{\Jd}{J_2}
\newcommand{\I}{{\rm II}}
\newcommand{\II}{{\rm I}}
\newcommand{\dist}{\mathrm{dist}}
\begin{document}

\numberwithin{equation}{section}
\newtheorem{theorem}{Theorem}[section]
\newtheorem{lemma}{Lemma}[section]
\newtheorem{proposition}{Proposition}[section]
\newtheorem{corollary}{Corollary}[section]
\renewcommand{\proofname}{Proof}
\theoremstyle{remark}
\newtheorem{remark}{Remark}[section]


\title{$\PT$-symmetric models in curved manifolds}
\author{David Krej\v ci\v r\'ik$^{a}$ and Petr Siegl$^{a,b,c}$}
\date{
\small
\emph{
\begin{quote}
\begin{itemize}
\item[$a)$] Department of Theoretical Physics, Nuclear Physics Institute, Czech Academy of Sciences, \v Re\v z, Czech Republic \\
E-mail: david@ujf.cas.cz, siegl@ujf.cas.cz
\item[$b)$] Faculty of Nuclear Sciences and Physical Engineering, Czech Technical University in Prague, Prague,  Czech Republic
\item[$c)$] Laboratoire Astroparticule et Cosmologie, Universit\'e Paris 7, Paris, France
\end{itemize}
\end{quote}
}
\medskip
11 January 2010
}

\maketitle

\begin{abstract}
\noindent
We consider the Laplace-Beltrami operator in tubular neighbourhoods
of curves on two-dimensional Riemannian manifolds,
subject to non-Hermit\-ian parity and time preserving boundary conditions.
We are interested in the interplay between the geometry and spectrum.
After introducing a suitable Hilbert space framework
in the general situation,
which enables us to realize the Laplace-Beltrami operator
as an m-sectorial operator,
we focus on solvable models defined on manifolds of constant curvature.
In some situations, notably for non-Hermitian Robin-type boundary conditions,
we are able to prove either the reality
of the spectrum or the existence of complex conjugate pairs of eigenvalues,
and establish similarity of the non-Hermitian m-sectorial operators
to normal or self-adjoint operators.
The study is illustrated by numerical computations.
\\

\noindent
{\bf Mathematics Subject Classification (2010)}: 81Q12, 81Q35, 58J50, 34L40, 35P10  
\\

\noindent
{\bf Keywords}: Laplace-Beltrami operator, non-self-adjoint boundary conditions,
Robin-type boundary conditions, real spectrum, Riesz basis,
two-dimensional manifolds of constant curvature, J-self-adjointness, PT-sym\-me\-try

\end{abstract}

\newpage
\tableofcontents

\newpage
\section{Introduction}

Many systems in Nature can be under first approximation
described by linear second order differential equations,
such as the wave, heat or Schr\"odinger equation.
The common denominator of them is the Helmholtz equation
describing the stationary regime
and leading to the spectral study of the Laplace operator.
Already from the mathematical point of view,
it is important to understand the influence of the geometry
to the spectrum of the Laplacian, subject to various
types of boundary conditions, and vice versa,
to characterize geometric and boundary interface
properties from given spectral data.

In this paper, we are interested in the interplay between
the curvature of the ambient space and the spectrum of the Laplacian
subjected to a special class of non-Hermitian boundary conditions.
We choose probably the simplest non-trivial model,
\ie, the spectral problem corresponding to the equation
\begin{equation}\label{Helmholtz}
  -\Delta \psi = \lambda \psi
  \qquad\mbox{in}\qquad
  \Omega \,,
\end{equation}
where~$\lambda$ is a spectral parameter,
$\Omega$~is an $a$-tubular neighbourhood
of a closed curve~$\Gamma$ (usually a geodesic)
in a two-dimensional Riemannian manifold~$\mathcal{A}$
(not necessarily embedded in~$\R^3$), \ie,
\begin{equation}
  \Omega := \{
  x \in \mathcal{A} \ | \
  \dist(x,\Gamma) < a
  \}
  \,,
\end{equation}
and~$-\Delta$ is the associated Laplace-Beltrami operator.
The boundary conditions we consider
are general `parity and time preserving' boundary conditions
introduced in Section~\ref{bound.con} below;
a special example is given by the non-Hermitian Robin-type
boundary conditions
\begin{equation}\label{Robin}
  \frac{\partial\psi}{\partial n} + i \alpha \psi = 0
  \qquad\mbox{on}\qquad
  \partial\Omega \,,
\end{equation}
where~$n$ is the curve normal translated by geodesics
orthogonal to~$\Gamma$ and~$\alpha$ is a real-valued function.

The Schr\"odinger equation
in tubular neighbourhoods of submanifolds
of curved Riemannian manifolds
has been extensively studied in the context
of quantum waveguides and molecular dynamics
(\cf~\cite{Wachsmuth-2009} for a recent
mathematical paper with many references).
Here the confinement to a vicinity of the submanifold
is usually modelled by constraining potentials~\cite{Mitchell-2001-63,Wachsmuth-2009}
or Dirichlet boundary conditions~\cite{Clark-1996-29,Krejcirik-2003-45,Krejcirik-2006-2006}.

Note that, on the contrary,
the non-Hermitian nature of boundary conditions~\eqref{Robin}
enables one to model a leak/supply of energy
from/into the subsystem~$\Omega$,
since the probability current does not vanishes
on the boundary~$\partial\Omega$ unless $\alpha=0$.
In fact, non-Hermitian boundary conditions of the type~\eqref{Robin}
has been considered in \cite{Kaiser-2003-252,
Kaiser-2003-45, Kaiser-2002-43}
to model open (dissipative) quantum systems.
One also arrives at~\eqref{Robin} when transforming a scattering problem
to a (non-linear) spectral one~\cite[Ex.~9.2.4]{Davies-2007}.
Finally, let us observe that
Robin boundary conditions are known under the term
impedance boundary conditions in classical electromagnetism,
where they are conventionally used to approximate very thin layers
\cite{Bouchitte-1989-24, Engquist-1993, Bendali-1996-56}.

Our primary motivation to consider
the spectral problem~\eqref{Helmholtz},~\eqref{Robin}
comes from the so-called `$\mathcal{PT}$-symmetric quantum mechanics'
originated by the paper~\cite{bender-1998-80},
where the authors discussed a class of Schr\"odinger operators~$H$
in $L^2(\R)$ whose spectrum is real in spite of the fact
that their potentials are complex.
They argued that the rather surprising reality of the spectrum
follows from the $\mathcal{PT}$-symmetry property:
\begin{equation}\label{PT}
  [H,\mathcal{PT}] = 0 \,.
\end{equation}
Here the `parity'~$\mathcal{P}$
and `time reversal'~$\mathcal{T}$ operators
are defined by $(\mathcal{P}\psi)(x):=\psi(-x)$
and $\mathcal{T}\psi:=\overline{\psi}$.
It is important to emphasize that~$\mathcal{T}$
is an antilinear operator and that~\eqref{PT}
is neither sufficient nor necessary condition
to ensure the reality of the spectrum of~$H$.

Nevertheless, later on it was observed in
\cite{bender2002-89,Mostafazadeh-2002-43a,Mostafazadeh-2002-43b,Mostafazadeh-2002-43c}
that if the spectrum of a $\mathcal{PT}$-symmetric
operator~$H$ in a Hilbert space~$\mathcal{H}$ is indeed real
(and some further hypotheses are satisfied)
the condition~\eqref{PT} actually implies
that~$H$ is `quasi-Hermitian'~\cite{Scholtz-1992-213},
\ie, there exists a bounded invertible positive operator~$\Theta$ with bounded inverse,
called `metric', such that
\begin{equation}\label{quasi}
  H^* = \Theta^{-1} H \Theta \,.
\end{equation}
In other words, $H$~is similar to a self-adjoint operator
for which a conventional quantum-mechanical interpretation makes sense.
We refer to recent reviews~\cite{Bender2007-70,Mostafazadeh2008-review} and
proceedings \cite{jain-2009-73,Andrianov-2009-5,Fring-2008-41} for further information and references about
the concept of $\mathcal{PT}$-symmetry.

In addition to the potential quantum-mechanical interpretation,
we would like to mention the relevance of $\PT$-symmetric
operators in view of their recent study in the context of
superconductivity \cite{Rubinstein-2007-99,Rubinstein-2010-195},
electromagnetism \cite{Ruschhaupt-2005-38,Klaiman-2008-101}
and fluid dynamics
\cite{Chugunova-2008-342,Davies_2007,Weir-2009-22,Boulton-Levitin-Marletta}.

A suitable mathematical framework to analyse
$\mathcal{PT}$-symmetric Hamiltonians is either the theory
of self-adjoint operators in Krein spaces
\cite{Langer-2004-54,Jacob-2008-8}
or the $J$-self-adjointness~\cite{borisov-2007}.
The latter means that there exists an antilinear involution~$J$
such that
\begin{equation}\label{J}
  H^* = J H J \,.
\end{equation}
The concept~\eqref{J} is not restricted to functional Hilbert spaces
and it turns out that the majority of $\mathcal{PT}$-symmetric
Hamiltonians existing in the literature are indeed $J$-self-adjoint.
In general, however, the properties~\eqref{PT},
\eqref{quasi} and~\eqref{J} are all unrelated \cite{Siegl-MT,Siegl-2009-73}.

Summing up, given a non-Hermitian operator~$H$ satisfying~\eqref{PT},
two fundamental questions arises.
First,
\begin{enumerate}
\item
is the spectrum of~$H$ real?
\end{enumerate}
Second, if the answer to the previous question is positive,
\begin{enumerate}
\setcounter{enumi}{1} \item
does there exist a metric~$\Theta$ satisfying~\eqref{quasi}?
\end{enumerate}
It turns out that the questions constitute a difficult
problem in the theory of non-self-adjoint operators.

For this reason, one of the present authors and his coauthors
proposed in~\cite{krejcirik-2006-39} (see also~\cite{krejcirik-2008-41a})
an elementary one-dimensional $\mathcal{PT}$-symmetric Hamiltonian,
for which the spectrum and metric are explicitly computable.
The simplicity of the Hamiltonian consists in the fact that it acts
as the Hamiltonian of a free particle in a box
and the non-Hermitian interaction is introduced
via the Robin-type boundary conditions~\eqref{Robin} only.
The model was later generalized to a two-dimensional waveguide in~\cite{borisov-2007},
where the variable coupling in the boundary conditions
is responsible for existence of real
(or complex conjugate pairs of) eigenvalues
outside the essential spectrum (see also~\cite{krejcirik-2008-41}).

In this paper we continue the generalization of the models
of~\cite{krejcirik-2006-39,borisov-2007} to curved Riemannian manifolds.
This leads to a new, large class of $\mathcal{PT}$-symmetric Hamiltonians.
Our main goal is to study the effect of curvature
on the spectrum, namely the existence/absence
of non-real eigenvalues and the metric.

The organization of this paper is as follows.

In the following Section~\ref{Sec.def},
we introduce our model in a full generality,
in the sense that the ambient geometry and boundary interaction
of the spectral problem~\eqref{Helmholtz}
are described by quite arbitrary (non-constant and non-symmetric) functions.
Our main strategy to deal with the curved geometry
is based on the usage of Fermi coordinates.

In Section~\ref{Sec.form}, we use the framework of sesquilinear
forms to define the Laplace-Beltrami operator appearing in~\eqref{Helmholtz}
as a (closed) m-sectorial operator in the Hilbert space $L^2(\Omega)$.
We also explicitly determine the operator domain
if the assumptions about the geometry and boundary-coupling functions
are naturally strengthen. Moreover, we find conditions
about the geometry under which the operator becomes $\PT$-symmetric
(and $\mathcal{T}$-self-adjoint).

In order to study the effects of curvature on the spectrum,
in Section~\ref{Sec.constant} we focus on solvable models.
Assuming that the curvature and boundary-coupling functions are constant,
the eigenvalue problem can be reduced to the investigation
of (infinitely many) one-dimensional differential operators
with $\PT$-symmetric boundary conditions.
Here the previous results \cite{krejcirik-2006-39,krejcirik-2008-41a}
and the general theory of boundary conditions
for differential operators \cite{Naimark1967-LDOP1,Naimark1968-LDOP2}
are appropriate and helpful.
In particular, since the $\PT$-symmetric boundary conditions are
(except one case excluded here by assumption) strongly regular ones,
it is possible to show that the studied one-dimensional operators
are `generically' similar to self-adjoint or normal operators.
However, it remains to decide whether this is true for their infinite sum,
\ie, for the original two-dimensional Laplace-Beltrami operator.
To answer this in affirmatively,
it turns out that the $J$-self-adjoint formulation of $\PT$-symmetry
(\cf~the text around~\eqref{J}) is fundamental,
with~$J=\T$ playing the role of antilinear involution.
The properties of the solvable models
are illustrated by a numerical analysis of their spectra.

The paper is concluded by Section~\ref{Sec.end}
where possible directions of the future research are mentioned.

\section{Definition of the model}\label{Sec.def}

We use the quantum-mechanical framework to describe our model.

\subsection{The configuration space}

We assume that the ambient space of a quantum particle
is a connected complete two-dimensional
Riemannian manifold~$\mathcal{A}$ of class~$C^2$
(not necessarily embedded in the Euclidean space~$\R^3$).
Furthermore, we suppose that the Gauss curvature~$K$ of~$\mathcal{A}$ is continuous,
which holds under the additional assumption that~$\mathcal{A}$
is either of class~$C^3$ or it is embedded in~$\R^3$.

On the manifold, we consider a $C^2$-smooth unit-speed embedded curve
$\Gamma:[-l,l]\to\mathcal{A}$, with $l > 0$.
Since~$\Gamma$ is parameterized by arc length,
the derivative $T:=\dot\Gamma$ is the unit tangent vector of~$\Gamma$.
Let~$N$ be the unit normal vector of~$\Gamma$
which is uniquely determined as the $C^1$-smooth mapping
from $[-l,l]$ to the tangent bundle of~$\mathcal{A}$
by requiring that~$N(s)$ is orthogonal to~$T(s)$
and that $\{T(s),N(s)\}$ is positively oriented for all $s\in[-l,l]$
(\cf~\cite[Sec.~7.B]{Spivak-1975}).
We denote by~$\kappa$ the corresponding curvature of~$\Gamma$
defined by the Frenet formula
$
  \nabla_T T = \kappa N
$,
where~$\nabla$ stands for the covariant derivative in~$\mathcal{A}$.
We note that the sign of~$\kappa$ is uniquely determined
up to the re-parametrization $s \mapsto -s$ of the curve~$\Gamma$
and that~$\kappa$ coincides with the geodesic curvature of~$\Gamma$
if~$\mathcal{A}$ is embedded in~$\R^3$.

The feature of our model is that the particle is assumed
to be `confined' to an $a$-tubular neighbourhood~$\Omega$ of~$\Gamma$,
with $a>0$.
$\Omega$~can be visualized as the set of points~$q$ in~$\mathcal{A}$
for which there exists a geodesic of length less than~$a$
from~$q$ meeting~$\Gamma$ orthogonally.
More precisely, we introduce a mapping~$\mathcal{L}$
from the rectangle
\begin{equation}\label{Omega0}
 \Omega_0:=(-l,l)\times(-a,a) \equiv \Jj \times \Jd
\end{equation}
(considered as a subset of the tangent bundle of~$\mathcal{A}$)
to the manifold~$\mathcal{A}$ by setting
\begin{equation}\label{exp}
  \mathcal{L}(x_1,x_2) := \exp_{\Gamma(x_1)}(N(x_1) \,x_2)
  \,,
\end{equation}
where $\exp_q$ is the exponential map of~$\mathcal{A}$ at $q\in\mathcal{A}$,
and define
\begin{equation}\label{image}
  \Omega := \mathcal{L}(\Omega_0)
  \,.
\end{equation}
Note that $x_1\mapsto\mathcal{L}(x_1,x_2)$ traces the curves
parallel to~$\Gamma$ at a fixed distance~$|x_2|$,
while the curve~$x_2\mapsto\mathcal{L}(x_1,x_2)$
is a geodesic orthogonal to~$\Gamma$ for any fixed~$x_1$.
See Figure~\ref{ptstrip}.

\begin{figure}[ht]
    \centering
\subfloat{\includegraphics[width =0.5\textwidth]{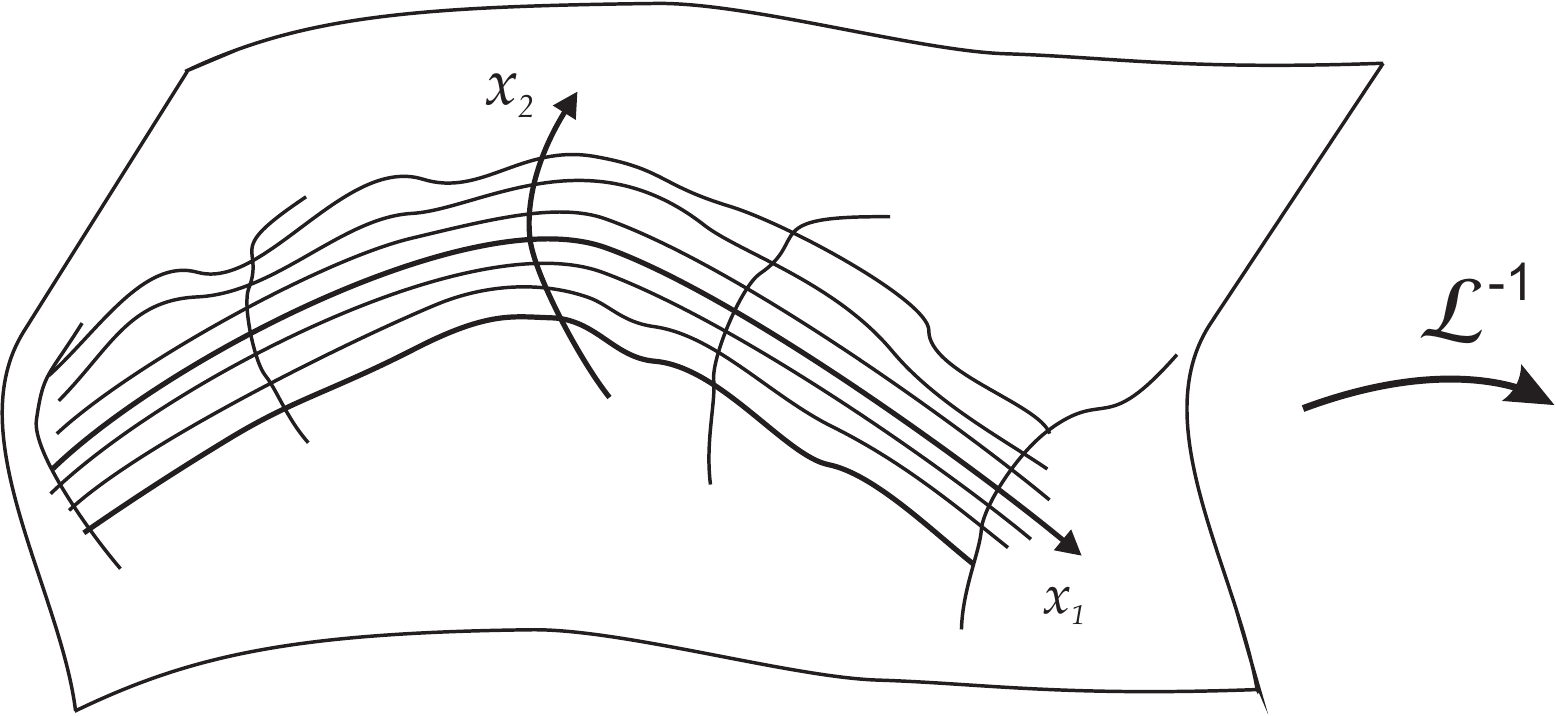}}
\subfloat{\includegraphics[width =0.5\textwidth]{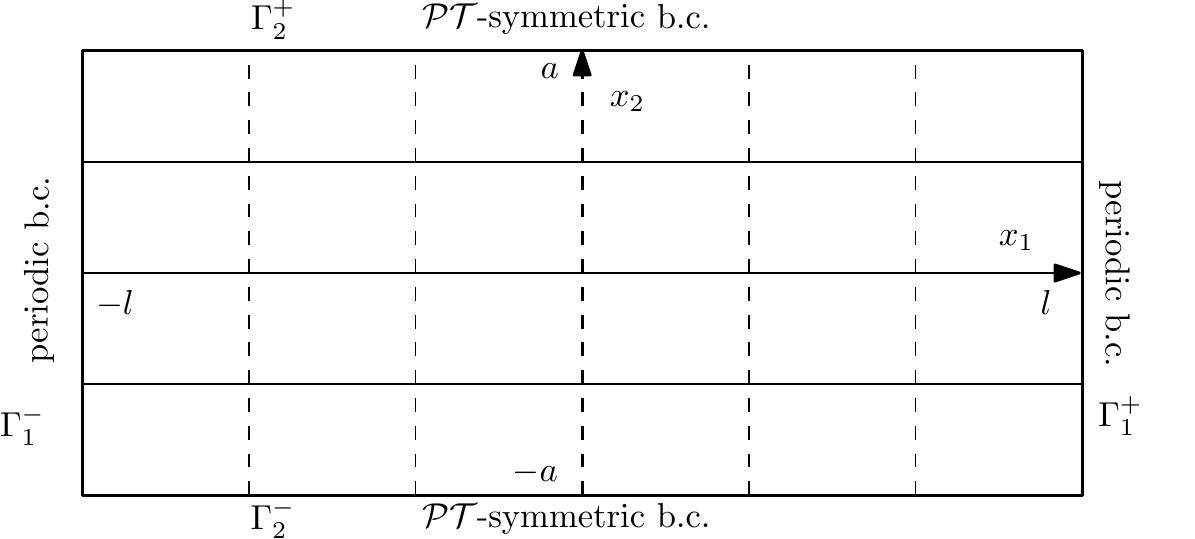}}
    \caption{Strip and boundary conditions}
    \label{ptstrip}
\end{figure}

\subsection{The Fermi coordinates}

Throughout the paper we make the hypothesis that
\begin{equation}\label{Ass.basic}
  \fbox{\ $\mathcal{L} : \Omega_0 \to \Omega$ is a diffeomorphism.}
\end{equation}
Since~$\Gamma$ is compact, \eqref{Ass.basic}
can always be achieved for sufficiently small~$a$
(\cf~\cite[Sec.~3.1]{Gray}).
Consequently, $\mathcal{L}$~induces a Riemannian metric~$G$ on~$\Omega_0$,
and we can identify the tubular neighbourhood $\Omega \subset \mathcal{A}$
with the Riemannian manifold $(\Omega_0,G)$.
In other words, $\Omega$~can be conveniently parameterized
via the (Fermi or geodesic parallel) ``coordinates" $(x_1,x_2)$
determined by~\eqref{exp}.
We refer to~\cite[Sec.~2]{Gray} and~\cite{Hartman_1964}
for the notion and properties of Fermi coordinates.
In particular, it follows by the generalized Gauss lemma
that the metric acquires the diagonal form:
\begin{equation}\label{metric}
  G =
  \begin{pmatrix}
    f^2 & 0 \\
    0   & 1
  \end{pmatrix}
  \,,
\end{equation}
where~$f$ is continuous and has continuous partial derivatives
$\partial_2 f$, $\partial_2^2 f$ satisfying the Jacobi equation
\begin{equation}\label{Jacobi}
  \partial_2^2 f + \,K f = 0
  \qquad\textrm{with}\qquad\left\{
  \begin{aligned}
    f(\cdot,0) &= 1 \,, \\
    \partial_2 f(\cdot,0) &= -\,\kappa \,.
  \end{aligned}
  \right.
\end{equation}
Here~$K$ is considered as a function of the Fermi coordinates~$(x_1,x_2)$.

\subsection{The Hamiltonian}

We identify the Hamiltonian~$H$ of the quantum particle in~$\Omega$
with the Laplace-Beltrami operator~$-\Delta_G$
in the Riemannian manifold $(\Omega_0,G)$,
subject to a special class of non-self-adjoint boundary conditions.

\subsubsection{The action of the Hamiltonian}

Denoting by $G^{ij}$ the coefficients of the inverse metric~$G^{-1}$
and $|G|:=\det(G)$, we have
\begin{equation}\label{LB}
  -\Delta_G
  = -|G|^{-1/2} \partial_i |G|^{1/2} G^{ij} \partial_j
  = - f^{-1} \partial_1 f^{-1} \partial_1
  - f^{-1} \partial_2 f \partial_2
  \,.
\end{equation}
Here the first equality
(in which the Einstein summation convention is assumed)
is a general formula for
the Laplace-Beltrami operator~$-\Delta_G$ expressed in local coordinates
in a Riemannian manifold equipped with a metric~$G$.
The second equality uses the special form~\eqref{metric},
for which $|G|=f^2$ and $G^{-1}=\diag(f^{-2},1)$.
Henceforth we assume that the Jacobian of~\eqref{Ass.basic}
is uniformly positive and bounded, \ie,
\begin{equation}\label{est.f}
  \fbox{\ $f, f^{-1} \in L^\infty(\Omega_0)$\,,}
\end{equation}
so that~$-\Delta_G$ is a uniformly elliptic operator.
Again, \eqref{est.f}~can be achieved for sufficiently small~$a$,
\cf~\eqref{Jacobi}.

\begin{remark}\label{Rem.Ass.basic}
The assumption~\eqref{Ass.basic} is not really essential.
Indeed, abandoning the geometrical interpretation of~$\Omega$
as a tubular neighbourhood embedded in~$\mathcal{A}$,
$(\Omega_0,G)$ with~\eqref{metric}
can be considered as an abstract Riemannian manifold
for which~\eqref{est.f} is the only important hypothesis.
The results of this paper extend automatically
to this more general situation.
\end{remark}

\subsubsection{The boundary conditions}\label{bound.con}

We denote $\partial_{i} \Omega_0=\Gamma_{i}^-\cup\Gamma_{i}^+$
the boundary in $x_{i}$ direction, $i\in \{1,2\}$, see Figure~\ref{ptstrip},
\begin{equation}\label{Gamma12}
\Gamma_1^{\pm}:=\{\pm l \}\times \Jd,
\qquad
\Gamma_2^{\pm}:=\Jj \times \{\pm a \}.
\end{equation}
Boundary conditions imposed respectively on
$\partial_{1} \Omega_0$ and $\partial_{2} \Omega_0$
are of different nature.
Having in mind the situation when~$\Gamma$ is a closed curve,
standard periodic boundary conditions are imposed on $\partial_{1} \Omega_0$, \ie,
\eq{\psi(-l,x_2)=\psi(l,x_2), \qquad
\partial_1 \psi(-l,x_2)=\partial_1 \psi(l,x_2)\label{PerBC},}
for a.e.\ $x_2 \in J_2$,
where~$\psi$ denotes any function from the domain of~$H$. We assume also the symmetry condition on the geometry
\begin{equation}\label{per.f}
  \fbox{\ $\forall(x_1,x_2) \in \Omega_0: \quad
  f(-l,x_2)=f(l,x_2),$}
\end{equation}
in order to have indeed periodic system in $x_1$ direction.

On the other hand, non-self-adjoint $\PT$-symmetric boundary conditions
are imposed on~$\partial_2\Omega_0$. A general form of~\mbox{$\PT$-symmetric}
boundary conditions was presented in~\cite{albeverio-2002-59}; further study and more general approach
to extensions can be found in \cite{Albeverio2005-38,Albeverio-2009-42}.
Denoting
\eq{
  \Psi:=
  \begin{pmatrix}
    \psi \\ \partial_2 \psi
  \end{pmatrix}
  ,
  \label{Psi} }
there are two types of the conditions, separated and connected.
\begin{subequations}\label{BC}
\renewcommand{\theequation}{\theparentequation$\,$\Roman{equation}}
\noindent
\begin{enumerate}
\item[\II.] separated:
\begin{equation}\label{SepBC}
  \begin{pmatrix}
    \pm\beta(x_1)+\ii \alpha(x_1) & 0 \\
    0 & 1
  \end{pmatrix}
  \Psi(x_1,\pm a)=0
\end{equation}
for a.e.\ $x_1 \in J_1$,
with $\alpha, \beta$ being real-valued functions.

\item[\I.] connected:
\begin{align}
\Psi(x_1,a) = B(x_1) \Psi(x_1,-a), \label{ConBC}
\end{align}
for a.e.\ $x_1 \in J_1$,
where the matrix $B$ has the form
\begin{align}
B(x_1):=\matice{\sqrt{1+b(x_1)c(x_1)}\,e^{\ii\phi(x_1)} & b(x_1) \\ c(x_1) &
 \sqrt{1+b(x_1)c(x_1)}\,e^{-\ii\phi(x_1)}} \nonumber
\end{align}
with $b,c,\phi$ being real-valued
functions satisfying $b > 0$, $c \geq -1/b$, $\phi\in[-\pi,\pi)$.
\end{enumerate}
\end{subequations}
We specify assumptions on smoothness, boundedness and periodicity
of the functions entering the boundary conditions later.
The index $\iota\in \{ {\II, \I }\}$ will be used throughout
the paper to distinguish between the two types of boundary conditions.

The boundary conditions (\ref{BC}$\iota$) are $\PT$-symmetric in following sense:
if a function~$\psi$ satisfies (\ref{BC}$\iota$),
then the function $\PT \psi$ satisfies (\ref{BC}$\iota$) as well.
Here and in the sequel
the symmetry operators~$\mathcal{P}$ and~$\mathcal{T}$
are defined as follows:
\begin{equation}\label{def.parity}
  (\mathcal{P}\psi)(x_1,x_2):=\psi(x_1,-x_2) \,,
  \qquad
  \mathcal{T}\psi:=\overline{\psi} \,.
\end{equation}

It is important to stress that
the $\PT$-symmetric boundary conditions~(\ref{BC}$\iota$)
do not automatically imply that the operator $H$ is $\PT$-symmetric,
unless additional assumption on the geometry of $\Omega_0$ is imposed.
The assumption, ensuring the $\PT$-symmetry of~$H$
(\cf~Proposition~\ref{PTpseudo} below), reads
\begin{equation}\label{sym.f}
  \fbox{\ $\forall(x_1,x_2) \in \Omega_0: \quad
  f(x_1,x_2)=f(x_1,-x_2)$.}
\end{equation}
In view of~\eqref{Jacobi},
a necessary condition to satisfy
the second requirement in~\eqref{sym.f}
is that the curve~$\Gamma$ is a geodesic, \ie~$\kappa=0$.

\subsubsection{The functional spaces}

The space in which we give a precise meaning of~$H$
is the Hilbert space $L^2(\Omega_0,G)$, \ie,
the class of all measurable functions~$\varphi,\psi$ on~$\Omega_0$
for which the norm $\|\cdot\|_G$
induced by the inner product
\begin{equation}\label{G.ip}
  (\varphi,\psi)_G
  := \int_{\Omega_0} \overline{\varphi(x)}\,\psi(x) \ |G(x)|^{1/2} \,\dd x
\end{equation}
is finite.
Assuming~\eqref{est.f}, the norm~$\|\cdot\|_G$ in $L^2(\Omega_0,G)$
is equivalent to the usual one $\|\cdot\|$ in $L^2(\Omega_0)$.
Moreover, the `energy space'
\begin{equation}\label{W12}
 W^{1,2}(\Omega_0,G) := \big\{
  \psi \in L^2(\Omega_0,G) \ \big| \
  |\nabla_G\psi|_G^2 := \overline{\partial_i\psi} G^{ij} \partial_j\psi
  \in L^2(\Omega_0,G)
  \big\}
\end{equation}
can be as a vector space identified with the usual
Sobolev space $W^{1,2}(\Omega_0)$.

However, this equivalence does not hold for~$W^{2,2}$-spaces,
unless one assumes extra regularity condition on~$f$:
\begin{equation}\label{reg.f}
 \fbox{\ $ \forall x_2 \in \Jd: \quad
  f(\cdot,x_2), \ f^{-1}(\cdot,x_2) \in W^{1,\infty}\big(\Jj\big)$.}
\end{equation}
Under this assumption,
which is actually equivalent to the Lipschitz continuity of $f, f^{-1}$
in the first argument (\cf~\cite[Chapt.~5.8.2.b., Thm.~4]{Evans1998}),
one can indeed identify the~$W^{2,2}$-Sobolev space
on the Riemannian manifold $(\Omega_0,G)$
(precisely defined, \eg, in~\cite[Sec.~2.2]{Hebey})
with the usual Sobolev space $W^{2,2}(\Omega_0)$.

\subsubsection{The schism: two definitions of the Hamiltonian}

Although the above equivalence of the~$W^{2,2}$-spaces under
the condition~\eqref{reg.f} is not explicitly used in this paper,
it is in fact hidden in our proof that the particle Hamiltonian
on $L^2(\Omega_0,G)$ naturally identified with
\begin{subequations}\label{H.operator}
\begin{align}
H_{\iota}\psi&:=-\Delta_G \psi, \label{H.action} \\
\psi\in\Dom(H_{\iota})
&:= \big\{ \psi \in W^{2,2}(\Omega_0)\
\big| \ \psi \ {\rm satisfies \ } \eqref{PerBC} \
{\rm and\ } (\ref{BC}\iota) \big\}. \label{DomH}
\end{align}
\end{subequations}
is well defined (\cf~Theorem~\ref{OpForm}).
As mentioned in Section~\ref{bound.con},
we use the notation~$H_{\iota}$, with $\iota\in\{\rm \II,\I \}$,
to distinguish between separated~\eqref{SepBC}
and connected~\eqref{ConBC} boundary conditions.

To avoid the additional assumption~\eqref{reg.f},
one can always interpret~\eqref{LB} in the weak sense of quadratic forms,
which gives rise to an alternative Hamiltonian~$\tilde{H}_\iota$
(\cf~Corollary~\ref{Corol.m}).
This is the content of the following section,
where we also show that $H_{\iota}=\tilde{H}_\iota$
provided that~\eqref{per.f}, \eqref{reg.f}, and some analogous hypotheses
about the boundary-coupling functions hold.

\section{General properties}\label{Sec.form}
The main goal of this section is to show that
the Hamiltonian~$H_\iota$ introduced in~\eqref{H.operator}
is a well defined operator, in particular that it is closed.
This will be done by proving that $H_{\iota}=\tilde{H}_\iota$,
where~$\tilde{H}_\iota$ is the alternative operator defined
through a closed quadratic form.
Finally, we establish some general spectral properties
of the Hamiltonians.

\subsection{The Hamiltonian defined via quadratic form}
Taking the sesquilinear form $(\varphi,H_\iota\psi)_G$
with $\varphi, \psi \in \Dom(H_{\iota})$
and integrating by parts, one arrives to a sesquilinear form,
which is well defined for a wider class of functions $\varphi, \psi$,
not necessarily possessing second (weak) derivatives.
The~function $f$ is assumed to satisfy \eqref{est.f} and \eqref{per.f},
however the extra regularity condition~\eqref{reg.f} is \emph{not} required.

More precisely, exclusively under assumption \eqref{est.f} for a moment,
we define the sesquilinear form
\begin{align*}
h_{\iota}(\varphi, \psi)&:=h^1(\varphi, \psi)+h_\iota^2(\varphi, \psi),
\\
\varphi, \psi \in \Dom(h_{\iota})
&:= W^{1,2}_{\rm per}(\Omega_0)
\equiv \left\{ \psi\in W^{1,2}(\Omega_0)\ \big| \ \psi(-l,x_2)=\psi(l,x_2) \right\},
\end{align*}
where, for any $\varphi, \psi \in \Dom(h_{\iota})$,
\begin{eqnarray*}
h^1(\varphi,\psi)&:=&\big(f^{-1}\partial_1 \varphi,f^{-1} \partial_1 \psi\big)_G +
                            \big(\partial_2 \varphi,\partial_2 \psi\big)_G \,,  \\
h_{\II}^2(\varphi,\psi)&:=&  \big( \varphi,( \beta+\ii \alpha) \psi\big)_G^{\Gamma_2^+}
+ \big( \varphi,(\beta-\ii \alpha) \psi\big)_G^{\Gamma_2^-},        \\
h_{\I}^2(\varphi,\psi)&:=& \big( \varphi, B_{12}^{-1} \P \psi\big)_G^{\Gamma_2^+}
+ \big( \varphi,B_{12}^{-1}
\P \psi\big)_G^{\Gamma_2^-}  \\
&&  - \big( \varphi, B_{22}B_{12}^{-1} \psi\big)_G^{\Gamma_2^+}
- \big( \varphi, B_{11}B_{12}^{-1} \psi\big)_G^{\Gamma_2^-} .
\end{eqnarray*}
Here~$B_{ij}$ denotes the elements of the matrix~$B$ defined in~\eqref{BC},
the operator~$\P$ is introduced in~\eqref{def.parity} and
\begin{equation*}
(\varphi,\psi)_G^{\Gamma_2^{\pm}}
:= \int_{-l}^{l}{\overline{\varphi(x_1,\pm a)}\,\psi(x_1,\pm a) \, f(x_1,\pm a) \, \dd x_1 }.
\end{equation*}
All the boundary terms
should be understood in sense of traces~\cite{Adams1975}.
\begin{lemma} \label{h1.prop}
Let $f$ satisfy \eqref{est.f}. The forms $h_{\iota}, h^1$ are densely defined. $h^1$ is a symmetric, positive, closed form (associated to the self-adjoint Laplace-Beltrami operator in $L^2(\Omega_0,G)$ with periodic boundary conditions on $\partial_1 \Omega_0$ and Neumann boundary conditions on $\partial_2 \Omega_0$).
\end{lemma}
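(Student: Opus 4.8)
The plan is to verify the four asserted properties in turn, treating closedness as the substantive point and the others as algebraic or standard. Since $\Dom(h_\iota)=\Dom(h^1)=W^{1,2}_{\rm per}(\Omega_0)$ contains $C_0^\infty(\Omega_0)$ (for which the periodicity required in \eqref{PerBC} is trivially met), and the latter is dense in $L^2(\Omega_0)$, density in $L^2(\Omega_0,G)$ follows at once from the equivalence of $\|\cdot\|_G$ and $\|\cdot\|$ guaranteed by \eqref{est.f}; this settles that both forms are densely defined. Symmetry of $h^1$ is immediate from its definition: as $f$ is real-valued and $\overline{(u,v)_G}=(v,u)_G$, one obtains $\overline{h^1(\psi,\varphi)}=h^1(\varphi,\psi)$. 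Positivity is equally transparent, since $h^1(\psi,\psi)=\|f^{-1}\partial_1\psi\|_G^2+\|\partial_2\psi\|_G^2\ge 0$.

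The heart of the matter is closedness, which I would obtain by showing that the form norm $\|\psi\|_+^2:=h^1(\psi,\psi)+\|\psi\|_G^2$ is equivalent to the ordinary Sobolev norm on $W^{1,2}(\Omega_0)$. Writing the weighted integrals explicitly, $\|f^{-1}\partial_1\psi\|_G^2=\int_{\Omega_0}|\partial_1\psi|^2 f^{-1}\,\dd x$, $\|\partial_2\psi\|_G^2=\int_{\Omega_0}|\partial_2\psi|^2 f\,\dd x$ and $\|\psi\|_G^2=\int_{\Omega_0}|\psi|^2 f\,\dd x$, and since \eqref{est.f} bounds $f$ and $f^{-1}$ from above and away from zero, each is comparable to its unweighted counterpart. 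Hence $\|\cdot\|_+$ and $\|\cdot\|_{W^{1,2}(\Omega_0)}$ are equivalent, so it remains only to check that $W^{1,2}_{\rm per}(\Omega_0)$ is complete in the latter. This is where I would invoke continuity of the trace map $W^{1,2}(\Omega_0)\to L^2(\Gamma_1^\pm)$: the periodicity constraint $\psi(-l,\cdot)=\psi(l,\cdot)$ realizes $W^{1,2}_{\rm per}(\Omega_0)$ as the kernel of a bounded operator, hence a closed subspace of $W^{1,2}(\Omega_0)$ and thus complete. Consequently $\big(\Dom(h^1),\|\cdot\|_+\big)$ is a Hilbert space and $h^1$ is closed.

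It then remains to identify the associated self-adjoint operator. By the first representation theorem, the closed, densely defined, symmetric and nonnegative form $h^1$ determines a unique nonnegative self-adjoint operator $T$ satisfying $(\varphi,T\psi)_G=h^1(\varphi,\psi)$ for $\psi\in\Dom(T)$ and all $\varphi\in\Dom(h^1)$. Integrating by parts in both variables, the bulk term reproduces $-f^{-1}\partial_1 f^{-1}\partial_1\psi-f^{-1}\partial_2 f\partial_2\psi=-\Delta_G\psi$ in accordance with \eqref{LB}; the boundary integrals over $\Gamma_2^\pm$, tested against arbitrary traces of $\varphi$, force the Neumann condition $\partial_2\psi(\cdot,\pm a)=0$, while those over $\Gamma_1^\pm$, combined with the periodicity of the form domain and the symmetry \eqref{per.f} of $f$, yield the matching of $\psi$ and $\partial_1\psi$ at $x_1=\pm l$, i.e.\ periodic boundary conditions. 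I expect the closedness step to be the crux, its only subtlety being the trace-continuity argument guaranteeing that the periodicity constraint survives $W^{1,2}$-limits; the operator identification is thereafter a routine application of the representation theorem together with elliptic regularity to pin down the operator domain.
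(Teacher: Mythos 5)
Your proof is correct and follows the standard route that the paper itself merely delegates to a citation (the published proof consists of the single remark that density is obvious and the properties of $h^1$ are well known, with a pointer to Davies, Sect.~7.2); your argument --- density via $C_0^\infty$, equivalence of the form norm with the $W^{1,2}$-norm under \eqref{est.f}, closedness of $W^{1,2}_{\rm per}$ as the kernel of the trace difference, and identification of the operator by the representation theorem --- is exactly the content of that reference, spelled out. The only caveat worth noting is that the periodicity of $\partial_1\psi$ (rather than of $f^{-1}\partial_1\psi$) as the natural boundary condition on $\Gamma_1^\pm$ does require \eqref{per.f}, which you correctly invoke even though the lemma's formal hypothesis lists only \eqref{est.f}.
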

\begin{proof}
The density of the domains is obvious, properties of $h^1$ are well known,
see the detailed discussion on a similar problem in~\cite[Sect.~7.2]{Davies1995}.
\end{proof}
Although the forms $h_{\iota}$ are not symmetric, we show that $h^2_{\iota}$ can be understood
as small perturbations of $h^1$.
\begin{lemma}\label{RelBound}
Let $b,1/b,c,\alpha,\beta \in L^{\infty}(\Jj)$ and let $f$ satisfy \eqref{est.f}. Then $h_{\iota}^2$ are relatively bounded
with respect to $h^1$ with
\begin{equation}\label{rel.bound}
\begin{aligned}
|h^2_{\iota}[\psi]|
&\leq \varepsilon \, h^1[\psi]+ \varepsilon^{-1} C \|\psi\|_G^2,
%
%
\end{aligned}
\end{equation}
for all $\psi \in W^{1,2}_{\rm per}(\Omega_0)$ and any positive number $\varepsilon$.
The constant~$C$ depends on the function $f$, dimensions $a,l$, and boundary-coupling functions $\alpha,\beta$ or $b,c,\phi.$
\end{lemma}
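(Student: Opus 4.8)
The plan is to bound everything by a one-dimensional trace inequality in the transverse variable~$x_2$, after first observing that all coefficients entering $h_\iota^2$ are bounded. In the separated case these are $\beta\pm\ii\alpha$, bounded by hypothesis; in the connected case they are $B_{12}^{-1}=b^{-1}$, $B_{11}B_{12}^{-1}=\sqrt{1+bc}\,e^{\ii\phi}b^{-1}$ and $B_{22}B_{12}^{-1}=\sqrt{1+bc}\,e^{-\ii\phi}b^{-1}$, all in $L^\infty(\Jj)$ because $b,1/b,c\in L^\infty(\Jj)$ make $\sqrt{1+bc}$ bounded while the factors $e^{\pm\ii\phi}$ have modulus one. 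Together with $f\in L^\infty$ on $\partial_2\Omega_0$, the triangle inequality (and, in the connected case, a Cauchy--Schwarz step explained below) then yields
\begin{equation*}
  |h_\iota^2[\psi]| \le C_0 \int_{-l}^{l}\big(|\psi(x_1,a)|^2+|\psi(x_1,-a)|^2\big)\,\dd x_1,
\end{equation*}
where $C_0$ collects the relevant suprema. Thus the whole estimate reduces to controlling the two boundary traces $\int_{-l}^{l}|\psi(x_1,\pm a)|^2\,\dd x_1$ by $h^1[\psi]$ and $\|\psi\|_G^2$.

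For this I would use the elementary bound: for $u\in W^{1,2}(\Jd)$ and any $\delta>0$,
\begin{equation*}
  |u(\pm a)|^2 \le \delta \int_{-a}^{a}|u'|^2\,\dd t + \Big(\delta^{-1}+\tfrac{1}{2a}\Big)\int_{-a}^{a}|u|^2\,\dd t,
\end{equation*}
obtained by writing $|u(\pm a)|^2=|u(t)|^2+\int_t^{\pm a}\tfrac{\dd}{\dd s}|u(s)|^2\,\dd s$, averaging over $t\in\Jd$, and applying Young's inequality to $2|u||u'|$. For $0<\delta\le a$ one has $\delta^{-1}+\tfrac{1}{2a}\le 2\delta^{-1}$. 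Applying this with $u=\psi(x_1,\cdot)$ for a.e.\ $x_1$ and integrating over $x_1\in\Jj$ gives
\begin{equation*}
  \int_{-l}^{l}|\psi(x_1,\pm a)|^2\,\dd x_1 \le \delta \int_{\Omega_0}|\partial_2\psi|^2\,\dd x + 2\delta^{-1}\int_{\Omega_0}|\psi|^2\,\dd x.
\end{equation*}

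It then remains to convert the flat integrals into the weighted quantities. Since $|G|^{1/2}=f$ with $f,f^{-1}\in L^\infty(\Omega_0)$ by~\eqref{est.f}, we have $\int_{\Omega_0}|\psi|^2\,\dd x\le\|f^{-1}\|_\infty\|\psi\|_G^2$ and, using $h^1[\psi]=\|f^{-1}\partial_1\psi\|_G^2+\|\partial_2\psi\|_G^2\ge\|\partial_2\psi\|_G^2$,
\begin{equation*}
  \int_{\Omega_0}|\partial_2\psi|^2\,\dd x \le \|f^{-1}\|_\infty \int_{\Omega_0}|\partial_2\psi|^2 f\,\dd x = \|f^{-1}\|_\infty\,\|\partial_2\psi\|_G^2 \le \|f^{-1}\|_\infty\, h^1[\psi].
\end{equation*}
Chaining the three displays and choosing $\delta$ proportional to $\varepsilon$ (within the admissible range $\delta\le a$) absorbs the transverse-gradient contribution into $\varepsilon\,h^1[\psi]$ and leaves a remainder of the form $\varepsilon^{-1}C\|\psi\|_G^2$, with $C$ depending only on $a,l$, $\|f^{-1}\|_\infty$ and the $L^\infty$-norms of the coupling functions, as claimed.

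The one case requiring a little care --- and the step I would write out most explicitly --- is the connected case, where the parity operator $\P$ sends the trace on $\Gamma_2^+$ to that on $\Gamma_2^-$ and vice versa, producing cross terms such as $\int_{-l}^{l}\overline{\psi(x_1,a)}\,\psi(x_1,-a)\,B_{12}^{-1}f\,\dd x_1$. I would estimate these by Cauchy--Schwarz in $x_1$ together with $2|\psi(x_1,a)||\psi(x_1,-a)|\le|\psi(x_1,a)|^2+|\psi(x_1,-a)|^2$, which returns them to the diagonal traces already handled. Apart from this, the argument is uniform in $\iota\in\{\II,\I\}$, so the only real work lies in the trace inequality with its explicit $\delta,\delta^{-1}$ dependence; this is what secures the stated $\varepsilon$/$\varepsilon^{-1}$ form and hence the relative bound zero of $h_\iota^2$ with respect to $h^1$.
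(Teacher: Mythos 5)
Your proposal is correct and follows essentially the same route as the paper: bound the boundary coefficients by their $L^\infty$-norms, reduce everything to the transverse trace estimate (the paper's inequality \eqref{W12.est}, which you prove explicitly by the averaging-plus-Young argument), and then use \eqref{est.f} to pass between the flat and weighted norms so that $\|\partial_2\psi\|^2$ is absorbed into $h^1[\psi]$. The only differences are that you supply details the paper leaves implicit, namely the proof of the trace inequality itself and the Cauchy--Schwarz treatment of the cross terms in the connected case, which the paper dismisses as ``analogous.''
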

\begin{proof}
The proof is based on the estimate
\eq{\int_{-l}^{l}\left|\psi(x_1,\pm a)\right|^2\dd x_1
\leq \epsilon \, \| \nabla \psi \|^2 + \epsilon^{-1} \tilde{C} \, \|\psi\|^2,  \label{W12.est}}
where~$\epsilon$ is an arbitrary positive constant
and~$\tilde{C}$ is a positive constant depending only on~$a$ and~$l$.
We give the proof for $h_{\II}^2$ only because the other case is analogous.
The~assumptions on $\alpha,\beta$ and property \eqref{est.f} allow us to estimate
the functions $|\alpha|, |\beta|$ and $f$ by their $L^{\infty}$-norms.
Consequent application of \eqref{W12.est} therefore yields
$$
  \big|h^2_{\II}[\psi]\big|
  \leq \epsilon \,   \|f\|_{L^{\infty}(\Omega_0)}  \| \nabla \psi \|^2
  + \epsilon^{-1} 2 \, \tilde{C} \,
  \big(\|\alpha\|_{L^{\infty}(\Jj)}+\|\beta\|_{L^{\infty}(\Jj)}\big) \,
  \|f\|_{L^{\infty}(\Omega_0)} \, \|\psi\|^2.
$$
In order to replace the term $\| \nabla \psi \|^2$ by $h^1[\psi]$,
the regularity assumption on geometry \eqref{est.f} is used. Once we consider the equivalence of the norms $\|\cdot\|$ and $\|\cdot\|_G$ and the arbitrariness of $\epsilon$, we obtain the estimate \eqref{rel.bound}.
\end{proof}
\begin{corollary}\label{Corol.m}
Let $b,1/b,c,\alpha,\beta \in L^{\infty}(\Jj)$ and let $f$ satisfy \eqref{est.f}.
Then there exist the unique \mbox{m-sectorial}
operators $\tilde{H}_{\iota}$ in $L^2(\Omega_0,G)$ such that
\begin{equation}
h_{\iota}(\varphi,\psi) =: (\varphi,\tilde{H}_{\iota}\psi)_G
\end{equation}
for all $\psi \in \Dom(\tilde{H}_{\iota})$ and $\varphi \in \Dom(h_{\iota})$, where
\begin{eqnarray} \label{DomTilde}
\Dom(\tilde{H}_{\iota}):=\big\{ \psi\in W^{1,2}_{\rm per}(\Omega_0) & \big|& \
\exists F \in L^2(\Omega_0,G), \ \forall \varphi\in W^{1,2}_{\rm per}(\Omega_0), \nonumber \\
 && h_{\iota}(\varphi,\psi)=(\varphi,F)_{G}          \big\}.
\end{eqnarray}
\end{corollary}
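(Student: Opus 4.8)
The statement to establish is Corollary~\ref{Corol.m}: the existence and uniqueness of m-sectorial operators $\tilde{H}_\iota$ associated with the forms $h_\iota$. The natural strategy is to invoke the first representation theorem for sectorial forms (Kato, or the standard reference \cite{Davies1995}), which produces a unique m-sectorial operator from any densely defined, sectorial, closed sesquilinear form. Thus the entire task reduces to verifying that each $h_\iota$ satisfies these three hypotheses, after which the operator and its domain are given abstractly exactly as in~\eqref{DomTilde}. I would therefore organize the proof around checking \emph{dense definedness}, \emph{sectoriality}, and \emph{closedness}, using the two preceding lemmas as the main engine.

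Density is already provided by Lemma~\ref{h1.prop}, so I would dispose of it in one sentence. For sectoriality and closedness the key input is the relative bound~\eqref{rel.bound} from Lemma~\ref{RelBound}, namely $|h_\iota^2[\psi]| \leq \varepsilon\, h^1[\psi] + \varepsilon^{-1} C \|\psi\|_G^2$ with $h^1$ symmetric, positive and closed (Lemma~\ref{h1.prop}). The plan is to treat $h_\iota = h^1 + h_\iota^2$ as a perturbation of the nonnegative closed form $h^1$ by the form $h_\iota^2$, which the estimate shows is $h^1$-bounded with relative bound that can be taken strictly less than $1$ (choose $\varepsilon < 1$). I would then cite the standard perturbation result for sesquilinear forms: a relatively bounded perturbation with relative bound $<1$ of a sectorial closed form is again sectorial and closed. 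This simultaneously yields sectoriality (the numerical range lies in a sector, since $|\Im h_\iota[\psi]| = |\Im h_\iota^2[\psi]| \leq \varepsilon\, h^1[\psi] + \varepsilon^{-1}C\|\psi\|_G^2$ controls the imaginary part by a multiple of $\Re h_\iota[\psi]$ plus a shift) and closedness (the form norm of $h_\iota$ is equivalent to that of $h^1$, which is complete). The m-sectoriality of $\tilde{H}_\iota$ then follows automatically from the representation theorem.

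Concretely, I would proceed as follows. First, note $h_\iota$ is densely defined by Lemma~\ref{h1.prop}. Second, fix $\varepsilon \in (0,1)$ in~\eqref{rel.bound} to obtain, for all $\psi \in W^{1,2}_{\rm per}(\Omega_0)$,
\begin{equation*}
  \Re h_\iota[\psi] \geq (1-\varepsilon)\, h^1[\psi] - \varepsilon^{-1} C \|\psi\|_G^2,
  \qquad
  |\Im h_\iota[\psi]| \leq \varepsilon\, h^1[\psi] + \varepsilon^{-1} C \|\psi\|_G^2,
\end{equation*}
and combine these two inequalities to bound $|\Im h_\iota[\psi]|$ by a constant multiple of $\Re h_\iota[\psi] + \varepsilon^{-1}C(1+\cdots)\|\psi\|_G^2$; this is precisely the sectoriality estimate, possibly after translating the form by a multiple of the identity. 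Third, the relative bound $<1$ makes the $h_\iota$-form norm equivalent to the $h^1$-form norm, so a Cauchy sequence in the $h_\iota$-norm converges in $W^{1,2}_{\rm per}(\Omega_0)$ and the limit retains the defining property, giving closedness. Fourth, apply the first representation theorem to obtain the unique m-sectorial $\tilde{H}_\iota$ with the stated domain~\eqref{DomTilde}.

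The \textbf{main obstacle} is not any single hard estimate — everything hinges on Lemma~\ref{RelBound}, which is already granted — but rather on correctly assembling the perturbation argument so that the relative bound is genuinely strictly less than one and the translation needed to center the sector is handled cleanly. In particular, care is needed because $h^1$ appears on the right-hand side of the bound rather than $\Re h_\iota$ itself; one must re-express the estimate in terms of $\Re h_\iota$ via the lower bound above before concluding sectoriality. Once this bookkeeping is done, the result is an immediate application of the representation theorem, and I would keep the proof short by citing \cite{Davies1995} (or Kato) for the perturbation and representation theorems rather than reproving them.
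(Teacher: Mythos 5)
Your proposal is correct and follows essentially the same route as the paper: the authors likewise combine Lemma~\ref{h1.prop} (density, and the fact that $h^1$ is symmetric, positive and closed) with the relative bound of Lemma~\ref{RelBound}, invoke the perturbation theorem for sectorial forms \cite[Thm.~VI.3.4]{Kato} to conclude that $h_\iota$ is sectorial and closed, and then apply the first representation theorem \cite[Thm.~VI.2.1]{Kato}. Your write-up merely makes explicit the bookkeeping (choice of $\varepsilon<1$, re-expressing the bound in terms of $\Re h_\iota$) that the paper leaves to the cited references.
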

\begin{proof}
With regard to Lemmata \ref{h1.prop}, \ref{RelBound},
and the perturbation result \cite[Thm.~VI.3.4]{Kato},
the statement follows by
the first representation theorem \cite[Thm.~VI.2.1]{Kato}.
\end{proof}

\subsection{The equivalence of the two definitions}

Under stronger assumptions on smoothness of functions appearing in boundary conditions (\ref{BC}$\iota$)
and on the function $f$ entering the metric tensor $G$, we show that operators $\tilde{H}_{\iota}$ associated to
the forms $h_{\iota}$ are equal to the Hamiltonians $H_{\iota}$ defined in \eqref{H.operator}.
To prove this, we need the following lemma.
Let us introduce a space of Lipschitz continuous functions over~$[-l,l]$
satisfying periodic boundary conditions:
$$
W^{1,\infty}_{\rm per}\big(\Jj\big)
:= \big\{ \psi \in W^{1,\infty}\big(\Jj\big) \, \big|\ \psi(-l)=\psi(l) \big\}.
$$

\begin{lemma}\label{Dom.Hcs}
Let $\alpha, \beta, b, 1/b, c,\phi \in W^{1,\infty}_{\rm per}\big(\Jj\big)$ and
let $f$ satisfy \eqref{est.f}, \eqref{per.f}, and \eqref{reg.f}. Then for every $F\in L^2(\Omega_0,G)$,
a solution $\psi$ to the problem
\begin{equation}\label{genEq}
\forall \varphi \in W^{1,2}_{\rm per}(\Omega_0) \,, \qquad
h_{\iota}(\varphi,\psi)=(\varphi,F)_{G} \,,
\end{equation}
belongs to $\Dom(H_{\iota})$ introduced in~\eqref{DomH}.
\end{lemma}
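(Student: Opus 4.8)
The goal is an elliptic regularity plus boundary-condition-recovery argument. Given $\psi\in W^{1,2}_{\rm per}(\Omega_0)$ solving \eqref{genEq} for some $F\in L^2(\Omega_0,G)$, I must show two things: (i) $\psi$ actually has second weak derivatives, i.e.\ $\psi\in W^{2,2}(\Omega_0)$, and (ii) $\psi$ satisfies the boundary conditions \eqref{PerBC} and (\ref{BC}$\iota$) in the trace sense. First I would observe that for test functions $\varphi\in C_c^\infty(\Omega_0)$ (compactly supported in the interior), all boundary terms in $h_\iota$ vanish, so \eqref{genEq} reduces to the statement that $-\Delta_G\psi = F$ holds in the distributional sense on the interior. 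Since $-\Delta_G = -f^{-1}\partial_1 f^{-1}\partial_1 - f^{-1}\partial_2 f\partial_2$ is uniformly elliptic by \eqref{est.f}, with Lipschitz coefficients guaranteed precisely by the regularity assumptions \eqref{reg.f} on $f$ (and \eqref{Jacobi} controlling $\partial_2 f$), interior elliptic regularity yields $\psi\in W^{2,2}_{\rm loc}(\Omega_0)$.

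\textbf{Up to the boundary.} The more delicate part is regularity up to $\partial_2\Omega_0$ and the recovery of the boundary conditions. For the periodic direction $\partial_1\Omega_0$, I would use the periodicity \eqref{per.f} of $f$ together with the periodic structure of the form domain to treat $x_1$ as a variable on a circle, so that no genuine boundary appears there; the periodic conditions \eqref{PerBC} then come out directly from membership in $W^{1,2}_{\rm per}$ together with the recovered $W^{2,2}$ regularity. For the transverse direction, I would localize near $\Gamma_2^\pm$, flatten the boundary (here trivial, since $\Omega_0$ is already a rectangle), and use the difference-quotient method of Nirenberg in the tangential variable $x_1$ to bound $\partial_1\psi$ in $W^{1,2}$; combined with the equation $-\Delta_G\psi=F$, which lets me solve algebraically for $\partial_2^2\psi$ in terms of $F$, $\partial_1^2\psi$, $\partial_1\psi$, $\partial_2\psi$ and the coefficients, this upgrades $\psi$ to $W^{2,2}$ up to the boundary. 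The smoothness hypotheses $\alpha,\beta,b,1/b,c,\phi\in W^{1,\infty}_{\rm per}(\Jj)$ are exactly what is needed so that the boundary-coupling coefficients do not destroy this tangential regularity.

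\textbf{Recovering the boundary conditions.} Once $\psi\in W^{2,2}(\Omega_0)$, its trace $\psi|_{\Gamma_2^\pm}$ and normal-derivative trace $\partial_2\psi|_{\Gamma_2^\pm}$ are well defined in $L^2(\Jj)$. I would now take general $\varphi\in W^{1,2}_{\rm per}(\Omega_0)$ in \eqref{genEq}, integrate the interior identity $-\Delta_G\psi=F$ back by parts (which is legitimate now that $\psi\in W^{2,2}$), and compare with the explicit boundary terms $h_\iota^2(\varphi,\psi)$ written out in the definition of the form. All the interior contributions cancel against $(\varphi,F)_G$, leaving an identity involving only boundary integrals over $\Gamma_2^\pm$ that must vanish for \emph{every} trace $\varphi|_{\Gamma_2^\pm}$. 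Since traces of $W^{1,2}_{\rm per}$ functions are dense in $L^2$ on each boundary component, the integrands must vanish a.e., which is precisely the separated conditions \eqref{SepBC} (for $\iota=\II$) or the connected conditions \eqref{ConBC} (for $\iota=\I$), once I unwind the matrix entries $B_{ij}$ and the parity operator $\P$ appearing in $h_\iota^2$.

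\textbf{The main obstacle.} I expect the genuine difficulty to be the boundary regularity step — showing $\psi\in W^{2,2}$ up to $\Gamma_2^\pm$ rather than merely in the interior. The separated conditions are of Robin/Neumann type and are handled by a standard (if careful) difference-quotient argument. The connected conditions \eqref{ConBC}, however, couple the two boundary components $\Gamma_2^+$ and $\Gamma_2^-$ through the matrix $B$, so the boundary term is genuinely non-local in $x_2$; the tangential difference-quotient estimate must be carried out simultaneously on both components while respecting this coupling, and one must verify that the relative boundedness from Lemma~\ref{RelBound} furnishes the coercivity needed to absorb the boundary error terms. Establishing that $\partial_2^2\psi\in L^2$ near the boundary, using only the Lipschitz regularity \eqref{reg.f} of $f$ and the $W^{1,\infty}$ regularity of the coupling functions, is the crux; everything else is then a matter of integrating by parts and invoking density of traces.
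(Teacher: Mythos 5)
Your proposal follows essentially the same route as the paper's proof: tangential difference quotients in $x_1$ (exploiting periodicity) to control $\partial_1\psi$ in $W^{1,2}$, interior elliptic regularity to justify $-\Delta_G\psi=F$ a.e., algebraic recovery of $\partial_2^2\psi\in L^2$ from the equation, and finally integration by parts against arbitrary $\varphi\in W^{1,2}_{\rm per}(\Omega_0)$ to identify the boundary conditions. The approach and the order of the key steps match the paper, which likewise treats only the separated case in detail and declares the connected case analogous.
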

\begin{proof}
We prove the separated boundary conditions case only, the connected case is analogous.
For each $\psi\in W^{1,2}_{\rm per}(\Omega_0)$
We introduce a difference quotient
\eq{
\delta\psi(x_1,x_2):=\frac{\psi_{\delta}(x_1,x_2)-\psi(x_1,x_2)}{\delta},}
where $\psi_{\delta}(x_1,x_2):=\psi(x_1+\delta,x_2)$
and~$\delta$ is a small real number. The shifted value $\psi_{\delta}(x_1,x_2)$ is well defined for every
$x_1 \in \Jj$ and $\delta\in \R$ by extending $\psi$ periodically to $\R$. We use periodic extensions of other functions in $x_1$ direction throughout the whole proof without further specific comments.
The estimate
\eq{\|\delta\psi\| \leq \|\psi\|_{W^{1,2}(\Omega_0)}   \label{Diference}}
is valid for $\delta$ small enough \cite[Sec.~5.8.2., Thm.~3]{Evans1998}.

We express the difference of identities \eqref{genEq} for $\psi$ and $\psi_{\delta}$,
whence we get for every $\varphi\in W^{1,2}_{\rm per}(\Omega_0)$
\begin{eqnarray}\label{Id}
\big(\partial_1\varphi, (\delta f^{-1}) \partial_1 \psi\big)
+\big(\partial_1\varphi, f^{-1}_{\delta}\partial_1 (\delta \psi)\big)
+\big(\partial_2\varphi,(\delta f) \partial_2 \psi\big) \nonumber \\
+ \big(\partial_2\varphi, f_{\delta} \partial_2 (\delta \psi) \big)
+ \big(\varphi, \delta (f(\beta+\ii \alpha)) \psi_{\delta}  \big)^{\Gamma_2^+}
+ \big(\varphi, f (\beta+\ii \alpha) (\delta \psi)   \big)^{\Gamma_2^+} \nonumber \\
+ \big(\varphi,\delta(f (\beta-\ii \alpha))\psi_{\delta}  \big)^{\Gamma_2^-}
+ \big(\varphi, f(\beta-\ii \alpha) (\delta \psi)   \big)^{\Gamma_2^-} \nonumber \\
= \big(\varphi, (\delta f) F_{\delta}  \big) + \big(\varphi, f (\delta F) \big),
\end{eqnarray}
where $(\cdot,\cdot)$ is the inner product in $L^2(\Omega_0)$ and
\begin{equation}\label{b.scal.prod}
(\varphi,\psi)^{\Gamma_2^{\pm}}
:= \int_{-l}^{l}{\overline{\varphi(x_1,\pm a)}\,\psi(x_1,\pm a) \, \dd x_1 }.
\end{equation}
We insert $\varphi=\delta \psi$ into equation~\eqref{Id}
and apply the `integration-by-parts' formula \cite[Sec.~5.8.2]{Evans1998}
for difference quotients, \ie,
$(\varphi,\delta F \big)=-\big((-\delta) \varphi, F \big)$, in order to avoid the difference quotient of the arbitrary (\eg~possibly non-continuous) function $F\in L^2(\Omega_0,G)$.
Using the embedding of $W^{1,2}(\Omega_0)$ in $L^2(\partial\Omega_0)$,
the regularity assumptions on $\alpha$, $\beta$ and $f$,
the Schwarz and Cauchy inequalities,
and the estimate~\eqref{Diference}, we obtain
\eq{\|\delta \psi\|_{W^{1,2}(\Omega_0)}\leq C,    }
where $C$ is a constant independent of $\delta$. By standard arguments \cite[D.4]{Evans1998},
this estimate yields that $\partial_1 \psi \in W^{1,2}(\Omega_0)$.

At the same time,
standard elliptic regularity theory \cite[Thm.~8.8]{Gilbarg-Trudinger}
implies that the solution~$\psi$ to~\eqref{genEq}
belongs to $W^{2,2}_{\rm loc}(\Omega_0)$.
Thus~$\psi$ satisfies the equation
\begin{equation}\label{LB.a.e}
-\Delta_{G}\psi=F
\end{equation}
a.e.~in $\Omega_0$. If we express $\partial_{2}^2 \psi$ from \eqref{LB.a.e},
we obtain that
\mbox{$\partial_{2}^2 \psi \in L^2(\Omega_0)$}.

It remains to check boundary conditions of $\Dom(H_{\II})$.
Once the $W^{2,2}$-regularity of the solution~$\psi$ is established,
this can be done by using integration by parts
in the identity~\eqref{genEq}
and considering the arbitrariness of~$\varphi$, see \cite[Lemma 3.2]{borisov-2007}
for the more detailed discussion in an analogous situation.
\end{proof}
Let us write $H_{\II}(\alpha,\beta)$ and $H_{\I}(b,c,\phi)$ if we want to stress the dependence of the Hamiltonians on functions $\alpha, \beta$ and $b,c,\phi$ entering the boundary conditions.
\begin{theorem}\label{OpForm}
Let $\alpha, \beta, b, 1/b, c,\phi \in W^{1,\infty}_{\rm per}\big(\Jj\big)$ and let $f$ satisfy \eqref{est.f}, \eqref{per.f}, and \eqref{reg.f}. Then
\begin{enumerate}
\item $\tilde{H}_{\iota}=H_{\iota}$,
\item $H_{\iota}$ are m-sectorial operators,
\item the adjoint operators $H_{\iota}^*$ can be found as
        \begin{equation*}
        H_{\II}^*(\alpha,\beta)=H_{\II}(-\alpha,\beta),
        \qquad
        H_{\I}^*(b,c,\phi)=H_{\I}(b,c,-\phi),
        \end{equation*}
\item the resolvents of $H_{\iota}$ are compact.
\end{enumerate}

\end{theorem}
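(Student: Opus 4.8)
The plan is to derive all four items from the preparatory results, the central task being the identification $\tilde H_\iota = H_\iota$; the remaining assertions then follow quickly. For item~(1) I would establish the two inclusions separately. The inclusion $H_\iota \subseteq \tilde H_\iota$ is the integration-by-parts computation that motivated the definition of $h_\iota$ in the first place: given $\psi \in \Dom(H_\iota)$, that is $\psi \in W^{2,2}(\Omega_0)$ obeying \eqref{PerBC} and (\ref{BC}$\iota$), I would expand $(\varphi,-\Delta_G\psi)_G$ by means of the explicit expression \eqref{LB}, integrate by parts in both variables, discard the $\Gamma_1^\pm$ contributions by periodicity (using \eqref{PerBC} together with \eqref{per.f}), and rewrite the $\Gamma_2^\pm$ contributions by inserting the boundary relations (\ref{BC}$\iota$) for $\partial_2\psi$. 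The result is exactly $h_\iota(\varphi,\psi)=(\varphi,H_\iota\psi)_G$ for every $\varphi \in W^{1,2}_{\rm per}(\Omega_0)$, so $\psi \in \Dom(\tilde H_\iota)$ with $\tilde H_\iota\psi = H_\iota\psi$. The opposite inclusion $\tilde H_\iota \subseteq H_\iota$ is precisely Lemma~\ref{Dom.Hcs}: any $\psi \in \Dom(\tilde H_\iota)$ solves \eqref{genEq} with $F=\tilde H_\iota\psi$, hence lies in $\Dom(H_\iota)$, while the pointwise identity \eqref{LB.a.e} obtained in that proof yields $H_\iota\psi = F = \tilde H_\iota\psi$. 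Together the two inclusions give $\tilde H_\iota = H_\iota$, and item~(2) is then immediate because $\tilde H_\iota$ is m-sectorial by Corollary~\ref{Corol.m}.

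For item~(3) I would use that, by the representation theorem \cite[Thm.~VI.2.5]{Kato}, the adjoint $H_\iota^*$ is the operator associated with the adjoint form $h_\iota^*(\varphi,\psi):=\overline{h_\iota(\psi,\varphi)}$. Since $h^1$ is symmetric it is left untouched, so only the boundary part $h_\iota^2$ needs to be conjugated. Using that $f$ is real, in the separated case one obtains $\overline{h_{\II}^2(\psi,\varphi)} = (\varphi,(\beta-\ii\alpha)\psi)_G^{\Gamma_2^+}+(\varphi,(\beta+\ii\alpha)\psi)_G^{\Gamma_2^-}$, which is $h_{\II}^2$ with $\alpha$ replaced by $-\alpha$; hence $H_{\II}^*(\alpha,\beta)=H_{\II}(-\alpha,\beta)$. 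The connected case is the same calculation with more bookkeeping: conjugating $h_{\I}^2$ and tracking the entries $B_{ij}$ together with the parity operator $\P$, I expect the outcome to coincide with $h_{\I}^2$ after interchanging $B_{11}$ and $B_{22}$, i.e.\ after the substitution $\phi \mapsto -\phi$, giving $H_{\I}^*(b,c,\phi)=H_{\I}(b,c,-\phi)$. Alternatively, one may first prove $(\psi,H_\iota\varphi)_G=(H_\iota^{\sharp}\psi,\varphi)_G$ directly by integration by parts, with $H_\iota^{\sharp}$ the candidate adjoint, and then upgrade the resulting inclusion $H_\iota^{\sharp}\subseteq H_\iota^*$ to an equality using m-sectoriality and a point common to the two resolvent sets.

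For item~(4) I would show that the form domain, which equals $W^{1,2}_{\rm per}(\Omega_0)$ for both $\iota$, embeds compactly into $L^2(\Omega_0,G)$. Indeed, Lemma~\ref{RelBound} shows that $h_\iota^2$ has $h^1$-bound zero, whence $\Re h_\iota[\psi]+C\|\psi\|_G^2$ is, for a suitable constant, equivalent to the square of the $W^{1,2}(\Omega_0)$-norm; thus the graph norm of the form $h_\iota$ controls the full Sobolev norm. As $\Omega_0$ is a bounded Lipschitz domain, the Rellich--Kondrachov theorem provides the compact embedding $W^{1,2}_{\rm per}(\Omega_0)\hookrightarrow\hookrightarrow L^2(\Omega_0)$, and compactness of the form domain in the underlying Hilbert space is the standard criterion ensuring that an m-sectorial operator has compact resolvent. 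I expect the only genuinely delicate point to be the connected-case adjoint computation in item~(3); the rest is either a direct integration by parts or a citation of the lemmas already proved.
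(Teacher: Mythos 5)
Your proposal is correct and follows essentially the same route as the paper: both inclusions for item~(1) with Lemma~\ref{Dom.Hcs} supplying the nontrivial one, Corollary~\ref{Corol.m} for item~(2), the adjoint-form theorem \cite[Thm.~VI.2.5]{Kato} for item~(3), and Lemma~\ref{RelBound} driving item~(4). The only cosmetic difference is that for the compact resolvent the paper invokes the stability of compact resolvents under relatively bounded form perturbations \cite[Thm.~VI.3.4]{Kato} applied to $h^1$, whereas you argue via the compact embedding of the form domain $W^{1,2}_{\rm per}(\Omega_0)$ into $L^2(\Omega_0,G)$; these are equivalent in substance.
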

\begin{proof}
\emph{Ad~1.}
It is easy to verify, by integration by parts,
that if $\psi \in \Dom(H_{\iota})$ then
$\psi \in \Dom(\tilde{H}_{\iota})$;
in fact, the function~$F$ from~\eqref{DomTilde}
satisfies $F=-\Delta_{G}\psi$ in the distributional sense.
Thus $H_{\iota} \subset \tilde{H}_{\iota}$. The more non-trivial inclusion
$\tilde{H}_{\iota} \subset H_{\iota}$ follows from Lemma~\ref{Dom.Hcs}.
Once the equality of the operators is established,
the other properties readily follow from the corresponding
properties for~$\tilde{H}_\iota$.

\emph{Ad~2.}
$\tilde{H}_\iota$~is m-sectorial by Corollary~\ref{Corol.m}.

\emph{Ad~3.}
By \cite[Thm. VI.2.5]{Kato}, the adjoint operator~$\tilde{H}_\iota^*$
is associated to the adjoint form
$h_{\iota}^*(\varphi,\psi):=\overline{h_{\iota}(\psi,\varphi)}$,
which establishes the required identities for~$\tilde{H}_\iota$.

\emph{Ad~4.}
The compactness of the resolvents for $\tilde{H}_{\iota}$ is provided by the perturbation result
\cite[Thm. VI.3.4]{Kato} and Lemmata~\ref{h1.prop}, \ref{RelBound}.
\end{proof}

\subsection{Spectral consequences}

Since the Hamiltonians~$H_{\iota}$ are m-sectorial by Theorem~\ref{OpForm},
the spectrum (as a subset of the numerical range)
is contained in a sector of the complex plane,
\ie, there exists a vertex~$\gamma\in\R$
and a semi-angle $\theta\in[0,\pi/2)$ such that
\begin{equation*}
\sigma(H_{\iota}) \subset \big\{\zeta \in \C
\ \big| \
|\arg(\zeta-\gamma)|\leq \theta \big\}.
\end{equation*}
Furthermore, since the resolvents of~$H_{\iota}$ are compact,
the spectra of~$H_{\iota}$ are purely discrete,
as it is reasonable to expect for the Laplacian
defined on a bounded manifold.

Under the additional assumptions on the geometry
of the model~\eqref{sym.f}, one can show that~$H_{\iota}$
are \mbox{$\PT$-symmetric}.
\begin{proposition}\label{PTpseudo}
Let $\alpha, \beta, b, 1/b, c,\phi \in W^{1,\infty}_{\rm per}\big(\Jj\big)$ and let $f$ satisfy \eqref{est.f}, \eqref{sym.f},
and~\eqref{reg.f}.
Then Hamiltonians $H_{\iota}$ are
\begin{enumerate}
\item $\PT$-symmetric, \ie,
$ (\PT) H_{\iota} \subset H_{\iota} (\PT)$,
\item $\P$-pseudo-Hermitian, \ie,
$ H_{\iota}=\P H_{\iota}^* \P, $
\item $\T$-self-adjoint, \ie,
$ H_{\iota}=\T H_{\iota}^* \T,$
\end{enumerate}
where the operators~$\P$ and~$\T$ are defined in~\eqref{def.parity}.
\end{proposition}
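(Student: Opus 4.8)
The plan is to establish the three properties in a logical order, exploiting the fact that properties 1--3 are essentially equivalent given the structure at hand, so that the bulk of the work reduces to a single computation. First I would clarify the relationships: the $\T$-self-adjointness (property 3) combined with the $\P$-pseudo-Hermiticity (property 2) yields the $\PT$-symmetry (property 1) almost immediately, since $H_\iota = \T H_\iota^* \T$ and $H_\iota = \P H_\iota^* \P$ together give $\PT H_\iota = \P\T\,\T H_\iota^*\T = \P H_\iota^* \T = H_\iota \P \T$ (using that $\P$ and $\T$ commute), at least on the relevant domains. Thus the strategy is to prove properties 2 and 3 directly and deduce property 1.

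The key computational input is the adjoint formula already established in Theorem~\ref{OpForm}, part 3: namely $H_{\II}^*(\alpha,\beta) = H_{\II}(-\alpha,\beta)$ and $H_{\I}^*(b,c,\phi) = H_{\I}(b,c,-\phi)$. For property 2, I would verify that conjugation by the parity operator $\P$ maps the Hamiltonian with flipped parameters back to the original one. Concretely, one checks that if $\psi \in \Dom(H_\iota^*)$ then $\P\psi \in \Dom(H_\iota)$ and $\P H_\iota^* \P \psi = H_\iota \psi$. This requires two things: that $\P$ preserves the operator action $-\Delta_G$, which is where the geometric symmetry assumption~\eqref{sym.f} (that $f(x_1,x_2) = f(x_1,-x_2)$) enters crucially, since $\P$ swaps $x_2 \to -x_2$ and the Laplace--Beltrami operator~\eqref{LB} involves $f$ and $\partial_2$; and that $\P$ interchanges the boundary conditions at $\Gamma_2^+$ and $\Gamma_2^-$ in precisely the way that converts the sign-flipped parameters back to the original ones. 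The boundary-condition bookkeeping is what makes the parameter flip $(\alpha \to -\alpha)$ or $(\phi \to -\phi)$ cancel. Property 3 follows by an analogous verification for the antilinear operator $\T$ (complex conjugation): since $\T$ conjugates the action and $H_\iota^*$ has conjugated coefficients, the composition $\T H_\iota^* \T$ reproduces $H_\iota$; here the reality of $\alpha, \beta, b, c, \phi$ and the self-adjointness of the principal part are what is used.

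The main obstacle is the careful domain analysis: I must confirm that $\P$ and $\T$ map the operator domain $\Dom(H_\iota)$ (and the adjoint domain) onto itself, including the $W^{2,2}$-regularity and all the boundary conditions~\eqref{PerBC} and~(\ref{BC}$\iota$). The periodicity condition~\eqref{PerBC} and the symmetry~\eqref{per.f} guarantee that the $x_1$-boundary behaviour is preserved, while~\eqref{sym.f} handles the $x_2$-reflection; the delicate point is checking that the separated conditions~\eqref{SepBC} and connected conditions~\eqref{ConBC} transform correctly under $\P$, which amounts to verifying that the matrix structure in~\eqref{BC} is compatible with the reflection $x_2 \to -x_2$ and the sign changes in the parameters. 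Once the domain mappings are confirmed, the identities are verified on a dense set and the m-sectoriality from Theorem~\ref{OpForm} ensures there are no subtleties with closures. I would state the domain inclusions as inclusions of operators (matching the $\subset$ in property~1), since $\PT$ is only asserted as an intertwining relation rather than a strict commutation, which sidesteps any concern about whether the domains coincide exactly.
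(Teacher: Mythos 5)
Your proposal is correct and follows essentially the same route as the paper: the paper likewise verifies the domain mappings and the commutation with $-\Delta_G$ directly from the definition~\eqref{H.operator} (using~\eqref{sym.f} for the $x_2$-reflection) and obtains properties 2 and 3 from the explicit adjoint formulas of Theorem~\ref{OpForm}.3. The only difference is cosmetic ordering — the paper checks the $\PT$-symmetry relation directly while you deduce it from the $\P$-pseudo-Hermiticity and $\T$-self-adjointness — and your write-up in fact supplies more of the bookkeeping detail than the paper's very terse proof does.
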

\begin{proof}
Note that the $\PT$-symmetry relation means that whenever $\psi\in\Dom(H_\iota)$,
$\PT\psi$ also belongs to~$\Dom(H_\iota)$
and $\PT H_{\iota}\psi = H_{\iota}\PT\psi$.
This can be verified directly
using the definition of~$H_{\iota}$ via~\eqref{H.operator}.
The proofs of the remaining statements are based
on the explicit knowledge of the adjoint operators, Theorem~\ref{OpForm}.3.
\end{proof}
\begin{corollary}\label{Complex.pairs}
Under the hypotheses of Proposition~\ref{PTpseudo},
the spectra of~$H_{\iota}$ are invariant under complex conjugation, \ie,
$$
  \forall \lambda\in\C \,, \qquad
  \lambda \in \sigma(H_{\iota})
  \Longleftrightarrow \overline{\lambda} \in \sigma(H_{\iota})
  \,.
$$
\end{corollary}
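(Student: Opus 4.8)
The plan is to derive the corollary directly from the $\T$-self-adjointness established in Proposition~\ref{PTpseudo}.3, namely the relation $H_\iota = \T H_\iota^* \T$. The key algebraic fact I would use is that $\T$ is an antilinear involution (complex conjugation, $\T^2 = \mathrm{id}$), so conjugating the resolvent identity by $\T$ turns multiplication by a scalar into multiplication by its complex conjugate. First I would fix $\lambda \in \C$ and suppose $\lambda \in \sigma(H_\iota)$; since the spectrum is purely discrete (Theorem~\ref{OpForm}.4, compact resolvent), this means $\lambda$ is an eigenvalue, so there exists a nonzero $\psi \in \Dom(H_\iota)$ with $H_\iota \psi = \lambda \psi$.

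The central step is then the following computation. Applying $\T$ to the eigenvalue equation and using $H_\iota = \T H_\iota^* \T$ together with $\T^2 = \mathrm{id}$, I would obtain
\begin{equation*}
  H_\iota^* (\T\psi) = \T H_\iota \psi = \T(\lambda\psi) = \overline{\lambda}\,(\T\psi),
\end{equation*}
where the last equality uses the antilinearity of $\T$. Since $\T$ preserves the norm (it is an isometric involution on $L^2(\Omega_0,G)$) the vector $\T\psi$ is nonzero, so $\overline{\lambda}$ is an eigenvalue of the adjoint $H_\iota^*$. The remaining point is to transfer this from $H_\iota^*$ back to $H_\iota$: for an operator with nonempty resolvent set one has $\sigma(H_\iota^*) = \overline{\sigma(H_\iota)}$, hence $\overline{\lambda} \in \sigma(H_\iota^*)$ is equivalent to $\lambda \in \sigma(H_\iota)$, which we already have; what we actually want is $\overline{\lambda}\in\sigma(H_\iota)$. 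The cleanest route is to instead read Proposition~\ref{PTpseudo}.3 as giving directly the eigenvalue of $H_\iota$ itself via the explicit adjoint formulas of Theorem~\ref{OpForm}.3, or simply to note that $\overline{\lambda}\in\sigma(H_\iota^*)=\overline{\sigma(H_\iota)}$ forces $\lambda\in\sigma(H_\iota)$ and, by symmetry of the argument (running it with the roles of $\lambda$ and $\overline{\lambda}$ interchanged), one obtains the stated equivalence.

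I would phrase the final argument symmetrically to make the biconditional transparent: the map $\psi \mapsto \T\psi$ is an antilinear bijection carrying $\Ker(H_\iota - \lambda)$ onto $\Ker(H_\iota^* - \overline{\lambda})$, and combining this with the general identity $\sigma(H_\iota^*) = \{\overline{\mu} : \mu \in \sigma(H_\iota)\}$ (valid because $H_\iota$ is closed and densely defined with nonempty resolvent set) yields $\lambda \in \sigma(H_\iota) \Leftrightarrow \overline{\lambda} \in \sigma(H_\iota)$. The main obstacle, if any, is purely bookkeeping: one must be careful to invoke the discreteness of the spectrum so that ``$\lambda \in \sigma$'' may be replaced by ``$\lambda$ is an eigenvalue with a genuine eigenvector,'' and one must correctly track the complex conjugation introduced by the antilinearity of $\T$ at each application. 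No genuine analytic difficulty arises; the content is entirely contained in the $\T$-self-adjointness relation and the fact that $\T$ is an isometric antilinear involution.
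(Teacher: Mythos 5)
There is a genuine gap here: $\T$-self-adjointness alone does not imply that the spectrum is invariant under complex conjugation, and your argument, which uses only item~3 of Proposition~\ref{PTpseudo} and never the parity operator~$\P$, is irreparably circular. Your central computation $H_\iota^*(\T\psi)=\overline{\lambda}\,\T\psi$ is correct, but it places $\overline{\lambda}$ in $\sigma(H_\iota^*)$, and since $\sigma(H_\iota^*)=\{\overline{\mu}:\mu\in\sigma(H_\iota)\}$ for any closed densely defined operator with nonempty resolvent set, the statement ``$\overline{\lambda}\in\sigma(H_\iota^*)$'' is \emph{identical} to the hypothesis ``$\lambda\in\sigma(H_\iota)$''. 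You notice this yourself, but neither of your proposed repairs closes the loop: ``running the argument with $\lambda$ and $\overline{\lambda}$ interchanged'' requires knowing $\overline{\lambda}\in\sigma(H_\iota)$ at the outset, which is precisely the conclusion sought, and your final paragraph (antilinear bijection of kernels plus $\sigma(H_\iota^*)=\overline{\sigma(H_\iota)}$) again only reproduces the tautology. That no repair within this framework is possible can be seen from the fact that $\T$-self-adjointness carries no spectral information of this kind: every square complex matrix is similar to a complex symmetric one, and a complex symmetric matrix $A=A^{T}$ satisfies $A^*=\overline{A}=\T A\T$ with $\T$ the entrywise conjugation, yet its spectrum is arbitrary; likewise $-\Delta+\ii$ on $L^2(\R)$ is $\T$-self-adjoint with spectrum $[0,\infty)+\ii$.

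The missing idea is the parity operator. The paper's proof uses item~1 of Proposition~\ref{PTpseudo}, the $\PT$-symmetry $(\PT)H_\iota\subset H_\iota(\PT)$: if $H_\iota\psi=\lambda\psi$ with $\psi\neq 0$, then, by the antilinearity of $\PT$,
$$
H_\iota(\PT\psi)=\PT H_\iota\psi=\PT(\lambda\psi)=\overline{\lambda}\,(\PT\psi),
$$
so $\overline{\lambda}$ is an eigenvalue of $H_\iota$ \emph{itself}, not merely of its adjoint; the discreteness of the spectrum (Theorem~\ref{OpForm}.4) is invoked exactly as you intend, to reduce membership in the spectrum to the existence of an eigenvector. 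Equivalently one could use item~2, the $\P$-pseudo-Hermiticity $H_\iota=\P H_\iota^*\P$, which gives $\sigma(H_\iota)=\sigma(H_\iota^*)=\overline{\sigma(H_\iota)}$ because $\P$ is a linear unitary involution; this is the rigorous version of your passing suggestion to exploit the explicit adjoint formulas of Theorem~\ref{OpForm}.3. Either way, the operator $\P$ is indispensable, and an argument built on $\T$ alone cannot succeed.
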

\begin{proof}
Recall that the spectrum of~$H_\iota$ is purely discrete
due to Theorem~\ref{OpForm}.4.
With regard to $\PT$-symmetry, it is easy to check that if $\psi$
is the eigenfunction corresponding to the eigenvalue $\lambda$, then $\PT\psi$
is the eigenfunction corresponding to the eigenvalue $\overline{\lambda}$.
\end{proof}

\section{Solvable models: constantly curved manifolds}\label{Sec.constant}

In order to examine basic effects of curvature on the spectrum
of the Hamiltonians we investigate solvable models now.
We restrict ourselves to the spectral problem
in constantly curved manifolds and subjected to constant
interactions on the boundary,
\ie, the functions $K, \alpha, \beta, b, c, \phi$
are assumed to be constant.
Moreover, we assume that~$\Gamma$ is a geodesic, \ie~$\kappa=0$, to have \eqref{sym.f}.

\subsection{Preliminaries}

To emphasize the dependence of the Hamiltonians~$H_\iota$ on the curvature~$K$,
we use the notation~$H_{\iota(K)}$ in this section.
One can easily derive the scaling properties of eigenvalues
for constant $K\neq 0$:
$$
\lambda_{\iota}(K,a,l)
\,=\, |K|\,\lambda_{\iota}\big(\pm 1,\sqrt{|K|}\,a,\sqrt{|K|}\,l\big).
$$
Hence, the decisive factor for qualitative properties of the spectrum
is the sign of~$K$, while the specific value of curvature is not essential.
Hereafter we restrict ourselves to
\begin{equation}\label{K-const}
  K\in\{-1,0,1 \} \,.
\end{equation}
Possible realizations of the ambient manifolds~$\mathcal{A}$ corresponding
to these three cases are pseudosphere, cylinder, and sphere, respectively,
see Figure \ref{realizations}.

\begin{figure}[ht]
\begin{center}
\subfloat[$K=-1$, pseudosphere]{\includegraphics[width=0.3\textwidth, height=0.4\textwidth]{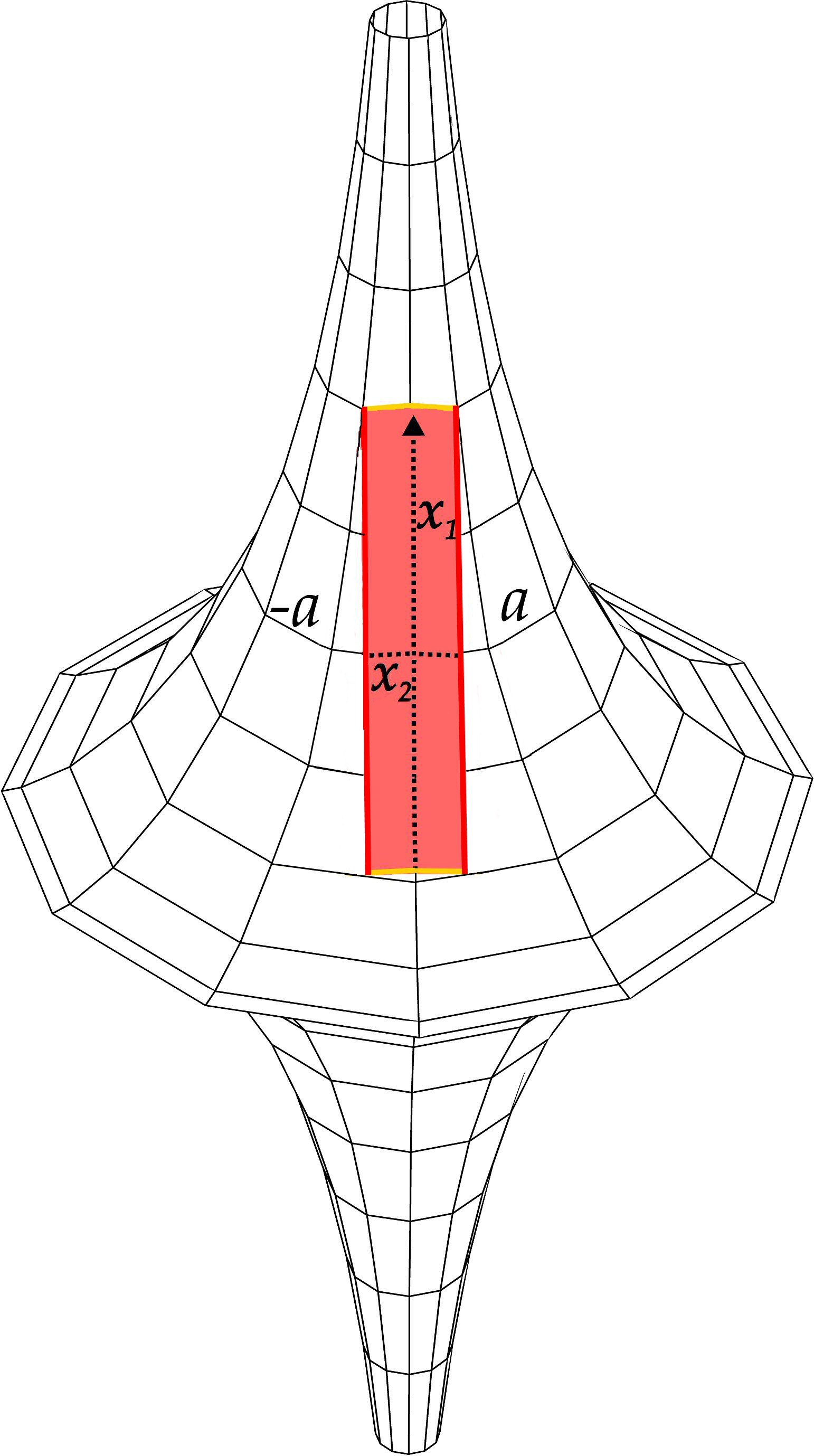}}\hspace{0.2cm}
\subfloat[$K=0$, cylinder]{\includegraphics[height=0.4\textwidth, width=0.27\textwidth]{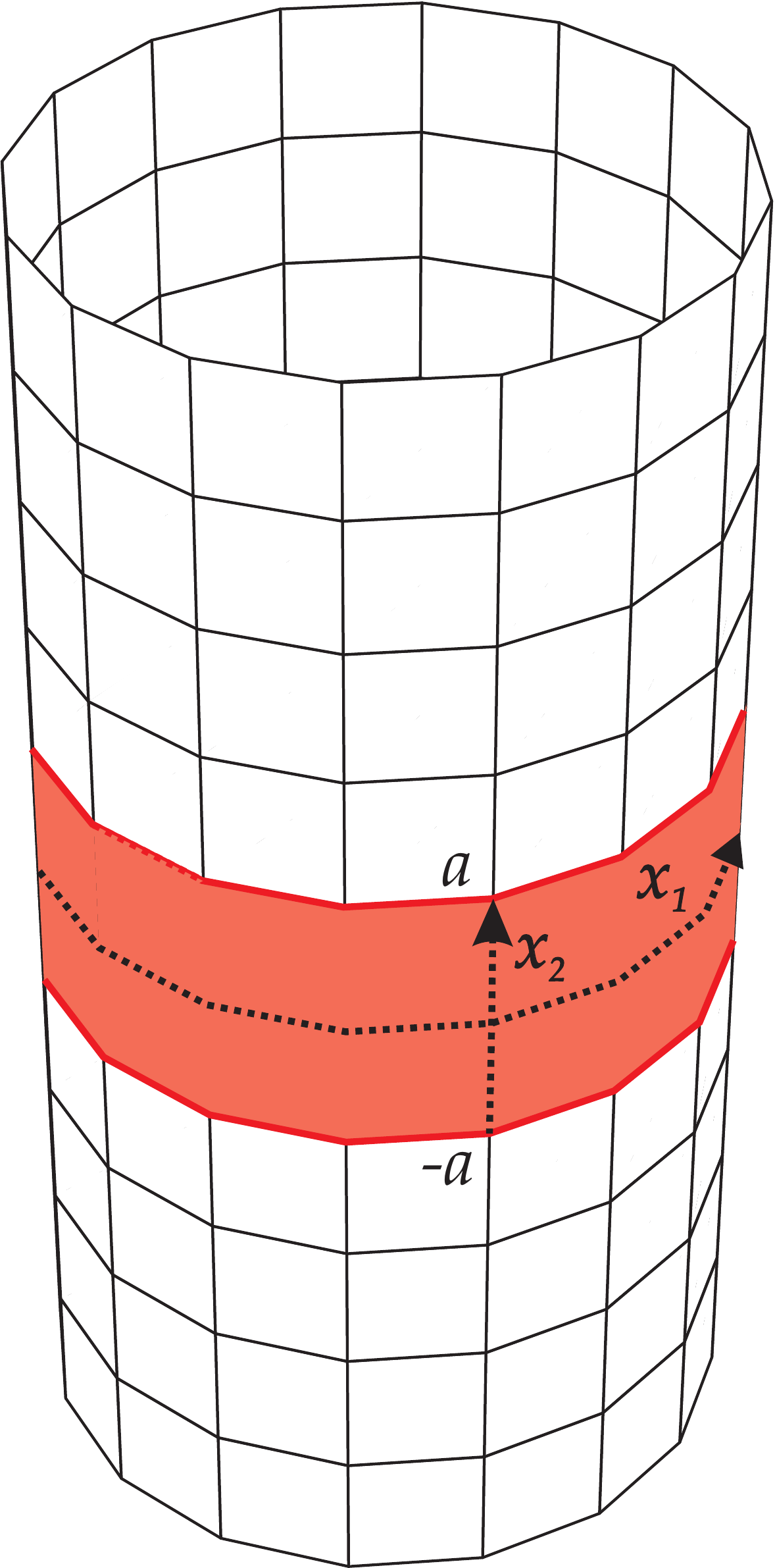}} \hspace{0.2cm}
\subfloat[$K=1$, sphere]{\includegraphics[width =0.37\textwidth]{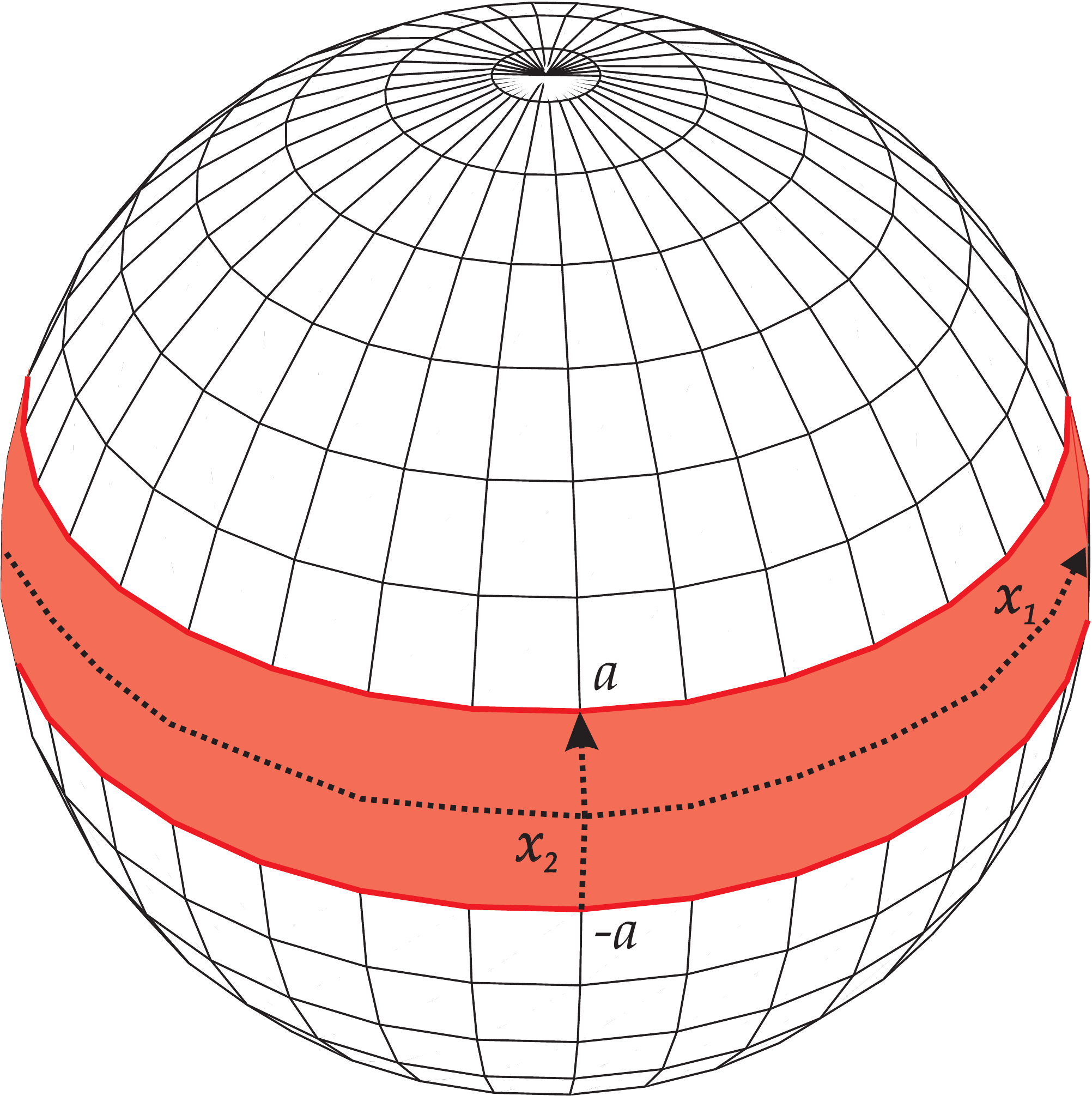}}
\caption {Realizations of the constantly curved manifolds.}
\label{realizations}
\end{center}
\end{figure}

\begin{remark}
The pseudosphere should be considered as a useful
realization of~$\mathcal{A}$ with $K=-1$ only locally,
since no complete surface of constant negative
curvature can be globally embedded in~$\R^3$
(this is reflected by the singular equator in Figure~\ref{realizations}.(a)).
However, since~$\Omega$ is a precompact subset of~$\mathcal{A}$,
the incompleteness of the pseudosphere is not a real obstacle here.
\end{remark}

Moreover, hereafter we put $l=\pi$,
so that the length of the strip is $2\pi$.
This provides an instructive visualization of~$\Omega$
as a tubular neighbourhood of a geodesic circle
on the cylinder and the sphere, see Figure \ref{realizations}.

For $\kappa=0$ and constant curvatures~\eqref{K-const},
the Jacobi equation~\eqref{Jacobi} admits the explicit solutions
\begin{equation}\label{Jacobi-const}
  f_{(K)}(x_1,x_2) =
  \begin{cases}
    \cosh x_2 & \mbox{if} \quad K=-1 \,, \\
    1 & \mbox{if} \quad K=0 \,, \\
    \cos x_2 & \mbox{if} \quad K=1 \,.
  \end{cases}
\end{equation}
It follows that the assumption~\eqref{est.f}
is satisfied for any positive~$a$ if $K=-1,0$,
while one has to restrict to $a < \pi/2$ if $K=1$.
The latter is also sufficient to satisfy~\eqref{Ass.basic} for the sphere.
There is no restriction on~$a$ to have~\eqref{Ass.basic}
if~$\Gamma$ is the geodesic circle on the cylinder.
In any case (including the pseudosphere),
\eqref{Ass.basic}~can be always satisfied for sufficiently small~$a$.
The other hypotheses, \ie~\eqref{per.f}, \eqref{sym.f}, and~\eqref{reg.f}, clearly hold
regardless of the curvature sign.
\begin{remark}
In view of Remark~\ref{Rem.Ass.basic},
$a < \pi/2$ for $K=1$ is the only essential restriction
in the constant-curvature case~\eqref{Jacobi-const}.
\end{remark}

Explicit structures of the Hamiltonians~$H_{\iota(K)}$
introduced in~\eqref{H.operator} readily follow
from~\eqref{LB} by using~\eqref{Jacobi-const}:
\begin{equation}\label{DefOpKP}
  H_{\iota(K)} =
  \begin{cases}
    \displaystyle
    -\frac{1}{\cosh^2x_2}\partial_1^2-\partial_2^2-\tanh x_2 \partial_2
    & \mbox{if} \quad K=-1 \,,
    \medskip \\
    \displaystyle
    -\partial_1^2-\partial_2^2
    & \mbox{if} \quad K=0 \,,
    \medskip \\
    \displaystyle
    -\frac{1}{\cos^2x_2}\partial_1^2-\partial_2^2+\tan x_2 \partial_2
    & \mbox{if} \quad K=1 \,,
  \end{cases}
\end{equation}
on $\Dom(H_{\iota(K)})$.

\subsection{Partial wave decomposition}

Since both the coefficients of~$H_{\iota(K)}$
and the boundary conditions are independent
of the first variable~$x_1$,
we can decompose the Hamiltonians
into a direct sum of transverse one-dimensional operators.
The decomposition is based on the following lemma.
\begin{lemma}
\begin{equation}\label{eimx1.dec.1}
\forall \Psi \in L^2(\Omega_0,G), \qquad
\Psi(x_1,x_2)=\sum_{m\in \Z} \psi_m(x_2) \phi_m(x_1)
\quad {\rm in \quad } L^2(\Omega_0,G),
\end{equation}
where
\begin{equation}\label{eimx1.dec.2}
\phi_m(x_1):=\frac{1}{\sqrt{2\pi}}  e^{\ii m x_1}, \qquad
\ \psi_m(x_2):= \big(\phi_m, \Psi(\cdot,x_2) \big)_{L^2(\Jj)}.
\end{equation}
\end{lemma}
\begin{proof}
We may restrict the proof to $L^2(\Omega_0)$ only
because the norms $\|\cdot\|$ and $\|\cdot\|_{G}$
are equivalent due to~\eqref{est.f}. Let us also stress that $G$ is independent of $x_1$ and
$\big\{\phi_m\big\}_{m\in\Z}$ forms
an orthonormal basis of $L^2(\Jj)$. Hence
\begin{equation}\label{eimx1.dec.3}
\left\| \sum_{m\in\Z} \psi_m(x_2)\phi_m  \right\|_{L^2(\Jj)} = \|\Psi(\cdot,x_2)\|_{L^2(\Jj)} \in L^2(\Jd).
\end{equation}
The decomposition in $L^2(\Omega_0)$ can be then justified
by using the dominated convergence theorem.
\end{proof}

Writing $\Psi(x_1,x_2)=\sum_{m\in \Z}{\phi_m(x_1)\psi_m(x_2)}$
in the expression $H_{\iota(K)}\Psi$
and formally interchanging the summation
and differentiation,
we (formally) arrive at the decomposition:
\begin{equation}\label{H.dec}
H_{\iota(K)} = \bigoplus_{m\in \Z} H_{\iota(K)}^m B^m
\end{equation}
with
\begin{equation*}
  H_{\iota(K)}^m :=
  \begin{cases}
    \displaystyle
    -\partial_2^2-\tanh x_2\partial_2 +\frac{m^2}{\cosh^2x_2}
    & \mbox{if} \quad K=-1 \,, \\
    -\partial_2^2+m^2
    & \mbox{if} \quad K=0 \,, \\
    \displaystyle
    -\partial_2^2+\tan x_2\partial_2 +\frac{m^2}{\cos^2x_2}
    & \mbox{if} \quad K=1 \,,
  \end{cases}
\end{equation*}
where $B^m$ are bounded rank-one operators defined by
\begin{equation}\label{Bm}
  (B^m \Psi)(x_1,x_2)
  := \big(\phi_m,\psi(\cdot,x_2)\big)_{L^2(\Jj)}\, \phi_m(x_1) \,.
\end{equation}
The operators~$H_{\iota(K)}^m$ act in $L^2(\Jd,\dd \nu_{(K)})$ spaces
with the measure
\begin{equation}
  \dd \nu_{(K)}(x_2)
  :=
  \begin{cases}
    \cosh x_2 \, \dd x_2
    & \mbox{if} \quad K=-1 \,,
    \\
    \dd x_2
    & \mbox{if} \quad K=0 \,,
    \\
    \cos x_2 \, \dd x_2
    & \mbox{if} \quad K=1 \,.
  \end{cases}
\end{equation}
The domains of $H_{\iota(K)}^m$ are given by
\begin{equation}
\Dom (H_{\iota(K)}^m):= \big\{ \psi \in W^{2,2}(\Jd) \ \big| \
\psi \ {\rm satisfies \ } (\ref{BC}\iota) \big\},
\end{equation}
with obvious modification of the $\PT$-symmetric boundary conditions $(\ref{BC}\iota)$ to~the one-dimensional situation.

To justify the decomposition~\eqref{H.dec} in a resolvent sense,
we need the following technical lemma specifying
the numerical range of~$H_{\iota(K)}^m$.
\begin{lemma}\label{Num.range}
Let $\Xi_{\iota(K)}^m $ denote the numerical range of $H_{\iota(K)}^m$.
Then for every $m\neq 0$ there exist real constants $c_0, c_1$ independent of $m$ such that
\begin{equation}
\Xi_{\iota(K)}^m  \subset \left\{
z \in \C \ \big| \
 \Re z \geq c_0 + m^2, \ |\Im z| \leq c_1 \sqrt{\Re z + |c_0| - m^2 }
\right\}.
\end{equation}
\end{lemma}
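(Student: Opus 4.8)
The plan is to establish the two defining inequalities by estimating the real and imaginary parts of the numerical-range functional $(\psi, H_{\iota(K)}^m \psi)_{L^2(\Jd, \dd\nu_{(K)})}$ directly, using the explicit form of $H_{\iota(K)}^m$ together with the boundary conditions $(\ref{BC}\iota)$. First I would compute, via integration by parts in the weighted space $L^2(\Jd, \dd\nu_{(K)})$, the sesquilinear form associated with $H_{\iota(K)}^m$. For a normalized $\psi \in \Dom(H_{\iota(K)}^m)$ this yields
\begin{equation*}
  \big(\psi, H_{\iota(K)}^m \psi\big)
  = \int_{-a}^{a} |\psi'(x_2)|^2 \, \dd\nu_{(K)}(x_2)
  + m^2 \int_{-a}^{a} \frac{|\psi(x_2)|^2}{g_{(K)}(x_2)^2} \, \dd\nu_{(K)}(x_2)
  + [\text{boundary terms}],
\end{equation*}
where $g_{(-1)}=\cosh$, $g_{(0)}=1$, $g_{(1)}=\cos$, and the boundary terms come from the traces at $x_2=\pm a$ and carry exactly the boundary-coupling data (i.e.\ $\beta \pm \ii\alpha$ in the separated case, or the entries of $B$ in the connected case). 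The first two integrals are manifestly real and nonnegative; the key point for the $\Re z \geq c_0 + m^2$ bound is that the potential term $m^2/g_{(K)}^2$ is bounded below by $m^2$ times the infimum of $g_{(K)}^{-2}$ on $\Jd$, which is a positive constant depending only on $a$ (and equals $1$ for $K \leq 0$, while for $K=1$ it is $\cos^2 a > 0$ since $a < \pi/2$).

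Next I would extract the lower bound on $\Re z$. Writing $z = (\psi, H_{\iota(K)}^m \psi)$, the real part is the sum of the two nonnegative integrals plus the real part of the boundary terms. Since the boundary contribution is controlled, by the trace estimate~\eqref{W12.est} specialized to one dimension, by a small multiple of $\int |\psi'|^2 \,\dd\nu_{(K)}$ plus a constant times $\|\psi\|^2$, I can absorb its negative part into the gradient integral and the $m$-independent remainder into the constant $c_0$. This leaves $\Re z \geq m^2 \inf g_{(K)}^{-2} + c_0 \geq c_0 + m^2$ after adjusting $c_0$; the crucial feature is that $c_0, c_1$ must be chosen \emph{independently of $m$}, which works precisely because the trace constant $\tilde C$ in~\eqref{W12.est} depends only on $a, l$ and the boundary-coupling functions are constant here.

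For the imaginary part I would observe that $\Im z$ equals the imaginary part of the boundary terms alone, since the bulk integrals are real. Thus $|\Im z|$ is bounded by a constant (depending on the boundary-coupling constants) times the sum of boundary traces $|\psi(\pm a)|^2$. Applying~\eqref{W12.est} once more gives $|\Im z| \leq \epsilon \int |\psi'|^2 \,\dd\nu_{(K)} + \epsilon^{-1} C' \|\psi\|^2$. The strategy is then to relate $\int |\psi'|^2 \,\dd\nu_{(K)}$ back to $\Re z$: from the real-part identity one has $\int |\psi'|^2 \,\dd\nu_{(K)} \leq \Re z - m^2 \inf g_{(K)}^{-2} + (\text{boundary remainder}) \leq \Re z + |c_0| - m^2$. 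Substituting this into the imaginary-part estimate and choosing $\epsilon$ appropriately produces $|\Im z| \leq c_1 \sqrt{\Re z + |c_0| - m^2}$, which is the second claimed inequality. I expect the main obstacle to be the bookkeeping that keeps all constants uniform in $m$: one must verify that the square-root dependence emerges correctly, which requires balancing $\epsilon$ against the right-hand side rather than fixing it a priori, and checking that the quantity $\Re z + |c_0| - m^2$ under the square root is nonnegative so that the bound is meaningful. The connected boundary-condition case $\iota = \I$ proceeds identically, only with the boundary functional rewritten through the entries $B_{ij}$ as in the definition of $h_{\I}^2$.
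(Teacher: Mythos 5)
Your proposal follows essentially the same route as the paper's proof: integrate by parts in the weighted space to split $(\psi,H_{\iota(K)}^m\psi)$ into real nonnegative bulk terms (the centrifugal potential supplying the $m^2$) plus boundary traces, control the traces with the one-dimensional version of \eqref{W12.est}, and balance $\epsilon$ against $\Re z - m^2$ to obtain the square-root bound on $|\Im z|$. The only slip is your identification of $\inf_{\Jd} g_{(K)}^{-2}$ — for $K=1$ it equals $1$ (since $\cos^{-2}\geq 1$), while for $K=-1$ it equals $\cosh^{-2}a<1$, not the other way round — which does not affect the case $K=+1$ that the paper writes out, though for $K=-1$ it means the lower bound comes with a constant factor in front of $m^2$ (a feature shared by the paper's "analogous" cases and harmless for the subsequent resolvent estimates).
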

\begin{proof}
We give the proof for $H_{\II(+1)}^m$ only,
the other cases are analogous.
We abbreviate $(\cdot,\cdot)_+ :=(\cdot,\cdot)_{L^2(\Jd,\dd \nu_{(+1)})}$ and define
\begin{equation*}
v^m(x_2):= \frac{m^2}{\cos^{2} x_2} \,, \qquad
h[\psi]:=\big( \psi,H_{\II(+1)}^m \psi \big)_+ \,,
\end{equation*}
for every $\psi \in \Dom (H_{\II(+1)}^m )$.
Integration by parts yields the following expressions
for real and imaginary parts of~$h[\psi]$:
\begin{eqnarray*}
\Re h[\psi] &=& \|\psi'\|_+^2 + (\psi, v^m \psi)_+ + \beta \cos a\left(|\psi(a)|^2+|\psi(-a)|^2 \right),
 \\
\Im h[\psi] &=& \alpha \cos a \big( |\psi(a)|^2-|\psi(-a)|^2 \big), 
\end{eqnarray*}
for every $\psi \in \Dom (H_{\II(+1)}^m )$.
The estimates of $\Re h[\psi]$ and $\Im h[\psi]$ can be easily obtained taking into account the equivalence of the norm $\|\cdot\|_{L^2(\Jd)}$ with $\|\cdot\|_+$ and using the one-dimensional version of the estimate \eqref{W12.est}.
\end{proof}

Now we are in a position to establish the main result of this subsection.
\begin{proposition}\label{decomposition}
$\displaystyle D:= \bigcap_{m\in \Z} \varrho\left(H_{\iota(K)}^m\right)$ is non-empty and
$D \subset \varrho\left(H_{\iota(K)}\right)$.
For every $z \in  D$,
\begin{equation}\label{resolvent.dec}
(H_{\iota(K)}-z)^{-1}
= \bigoplus_{m\in \Z} \big(H_{\iota(K)}^m-z\big)^{-1}B^m,
\end{equation}
where $(H_{\iota(K)}^m-z)^{-1}$ abbreviates
$1 \otimes (H_{\iota(K)}^m-z)^{-1}$ acting
on $L^2(\Jj)\otimes L^2(\Jd,\dd \nu_{(K)})$ and $B^m$ are defined in \eqref{Bm}.
\end{proposition}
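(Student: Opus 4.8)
The plan is to produce the operator on the right-hand side of \eqref{resolvent.dec} and verify by hand that it is a genuine two-sided bounded inverse of $H_{\iota(K)}-z$. Throughout write $R(z):=\bigoplus_{m\in\Z}\bigl(H_{\iota(K)}^m-z\bigr)^{-1}B^m$ for $z\in D$. First I would settle that $D\neq\emptyset$. Each transverse operator $H_{\iota(K)}^m$ is a regular second-order differential operator on the bounded interval $\Jd$ with regular boundary conditions, hence has compact resolvent and purely discrete spectrum; since every eigenvalue of an operator lies in its numerical range, $\sigma\bigl(H_{\iota(K)}^m\bigr)\subset\overline{\Xi_{\iota(K)}^m}$. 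By Lemma~\ref{Num.range}, for $m\neq0$ the set $\overline{\Xi_{\iota(K)}^m}$ lies in the half-plane $\{\Re z\ge c_0+m^2\}$ with $c_0,c_1$ independent of $m$, while inspection of $\Re h[\psi]$ for $m=0$ (the potential term $v^0$ vanishes and the boundary term is controlled by the one-dimensional version of \eqref{W12.est}) shows that $\overline{\Xi_{\iota(K)}^0}$ is bounded from the left. Choosing $z$ real with $\Re z$ sufficiently negative therefore places $z$ outside every $\overline{\Xi_{\iota(K)}^m}$, hence in $\bigcap_m\varrho\bigl(H_{\iota(K)}^m\bigr)=D$.

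Next I would show that $R(z)$ is bounded on $L^2(\Omega_0,G)$. For $z\notin\overline{\Xi_{\iota(K)}^m}$ the standard numerical-range estimate gives $\|(H_{\iota(K)}^m-z)^{-1}\|\le\dist(z,\Xi_{\iota(K)}^m)^{-1}$, and Lemma~\ref{Num.range} forces $\dist(z,\Xi_{\iota(K)}^m)\to\infty$ as $|m|\to\infty$; together with the finitely many remaining finite resolvent norms this yields $M:=\sup_m\|(H_{\iota(K)}^m-z)^{-1}\|<\infty$. Because $G$ is independent of $x_1$ and the $B^m$ are mutually orthogonal projections summing to the identity, Parseval's identity in the basis $\{\phi_m\}$ gives $\|R(z)\Psi\|_G^2=\sum_m\|(H_{\iota(K)}^m-z)^{-1}\psi_m\|^2\le M^2\|\Psi\|_G^2$, so $\|R(z)\|\le M$.

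The heart of the matter is that $H_{\iota(K)}$ is \emph{reduced} by the decomposition, which makes the formal interchange in \eqref{H.dec} exact. Because the coefficients in \eqref{DefOpKP} and the boundary conditions $(\ref{BC}\iota)$ are independent of $x_1$, for $\Psi\in\Dom(H_{\iota(K)})\subset W^{2,2}(\Omega_0)$ the partial Fourier coefficient $\psi_m=(\phi_m,\Psi(\cdot,x_2))_{L^2(\Jj)}$ lies in $W^{2,2}(\Jd)$ and inherits the transverse boundary conditions, so $\psi_m\in\Dom(H_{\iota(K)}^m)$; pairing $H_{\iota(K)}\Psi$ with $\phi_m$ and integrating by parts in $x_1$ (legitimate by the periodicity \eqref{PerBC} and the $W^{2,2}$-regularity) turns $-\partial_1^2$ into the factor $m^2$ and shows that $B^m$ intertwines $H_{\iota(K)}$ with $H_{\iota(K)}^m$. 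Hence $R(z)(H_{\iota(K)}-z)\Psi=\sum_m(H_{\iota(K)}^m-z)^{-1}(H_{\iota(K)}^m-z)\psi_m\,\phi_m=\Psi$, so $R(z)$ is a bounded left inverse and $H_{\iota(K)}-z$ is bounded below, with closed range.

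For surjectivity I would pass to the adjoint. By Theorem~\ref{OpForm}.3 the operator $H_{\iota(K)}^*$ is again a member of the same family (with $\alpha,\phi$ sign-reversed), so the same reduction applies to it with transverse parts $(H_{\iota(K)}^m)^*$; and since $z\in\varrho(H_{\iota(K)}^m)$ is equivalent to $\bar z\in\varrho((H_{\iota(K)}^m)^*)$, we have $\Ker((H_{\iota(K)}^m)^*-\bar z)=\{0\}$ for every $m$. Any $\Phi\in\Ran(H_{\iota(K)}-z)^\perp=\Ker(H_{\iota(K)}^*-\bar z)$ then has all its Fourier modes in these trivial kernels, forcing $\Phi=0$. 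Thus $\Ran(H_{\iota(K)}-z)$ is dense as well as closed, so $z\in\varrho(H_{\iota(K)})$ and $R(z)=(H_{\iota(K)}-z)^{-1}$, which is precisely \eqref{resolvent.dec}. I expect the reduction property of the third paragraph to be the main obstacle: one must rigorously justify that taking the $m$-th Fourier coefficient maps $\Dom(H_{\iota(K)})$ into $\Dom(H_{\iota(K)}^m)$ and commutes with the operators, leaning on the $W^{2,2}$-regularity of the domain and the $x_1$-independence of both coefficients and boundary data. The adjoint route to surjectivity is the device that spares us the alternative of proving directly that $\Ran(R(z))\subset\Dom(H_{\iota(K)})$, which would demand delicate summability estimates on the transverse derivatives of the components $(H_{\iota(K)}^m-z)^{-1}\psi_m$.
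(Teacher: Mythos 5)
Your proof is sound, but it runs in the opposite direction from the paper's. The paper proves surjectivity \emph{directly}: it sets $U_m:=(H_{\iota(K)}^m-z)^{-1}\psi_m$, derives from Lemma~\ref{Num.range} the quantitative bounds $\|U_m\|\le C\|\psi_m\|/(m^2+1)$ and $\|U_m'\|\le C\|\psi_m\|/\sqrt{m^2+1}$, sums the series $R=\sum_m\phi_m U_m$ in $W^{1,2}(\Omega_0)$, checks periodicity in $x_1$ by a trace argument, verifies that $R$ satisfies the weak identity $h_\iota(\varphi,R)-z(\varphi,R)_G=(\varphi,\Psi)_G$, and then invokes Lemma~\ref{Dom.Hcs} to upgrade $R$ to an element of $\Dom(H_{\iota(K)})$ with $(H_{\iota(K)}-z)R=\Psi$. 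You instead prove that $R(z)$ is a bounded \emph{left} inverse by exploiting the reduction of $H_{\iota(K)}$ by the projections $B^m$ on the operator domain, and then obtain surjectivity from $\Ran(H_{\iota(K)}-z)^\perp=\Ker(H_{\iota(K)}^*-\bar z)$ together with the fact that the adjoint belongs to the same family (Theorem~\ref{OpForm}.3) and is reduced by the same decomposition. What your route buys is exactly what you say at the end: you never need the derivative estimate \eqref{Um.est.2}, the $W^{1,2}$-summability of the series, or the regularity machinery of Lemma~\ref{Dom.Hcs} --- domain membership of $R(z)\Psi$ comes for free from bijectivity of the closed operator $H_{\iota(K)}-z$. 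What it costs is two auxiliary facts the paper does not state but which you must verify by the same integration-by-parts computations: (i) that taking the $m$-th Fourier coefficient really maps $\Dom(H_{\iota(K)})$ into $\Dom(H_{\iota(K)}^m)$ and intertwines the operators (you flag this yourself, and it does go through using the $W^{2,2}$-regularity, periodicity, and $x_1$-independence of $f$ and of the boundary data), and (ii) the one-dimensional analogue of Theorem~\ref{OpForm}.3, namely $\big(H_{\iota(K)}^m(\alpha,\beta)\big)^*=H_{\iota(K)}^m(-\alpha,\beta)$ (resp.\ $\phi\mapsto-\phi$), without which you cannot conclude that $\bar z$ lies in $\varrho\big((H_{\iota(K)}^m)^*\big)$ for every $m$ and hence that the transverse kernels of the adjoint are trivial. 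Both are routine, so I regard your argument as a correct alternative rather than a gap; the paper's construction has the side benefit of producing the explicit formula \eqref{resolvent.dec} along the way, whereas you recover it a posteriori from uniqueness of the inverse.
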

\begin{proof}
We give a proof for $H_{\II(+1)}^m$ only, the remaining cases are analogous.
Take $z \in D$,
for every
$\Psi \in L^2(\Omega,G)$ and $m\in \Z$, we define
\begin{equation}
U_m(x_2):=\big(H_{\II(+1)}^m-z \big)^{-1}\psi_m(x_2),
\end{equation}
where $\psi_m$ was introduced in \eqref{eimx1.dec.2}. It is clear that $U_m \in L^2(\Jd,\dd \nu_{(+1)})$.
With regard to Lemma \ref{Num.range}, take $m_0 \in \Z$ such that for every $m > m_0,$ $z \notin \Xi_{\iota(K)}^m$. Using \cite[Thm. V.2.3]{Kato} together with Lemma \ref{Num.range},
we get for $m>m_0$
\begin{equation}\label{Um.est.1}
\|U_m\|_{L^2(\Jd)} \leq C_1 \frac{\|\psi_m\|_{L^2(\Jd)}}{m^2+1},
\end{equation}
where $C_1$ is a constant independent of $m$,
nonetheless depending on $z, |\alpha|,|\beta|,$ and $a$. Let us remark that since $z \in D$, $\|U_m\|_{L^2(\Jd)}$ are bounded for finitely many $m$ smaller than $m_0$.
From the identity
\begin{eqnarray*}
\|U_m'\|_+^2+(-\ii \alpha + \beta)\cos a |U_m(a)|^2
+ (-\ii \alpha - \beta)\cos a |U_m(-a)|^2 \\
+ (v^m U_m, U_m)_+ - \overline{z} \|U_m\|_+^2 = (\psi_m, U_m)_+ \,,
\end{eqnarray*}
with $v^m$ and $(\cdot,\cdot)_+$ defined in Lemma \ref{Num.range}, we obtain the estimate for the norm of $U_m'$ for $m>m_0$,
\begin{equation}\label{Um.est.2}
\|U_m'\|_{L^2(\Jd)} \leq C_1 \frac{\|\psi_m\|_{L^2(\Jd)}}{\sqrt{m^2+1}}
\end{equation}
Again, for finitely many $m\leq m_0$, $\|U_m'\|_{L^2(\Jd)}$ are clearly bounded.
With regard to \eqref{eimx1.dec.3}, \eqref{Um.est.1}, and \eqref{Um.est.2},
every function
$R_m(x_1,x_2):=\phi_m(x_1)U_m(x_2)$
belongs to~$W^{1,2}_{\rm per}(\Omega_0)$.

Our goal is to show that $R:=\sum_{m\in \Z} R_m$ is in
$W^{1,2}_{\rm per}(\Omega_0)$ as well.

The finite number of bounded terms with $m\leq m_0$
is included in the following estimates and equalities without any other specific comments.
The identity~\eqref{eimx1.dec.3} and inequality~\eqref{Um.est.1}
together with Fubini's theorem imply
\begin{equation*}
\left\| \sum_{m\in \Z} R_m \right\| \leq C_2 \|\Psi\|.
\end{equation*}
A similar estimate can be obtained for
$\partial_2 R_m $ provided that we use the inequality~\eqref{Um.est.2}.
For $\partial_1 R_m$, we have
\begin{equation*}
\left\|\sum_{m=-N}^N \partial_1 R_m \right\|^2 = \sum_{m=-N}^N m^2 \|U_m\|_{L^2(\Jd)}^2 \leq C_1^2 \sum_{m=-N}^N \frac{m^2}{m^2+1} \|\psi_m\|^2_{L^2(\Jd)},
\end{equation*}
where we used the inequality \eqref{Um.est.1}.
The fraction in the sum on the right hand side is bounded,
therefore, using the Parseval equality,
the limit $\sum_{m \in \Z} \partial_1 R_m$ remains in~$L^2(\Omega_0)$.
We conclude that~$R$ belongs to
$W^{1,2}(\Omega_0)$ and
\begin{equation*}
\|R\|_{W^{1,2}(\Omega_0)} \leq C_3 \|\Psi\|_{L^2(\Omega_0)}.
\end{equation*}

It remains to verify that $R$ belongs to $W^{1,2}_{\rm per}(\Omega_0)$.
We introduce the partial sum $R_N:=\sum_{m=-N}^N R_m$.
The fact that $R_N \in W^{1,2}_{\rm per}(\Omega_0)$ for every $N \in \N$ and
the (trace) embedding of $W^{1,2}(\Omega_0)$ in $L^2(\partial \Omega_0)$ yields
\begin{align*}
\big|\big(\varphi,R(-l,\cdot)-R(l,\cdot)\big)_{+}\big|
&= \big|\big(\varphi,R(-l,\cdot)-R_N(-l,\cdot)+R_N(-l,\cdot)-R(l,\cdot)\big)_{+}\big|
\\
&\leq 2 \,C_4 \,\|\varphi\|_{+} \, \|R-R_N\|_{W^{1,2}(\Omega_0)}
\end{align*}
for every $\varphi \in L^2(\Jd,\dd \nu_{(+1)})$;
$C_4$ is a constant depending only on $\Omega_0$.
Notice that the left hand side does not depend on~$N$.
Hence, the periodicity of $R$ is justified
by taking the limit $N\rightarrow +\infty$
and considering the arbitrariness of $\varphi$.

Now, knowing that~$R$ belongs to $W^{1,2}_{\rm per}(\Omega_0)$,
one can easily check that
\begin{equation*}
\forall \varphi \in W^{1,2}_{\rm per}(\Omega_0)\,, \qquad
 h_{\II}(\varphi,R)-z (\varphi,R)_{L^2(\Omega_0,G)}
 = (\varphi,\Psi)_{L^2(\Omega_0,G)} \,.
\end{equation*}
This implies that $R\in \Dom(H_{\II(+1)})$, see Lemma \ref{Dom.Hcs}, and $(H_{\II(+1)}-z)R=\Psi$.
\end{proof}
Proposition \ref{decomposition} has the important consequence for the spectrum of~$H_{\iota(K)}$.

\begin{corollary}\label{Corol.Davies}
$$
  \sigma\big(H_{\iota(K)}\big)
  = \bigcup_{m \in \mathbb{Z}} \sigma\big(H_{\iota(K)}^m\big)
$$
\end{corollary}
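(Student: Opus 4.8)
The plan is to read the two inclusions of the asserted identity separately out of Proposition~\ref{decomposition}, which already contains the substantive content.

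For the inclusion $\sigma(H_{\iota(K)})\subseteq\bigcup_{m\in\Z}\sigma(H_{\iota(K)}^m)$ nothing beyond Proposition~\ref{decomposition} is needed. Since $D=\bigcap_{m\in\Z}\varrho(H_{\iota(K)}^m)\subseteq\varrho(H_{\iota(K)})$, taking complements in~$\C$ gives at once
\[
\sigma(H_{\iota(K)})=\C\setminus\varrho(H_{\iota(K)})\subseteq\C\setminus D=\bigcup_{m\in\Z}\sigma(H_{\iota(K)}^m).
\]

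For the reverse inclusion I would fix $m\in\Z$ and prove $\sigma(H_{\iota(K)}^m)\subseteq\sigma(H_{\iota(K)})$. First I observe that each transverse operator has purely discrete spectrum consisting of eigenvalues: its domain is contained in $W^{2,2}(\Jd)$, and for any $z\in D$ the resolvent $(H_{\iota(K)}^m-z)^{-1}$ maps $L^2(\Jd,\dd\nu_{(K)})$ into $\Dom(H_{\iota(K)}^m)\subset W^{2,2}(\Jd)$, which embeds compactly into $L^2(\Jd,\dd\nu_{(K)})$ by the Rellich theorem on the bounded interval $\Jd$; hence $H_{\iota(K)}^m$ has compact resolvent ($\varrho(H_{\iota(K)}^m)\supseteq D\neq\emptyset$). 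Thus every $\lambda\in\sigma(H_{\iota(K)}^m)$ is an eigenvalue with some eigenfunction $\psi_m\in\Dom(H_{\iota(K)}^m)$. I then lift it by setting $\Psi:=\phi_m\otimes\psi_m$, that is $\Psi(x_1,x_2):=\phi_m(x_1)\psi_m(x_2)$ with $\phi_m$ as in~\eqref{eimx1.dec.2}. Because $\phi_m$ is smooth, bounded together with its derivatives, and periodic, while $\psi_m$ obeys the transverse conditions $(\ref{BC}\iota)$, the product lies in $W^{2,2}(\Omega_0)$ and satisfies both~\eqref{PerBC} and $(\ref{BC}\iota)$, so $\Psi\in\Dom(H_{\iota(K)})$. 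Using $\phi_m''=-m^2\phi_m$ in the explicit form~\eqref{DefOpKP}, the action of $-\cosh^{-2}x_2\,\partial_1^2$ (resp.\ $-\partial_1^2$, $-\cos^{-2}x_2\,\partial_1^2$) on $\Psi$ reproduces exactly the centrifugal term $m^2/\cosh^2x_2$ (resp.\ $m^2$, $m^2/\cos^2x_2$) of $H_{\iota(K)}^m$, so that $H_{\iota(K)}\Psi=\phi_m\otimes(H_{\iota(K)}^m\psi_m)=\lambda\,\Psi$. Hence $\lambda\in\sigma(H_{\iota(K)})$, and as $m$ and $\lambda$ are arbitrary the reverse inclusion follows.

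Combining the two inclusions yields the asserted equality. I do not anticipate any genuine obstacle: the inclusion $\subseteq$ is immediate from Proposition~\ref{decomposition}, and the only points requiring a little care are the reduction of $\sigma(H_{\iota(K)}^m)$ to eigenvalues (via compactness of the transverse resolvent) and the verification that the separated/periodic boundary data decouple, so that the tensor-product function $\Psi=\phi_m\otimes\psi_m$ genuinely lies in $\Dom(H_{\iota(K)})$ — this is precisely where the $x_1$-independence of both the metric coefficient and the boundary conditions is used.
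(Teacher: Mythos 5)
Your proposal is correct and follows the same route as the paper: the inclusion $\sigma(H_{\iota(K)})\subseteq\bigcup_m\sigma(H_{\iota(K)}^m)$ is read off from Proposition~\ref{decomposition} by taking complements, and the reverse inclusion is the direction the paper dismisses as ``trivial'', which you rightly justify by noting that each $H_{\iota(K)}^m$ has compact resolvent, so its spectrum consists of eigenvalues that lift to eigenvalues of $H_{\iota(K)}$ via the tensor product $\phi_m\otimes\psi_m$. No gaps.
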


\begin{proof}
The inclusion $\displaystyle \sigma\big(H_{\iota(K)}\big) \subset \bigcup_{m \in \mathbb{Z}} \sigma\big(H_{\iota(K)}^m\big)$ follows from Proposition \ref{decomposition}, the other one is trivial.
\end{proof}

\begin{remark}
Notice that the statement of Corollary~\ref{Corol.Davies}
relating the spectra of a direct sum of operators
with their individual spectra does not hold in general
(\cf~\cite[Thm.~8.1.12]{Davies_2007}).
In our case, however, we have been able to prove the result
due to the compactness of resolvents
and additional information about the behaviour
of the numerical ranges of~$H_{\iota(K)}^m$ (\cf~Lemma~\ref{Num.range}).
\end{remark}

\subsection{Similarity to self-adjoint or normal operators}

We proceed with an analysis of $H_{\iota(K)}^m$.
For sake of simplicity, we drop the subscript~2 of the $x_2$ variable in the sequel.
We remark that \mbox{$\PT$-symmetry}
and \mbox{$\P$-pseudo-Hermiticity} of~$H_{\iota(K)}^m$ is preserved
with~$\P$ and~$\T$
being naturally restricted to~$L^2(\Jd,\dd \nu_{(K)})$.

The operators $H_{\iota(K)}^m$ are neither self-adjoint nor normal,
nevertheless we can show the following general result:
\begin{theorem}\label{Riesz.basis}
For every $m \in \Z$ and $K\in\{-1,0,1\}$:
\begin{enumerate}
\item The families of operators $H_{\II(K)}^m(\alpha,\beta),
$ $H_{\I(K)}^m(b, c, \phi)$ are holomorphic with respect to parameters
$\alpha, \beta,$ and $b, c, \phi$  entering the boundary conditions.
\item The spectrum of $H_{\iota(K)}^m$ is discrete consisting of simple eigenvalues
(\ie, the algebraic multiplicity being one),
except of finitely many eigenvalues of algebraic multiplicity two
and geometric multiplicity one that can appear
for particular values of $\alpha, \beta$ and $b, c, \phi$.
\item If all the eigenvalues are simple, then
\begin{enumerate}
\item[$a)$] the eigenvectors of $H_{\iota(K)}^m$ form a Riesz basis,
\item[$b)$] $H_{\iota(K)}^m$ is similar to a normal operator,
\ie, there exists a bounded operator $\varrho$
with bounded inverse such that $\varrho H_{\iota(K)}^m\varrho^{-1}$ is normal,
\item[$c)$] if moreover all eigenvalues are real,
then $H_{\iota(K)}^m$ is similar to a self-adjoint operator,
\ie, $\varrho H_{\iota(K)}^m\varrho^{-1}$ is self-adjoint.
\end{enumerate}
\item Let us denote by
$\big\{\psi_{i,m}\big\}_{i\in \N}$
the eigenfunctions of $H_{\iota(K)}^m$.
The set of eigenfunctions
$\mathscr{B}:=\big\{\phi_m \psi_{i,m} \big\}_{m\in \Z,i\in \N}$,
where $\phi_m$ were introduced in \eqref{eimx1.dec.2},
forms a Riesz basis of $L^2(\Omega_0,G)$.
\end{enumerate}
\end{theorem}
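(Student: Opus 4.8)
The plan is to pass to the tensor-product picture of $L^2(\Omega_0,G)$ and thereby reduce the statement to a uniform-in-$m$ version of the transverse Riesz basis property supplied by item~3 above. Because $\kappa=0$ and $K$ is constant, the density $|G|^{1/2}=f_{(K)}$ in \eqref{G.ip} depends on $x_2$ only (see \eqref{Jacobi-const}), so the $G$-inner product splits and $L^2(\Omega_0,G)$ is unitarily equivalent to $L^2(\Jj,\dd x_1)\otimes L^2(\Jd,\dd\nu_{(K)})$, with $\phi_m\psi_{i,m}$ corresponding to $\phi_m\otimes\psi_{i,m}$. In this picture $\{\phi_m\}_{m\in\Z}$ is an orthonormal basis of the first factor, while for each fixed $m$ item~3 gives, in the generic case of simple eigenvalues, a Riesz basis $\{\psi_{i,m}\}_{i\in\N}$ of the second factor.

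First I would dispose of completeness: any $F\in L^2(\Omega_0,G)$ decomposes by \eqref{eimx1.dec.1}--\eqref{eimx1.dec.2} as $F=\sum_{m}\phi_m\,g_m$ with $g_m\in L^2(\Jd,\dd\nu_{(K)})$, and each $g_m$ lies in the closed span of $\{\psi_{i,m}\}_i$; hence $F$ lies in the closed span of $\mathscr B$. The quantitative heart is the identity, valid for every finitely supported array $(c_{m,i})$ by orthonormality of the $\phi_m$,
\[
\Big\|\sum_{m,i}c_{m,i}\,\phi_m\psi_{i,m}\Big\|_G^2
=\sum_{m\in\Z}\Big\|\sum_i c_{m,i}\,\psi_{i,m}\Big\|_{L^2(\Jd,\dd\nu_{(K)})}^2 ,
\]
obtained by integrating out $x_1$. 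Writing $A_m,B_m$ for the lower and upper Riesz bounds of $\{\psi_{i,m}\}_i$, the right-hand side is trapped between $(\inf_m A_m)\sum_{m,i}|c_{m,i}|^2$ and $(\sup_m B_m)\sum_{m,i}|c_{m,i}|^2$. Thus $\mathscr B$ is a Riesz basis of $L^2(\Omega_0,G)$ as soon as the transverse bounds are uniform, i.e. $\inf_{m}A_m>0$ and $\sup_{m}B_m<\infty$.

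Proving this uniformity is the main obstacle, and it is here that the $\T$-self-adjointness singled out in the introduction is used. For $K=0$ it is immediate: $H^m_{\iota(0)}=-\partial_2^2+m^2$ differs from $H^0_{\iota(0)}$ only by the additive constant $m^2$, so the transverse eigenfunctions, and hence $A_m,B_m$, do not depend on $m$. For $K=\pm1$ I would exploit two uniform inputs. On the one hand, Proposition~\ref{PTpseudo}, restricted to the transverse fibre, shows each $H^m_{\iota(K)}$ is $\T$-self-adjoint, so the family $\{\overline{\psi_{i,m}}\}_i$ is, after normalization, exactly the biorthogonal system of adjoint eigenfunctions; consequently $A_m,B_m$ are governed by the condition numbers $\|\psi_{i,m}\|^2\big/\big|\int\psi_{i,m}^2\,\dd\nu_{(K)}\big|$ together with the eigenvalue separation. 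On the other hand, Lemma~\ref{Num.range} confines the whole numerical range of $H^m_{\iota(K)}$ to a parabolic sector whose constants $c_0,c_1$ are independent of $m$; since the non-self-adjoint part enters only through the ($m$-independent) boundary conditions and is uniformly form-subordinate to the ($m$-growing) self-adjoint part by Lemma~\ref{RelBound}, these condition numbers and gaps stay uniformly controlled.

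A standard Riesz-basis stability criterion of Bari--Markus type then yields $A_m,B_m$ bounded above and below uniformly for $|m|$ large, while the finitely many remaining fibres are handled by item~3 together with the holomorphic dependence of item~1, which keeps the bounds away from $0$ and $\infty$ on any finite set of $m$. Combining the two regimes gives the required uniform bounds and completes the proof. The delicate point I expect to defend most carefully is precisely the uniform control of the condition numbers and spectral gaps as $|m|\to\infty$; everything else is the routine tensor-product bookkeeping of the first two paragraphs.
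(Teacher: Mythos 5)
Your proposal proves only item~4 and treats items~1--3 as external inputs; in the paper these are themselves part of the proof (holomorphy of the families via Kato's theory of holomorphic families, and the simplicity and fibre Riesz-basis statements via the Naimark--Mikhajlov theory of strongly regular boundary conditions), so at the very least that machinery has to be invoked rather than assumed. For item~4 your tensor-product bookkeeping is correct: since $f_{(K)}$ depends on $x_2$ only, $L^2(\Omega_0,G)\cong L^2(\Jj)\otimes L^2(\Jd,\dd\nu_{(K)})$, the displayed norm identity holds, and $\mathscr{B}$ is a Riesz basis if and only if the fibre Riesz constants $A_m,B_m$ are uniformly bounded away from $0$ and $\infty$. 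This is a legitimate alternative reduction, and it is immediate for $K=0$.

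The genuine gap is exactly where you flag it: for $K=\pm1$ the uniformity of $A_m,B_m$ as $|m|\to\infty$ is asserted, not proven, and the ingredients you cite do not deliver it. Lemma~\ref{RelBound} bounds the two-dimensional boundary form relative to $h^1$ and says nothing fibre-wise; Lemma~\ref{Num.range} only locates the numerical range of $H^m_{\iota(K)}$ in a parabolic region and gives no control on the quantities that actually govern $A_m,B_m$, namely the $\T$-condition numbers $\|\psi_{i,m}\|^2\big/\big|\int_{\Jd}\psi_{i,m}^2\,\dd\nu_{(K)}\big|$ and the transverse spectral gaps. Note that those gaps do \emph{not} grow with $m$ --- only the common shift of order $m^2$ does --- so a Bari--Markus subordination argument is not automatic and would require genuine additional estimates. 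The paper sidesteps the uniformity question entirely: it proves completeness of $\mathscr{B}$ and then shows that, after normalization, $\mathscr{B}$ is orthonormal with respect to the conjugation $\P_1\T$, where $(\P_1\psi)(x_1,x_2):=\psi(-x_1,x_2)$, the cross terms for $m=n$, $i\neq j$ vanishing because $\T\psi_{j,m}$ is an eigenfunction of the adjoint, and the diagonal terms $\int_{\Jd}\psi_{j,m}^2\,\dd\nu_{(K)}$ being nonzero precisely because the eigenvalues are simple. If you keep your route, the uniform control of the condition numbers and gaps as $|m|\to\infty$ is a substantive missing step that must be supplied, not a detail to be defended later.
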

\begin{proof}
\emph{Ad~1.}
In view of \cite[Sect.~VII, Ex.~1.15]{Kato},
the Hamiltonians $H_{\II(K)}^m(\alpha,\beta)$,
considered as a family of operators depending
on parameters $\alpha, \beta$ entering boundary conditions, are holomorphic.
The same is true for $H_{\I(K)}^m(b, c, \phi)$.

\emph{Ad~2.}
The separated boundary conditions belong to the class of strongly regular boundary
conditions \cite{Naimark1967-LDOP1,Naimark1968-LDOP2}.
The connected $\PT$-symmetric boundary conditions are strongly regular
as well because $\theta_1=-b$, $\theta_{-1}=b$ (in Naimark's notation)
and~$b$ is non-zero by the assumption in~\eqref{BC}.
Moreover, all the eigenvalues are simple~\cite{Mikhajlov1962-3}
up to finitely many degeneracies that can appear:
eigenvalues with algebraic multiplicity two and geometric multiplicity one.

\emph{Ad~3.}
With regard to the strong regularity of boundary conditions,
the eigenfunctions of the Hamiltonian $H_{\iota(K)}^m$
form a Riesz basis \cite{Mikhajlov1962-3},
except the situations when the degeneracies appear.
The existence of Riesz basis implies the similarity
to a normal operator and as a special case the similarity
to a self-adjoint operator if the spectrum of $H_{\iota(K)}^m$ is real.

In more details, let $\big\{\psi_n\big\}_{n\in \N}$
be the Riesz basis of eigenvectors of $H_{\iota(K)}^m$,
\ie, $H_{\iota(K)}^m\psi_n=\lambda_n \psi_n$.
By definition, there exists a bounded operator~$\rho$
with bounded inverse such that $\big\{\rho \psi_n\big\}_{n\in \N}$
is an orthonormal basis that we denote by $\big\{e_n\big\}_{n\in \N}$.
Then
\begin{equation*}
\rho H_{\iota(K)}^m\rho^{-1}=\sum_{n\in\N} \lambda_n \, e_n (e_n,\cdot)_{L^2(\Omega_0,G)}
\end{equation*}
is a normal (self-adjoint if every $\lambda_n \in \R$) operator.

\emph{Ad~4.}
At first we show that $\mathscr{B}$ is complete, \ie,
$\mathscr{B}^{\perp}=\{0\}$.
Take $\omega \in \mathscr{B}^{\perp}$, \ie, for every $m\in\Z$, $i\in\N$,
\begin{eqnarray*}
0&=&\int_{\Omega_0}\overline{\phi_m(x_1) \psi_{i,m}(x_2)} \omega(x_1,x_2) \dd x_1 \dd \nu_{(K)}(x_2)  \nonumber \\
&=&\int_{\Jd} \overline{\psi_{i,m}}  \omega_m(x_2) \dd \nu_{(K)}(x_2),
\end{eqnarray*}
where
$\displaystyle \omega_m(x_2)
:=\int_{\Jj} \overline{\phi_m(x_1)} \omega(x_1,x_2) \dd x_1$.
Since $\big\{\psi_{i,m}\big\}_{i\in \N}$ forms a Riesz basis,
$\omega_m = 0$ a.e.\ in $L^2(\Jd,\dd \nu_{(K)})$
for every $m \in \Z$.
Since $\big\{\phi_m \big\}_{m\in\Z}$ is the~orthonormal basis of
$L^2(\Jj)$, $\omega=0$ in $L^2(\Omega_0,G)$.

Now we define an involution $(\P_1 \psi)(x_1,x_2):=\psi(-x_1,x_2)$.
We show that $\psi_{i,m}$
can be normalized in such way that $\mathscr{B}$ is $\P_1\T$-orthonormal,
\ie,
\begin{equation*}
\big(\phi_m\psi_{i,m},\P_1\T\phi_n\psi_{j,n}\big)_{L^2(\Omega_0,G)}
=\delta_{ij}\delta_{mn}.
\end{equation*}
Since $\P_1 \T \phi_m=\phi_m$, $\P_1\T$-orthogonality follows immediately
for $m\neq n$ because
$\phi_m$ are orthonormal in $(\cdot,\cdot)_{L^2(\Jj)}$ and $G$ is independent of~$x_1$.
For $m=n$ we have
\begin{equation}\label{P1T.og}
\big(\phi_m\psi_{i,m}, \P_1\T \phi_m \psi_{j,m}\big)_{L^2(\Omega_0,G)}
=\big(\psi_{i,m},\T \psi_{j,m}\big)_{L^2(\Jd,\dd \nu_{(K)})}
\,.
\end{equation}
If $i \neq j$, then the right hand side of \eqref{P1T.og} is zero
because $\T \psi_{j,m}$ is an~eigenfunction of $\big(H_{\iota(K)}^m\big)^*$.
Indeed, it is a general fact that
eigenfunctions of~$H$ and~$H^*$ corresponding to different eigenvalues
are orthogonal.
It remains to verify that if $i=j$, then the right hand side of \eqref{P1T.og}
does not vanish, \ie,
\begin{equation*}
\int_{\Jd}\psi_{j,m}^2(x_2) \, \dd \nu_{(K)}(x_2) \neq 0.
\end{equation*}
However, this is precisely the condition on $\lambda_{j,m}$
being a simple eigenvalue of $H_{\iota(K)}^m$.
It can be easily seen either directly
or it follows from \cite[Thm.~5]{Garcia-2008-179}.
\end{proof}

\begin{remark}
Notice that an additional symmetry with respect to $\P_1$ was essential in the proof.
The set of eigenfunctions $\mathscr{B}$ is not $\T$-orthonormal because the products
$\big(\phi_m\psi_{i,m},\T \phi_m \psi_{i,m}\big)_{L^2(\Omega_0,G)}$ vanish. This situation is typical for \mbox{$\T$-self-adjoint} operators with eigenvalues that are not simple \cite{Garcia-2008-179}.
\end{remark}

\subsection{Separated boundary conditions}
\label{SSsepBC}

At first, we investigate the Hamiltonians $H_{\II(0)}^m(\alpha,\beta)$.
Then $H_{\II(\pm 1)}^m(\alpha,0)$ are analysed.
These results together allow us to describe the remaining $\beta \neq 0$ case.

\subsubsection{Zero curvature}

As expected, the zero curvature case is the simplest
and it will serve as a reference model.
In fact, the corresponding one-dimensional eigenvalue problem
\begin{equation}\label{EV.zero}
\left\{
 \begin{aligned}
-\psi'' + m^2 \psi &= k^2 \psi \quad {\rm in\quad } (-a,a), \\
 \psi'(\pm a)+ (\ii \alpha \pm \beta)\, \psi(\pm a)&=0,
  \end{aligned}
\right.
\end{equation}
has been already studied previously in \cite{krejcirik-2006-39}.
Here we overtake the main results.
\begin{proposition}
The spectrum of $H_{\II(0)}^m(\alpha,0)$ is real for all $m\in\Z$.
The eigenvalues
$\lambda_{j,m}$ and eigenfunctions $\psi_{j,m}$
can be written in the following form, $m \in \Z$,
\begin{eqnarray}\label{H0EV}
\lambda_{j,m} &=& \begin{cases}
 \alpha^2+m^2& \mbox{if}\quad j=0 \,,
    \\
 k_j^2+m^2&\mbox{if}\quad j \geq 1 \, ,
\end{cases} \\
\psi_{j,m}(x) &=&
  \begin{cases}\displaystyle
    C_{0} \, \exp{(-\ii \alpha x)}
    & \mbox{if}\ j=0 \,,
    \\ \displaystyle
    C_{j} \left(
    \cos (k_j x) + \frac{k_j \sin (k_j a) - \ii \alpha \cos (k_j a)}{k_j \cos (k_j a) + \ii \alpha \sin (k_j a)} \, \sin (k_j x)
    \right)
    &\mbox{if}\ j \geq 1,
  \end{cases} \nonumber
\end{eqnarray}
where $k_j:= \frac{j\pi}{2a}$.
If $\alpha^2 \neq k_j^2$, \ie,
there is no level-crossing for the same $m$, then the operator is similar to a self-adjoint operator or, equivalently, it is \mbox{quasi-Hermitian}.
\end{proposition}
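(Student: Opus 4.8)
The plan is to reduce the statement to the explicit one-dimensional problem \eqref{EV.zero} with $\beta=0$, to read off the reality and the closed forms from the secular determinant, and to obtain the similarity assertion from Theorem~\ref{Riesz.basis}. First I would solve the ODE: writing $\lambda=:k^2+m^2$, the equation in \eqref{EV.zero} becomes $-\psi''=k^2\psi$ on $(-a,a)$, with general solution $\psi(x)=A\cos(kx)+B\sin(kx)$ for $k\neq0$. Imposing the two Robin conditions $\psi'(\pm a)+\ii\alpha\,\psi(\pm a)=0$ gives a homogeneous $2\times2$ system for $(A,B)$, whose determinant a short computation factorizes, up to a nonzero constant, as
\[
  (\alpha^2-k^2)\,\sin(2ka).
\]
Thus a nontrivial solution exists iff $k^2=\alpha^2$ or $\sin(2ka)=0$; the latter forces $2ka\in\pi\Z$, hence $k=k_j:=j\pi/(2a)$ is necessarily real, and the former branch is real as well. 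Every admissible $k^2=\lambda-m^2$ is therefore a non-negative real number, so the first branch yields $\lambda_{0,m}=\alpha^2+m^2$ and the second $\lambda_{j,m}=k_j^2+m^2$ with $j\geq1$. Since the spectrum is purely discrete by Theorem~\ref{Riesz.basis}.2, these eigenvalues exhaust $\sigma(H_{\II(0)}^m)$, which is hence real. Solving the $2\times2$ system for $B/A$ produces the eigenfunctions \eqref{H0EV}; on the branch $k^2=\alpha^2$ the ratio simplifies to $B/A=-\ii$, which recovers $\psi_{0,m}\propto e^{-\ii\alpha x}$.

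For the last assertion I would note that the eigenvalues $\{\alpha^2+m^2\}\cup\{k_j^2+m^2:j\geq1\}$ are mutually distinct precisely when $\alpha^2\neq k_j^2$ for all $j\geq1$ (the $k_j^2$ being already pairwise distinct). Under this no-level-crossing hypothesis each eigenvalue of $H_{\II(0)}^m(\alpha,0)$ is simple, so Theorem~\ref{Riesz.basis}.3 applies: the eigenfunctions form a Riesz basis and, all eigenvalues being real, the operator is similar to a self-adjoint one. The equivalence with quasi-Hermiticity is standard: if $\varrho H_{\II(0)}^m\varrho^{-1}=A=A^*$ with $\varrho$ bounded and boundedly invertible, then $\Theta:=(\varrho^*\varrho)^{-1}$ is bounded and positive with bounded inverse and satisfies $(H_{\II(0)}^m)^*=\Theta^{-1}H_{\II(0)}^m\Theta$, which is exactly \eqref{quasi}.

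The anticipated difficulties are bookkeeping rather than conceptual. One must be sure the eigenvalue list is complete, \ie~that no complex $\lambda$ is missed and that the degenerate case $k=0$ (equivalently $\lambda=m^2$) contributes no eigenfunction when $\alpha\neq0$; the clean factorization of the determinant settles both. One must also treat the coincidence $\alpha^2=k_j^2$ with care, since there both determinantal factors vanish simultaneously and one expects a genuine degeneracy---an eigenvalue of algebraic multiplicity two---which is exactly the situation excluded from the similarity claim. Alternatively, the spectral formulae may simply be imported from \cite{krejcirik-2006-39}, leaving only the application of Theorem~\ref{Riesz.basis}.
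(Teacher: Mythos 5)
Your proof is correct. Note, however, that the paper itself offers no proof of this proposition: it imports the result wholesale from \cite{krejcirik-2006-39} (``Here we overtake the main results''), where the model was solved explicitly, metric included. Your direct verification is therefore more self-contained than the paper's treatment, and it checks out: the characteristic determinant of the $2\times 2$ system for $(A,B)$ is indeed a nonzero multiple of $(\alpha^2-k^2)\sin(2ka)$, which is exactly the specialization of the paper's later secular equation \eqref{EVbeta} to $\beta=0$; since the zeros of $\sin$ are real and $\alpha$ is real, every admissible $k^2$ is real and non-negative, and the two factors produce precisely the list \eqref{H0EV}. You are also right to isolate the two delicate points: the case $k=0$, where the general solution is affine and yields no eigenfunction unless $\alpha=0$ (in which case it merges with the $k^2=\alpha^2$ branch), and the coincidence $\alpha^2=k_j^2$, where both factors of the determinant vanish simultaneously and one gets the algebraic-multiplicity-two, geometric-multiplicity-one degeneracy that Theorem~\ref{Riesz.basis}.2 permits and that is excluded from the similarity claim. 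The passage from Theorem~\ref{Riesz.basis}.3 to quasi-Hermiticity via $\Theta=(\varrho^*\varrho)^{-1}$ is consistent with the convention \eqref{quasi}. For completeness: an alternative route to the reality statement alone, closer in spirit to how the paper handles $\beta\neq 0$ in Proposition~\ref{ThmSpCyl}, is the gauge transformation $\phi(x)=e^{\ii\alpha x}\psi(x)$, which for $\beta=0$ yields $0=\Im\lambda\,\|\phi'\|^2$ and hence reality; but your determinant argument delivers the explicit eigenvalues and eigenfunctions in the same stroke, which is what the proposition actually asserts.
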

\begin{remark}
Closed formulae for the metric operator $\Theta$ for $H_{\II(0)}^m(\alpha,0)$
are presented in \cite{krejcirik-2006-39,krejcirik-2008-41a}.
The similarity transformation $\varrho$ can be found as $\varrho=\sqrt{\Theta}$
or as any other decomposition of the positive operator $\Theta=\varrho^*\varrho$.
\end{remark}
The $\alpha$-dependence of eigenvalues $\lambda$ for $m=0,1,2$ is plotted in Figure \ref{SEVcylinder}.
\begin{figure}
    \centering
        \includegraphics[width=0.60\textwidth]{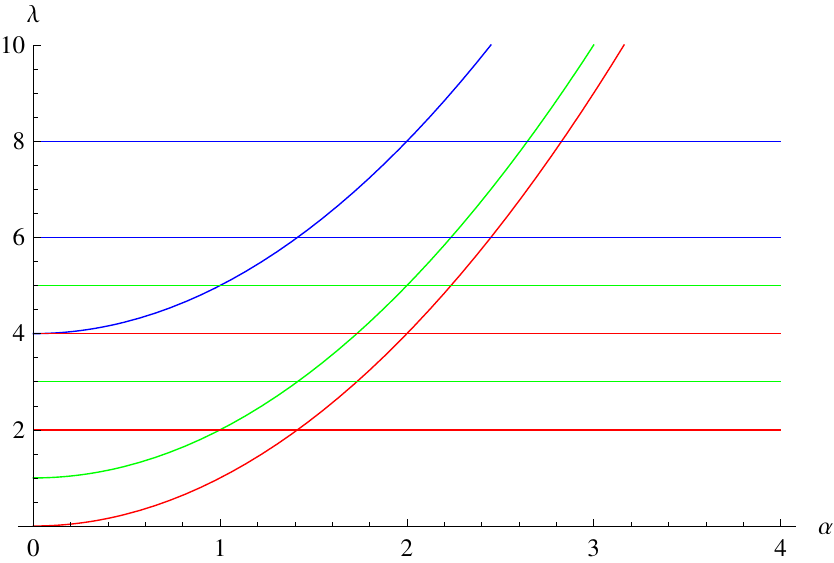}
    \caption{$\alpha$-dependence of eigenvalues, zero curvature, $a=\pi/4$. Red, green, and blue colour corresponds to $m=0,1,2$ respectively.}
    \label{SEVcylinder}
\end{figure}

The case of $\beta\neq 0$ is more complicated
and as it was remarked in \cite{krejcirik-2006-39},
the spectrum of $H_{\II(0)}^m(\alpha,\beta)$ can be complex.
More precise results follow from a further analysis,
not presented in~\cite{krejcirik-2006-39}.

\begin{proposition}
\
\begin{enumerate}
\item
If $\beta>0$, then the spectrum of $H_{\II(0)}^m(\alpha,\beta)$ is purely real for all $m\in\Z$ and $\alpha\in\R$.
\item
If $\beta<0$, then the spectrum of $H_{\II(0)}^0(\alpha,\beta)$
is either purely real or there is one pair of complex conjugated eigenvalues
with real part located in the neighbourhood of $\alpha^2+\beta^2$.
For fixed negative $\beta$, the points $\alpha_n$ where a pair of complex conjugated eigenvalues appears
(by increasing of $\alpha$) are determined by $\alpha_n^2+\beta^2=k_n^2$, where \mbox{$k_n^2:=((2n+1)\pi/4a)^2$}
for some $n\in \N$.
\end{enumerate}
\noindent
The eigenvalues $\lambda=k^2$ of $H_{\II(0)}^0(\alpha,\beta)$
are determined ($k= 0$ is admissible only if $\alpha=\beta=0$) by the equation
\eq{ (k^2-\alpha^2-\beta^2)\sin (2ka)-2\beta k\cos (2ka)=0.  \label{EVbeta}}
The corresponding eigenfunctions read
\eq{\psi(x) = C \left( \cos (k x) + \frac{k \sin (k a) - (\ii \alpha + \beta) \cos (k a)}{k \cos (k a)
+ (\ii \alpha + \beta) \sin (k a)} \, \sin (k x)   \right).    \label{H0EF}}
The eigenvalues of $H_{\II(0)}^m(\alpha,\beta)$ are obtained by adding $m^2$ to the eigenvalues of $H_{\II(0)}^0(\alpha,\beta)$.

\label{ThmSpCyl}
\end{proposition}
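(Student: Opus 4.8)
The plan is to analyze the one-dimensional operator $H_{\II(0)}^m(\alpha,\beta)$ explicitly, since the $m$-dependence enters only through the additive shift $m^2$ (the last sentence of the statement), so it suffices to treat $m=0$ and then add $m^2$. First I would set up the eigenvalue problem $-\psi''=k^2\psi$ on $(-a,a)$ with the boundary conditions $\psi'(\pm a)+(\ii\alpha\pm\beta)\psi(\pm a)=0$, write the general solution as a combination of $\cos(kx)$ and $\sin(kx)$, and impose the two boundary conditions. Eliminating the constants yields a secular (transcendental) equation; a short computation should reduce it to~\eqref{EVbeta}, namely $(k^2-\alpha^2-\beta^2)\sin(2ka)-2\beta k\cos(2ka)=0$, and the associated eigenfunction~\eqref{H0EF} follows by solving for the ratio of the two coefficients. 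The admissibility remark about $k=0$ is a direct check of the degenerate case.

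Next I would establish the reality statements. For~(1) with $\beta>0$, the key is the $\T$-self-adjointness (Proposition~\ref{PTpseudo}, restricted to the fibre) together with the structure of~\eqref{EVbeta}: I would argue that non-real roots must come in complex-conjugate pairs (Corollary~\ref{Complex.pairs}), so I need to exclude genuinely complex $k^2$. Writing $k^2=\lambda$ and separating the secular equation into real and imaginary parts, I expect that the sign condition $\beta>0$ forces the imaginary part to vanish, using monotonicity/convexity of the functions involved; a clean route is to examine $\Re h[\psi]$ and $\Im h[\psi]$ as in Lemma~\ref{Num.range} and show that for $\beta>0$ the quadratic form cannot produce a spectrum leaving the real axis. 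For~(2) with $\beta<0$ and $m=0$, I would track the motion of the real eigenvalues as $\alpha$ increases and locate the exact thresholds where two real roots collide and bifurcate into a conjugate pair; setting the discriminant-type condition at a double root of~\eqref{EVbeta} should give $\alpha_n^2+\beta^2=k_n^2$ with $k_n=(2n+1)\pi/(4a)$, and a perturbation analysis near the collision shows the real part of the emerging pair sits near $\alpha^2+\beta^2$.

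The main obstacle will be part~(2): proving that for $\beta<0$ there is \emph{at most one} complex-conjugate pair and pinning down the precise bifurcation points. Unlike the reality proof, which can lean on the form estimates, this requires a careful global study of the transcendental equation~\eqref{EVbeta}---counting real roots, showing exactly where a pair of real eigenvalues merges, and verifying that once they become complex they do not return or that no additional pairs split off. I would treat this by viewing~\eqref{EVbeta} as defining $\alpha^2$ implicitly as a function of $k$ and analyzing where this function fails to be monotone, so that two values of $k$ give the same $\alpha$; the collision points are exactly the critical points, which should reproduce the stated $k_n$. The localization of the real part near $\alpha^2+\beta^2$ then follows by expanding~\eqref{EVbeta} about the double root. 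The holomorphy of the family (Theorem~\ref{Riesz.basis}.1) guarantees that eigenvalues depend analytically on $\alpha$ away from crossings, which justifies the continuation argument.
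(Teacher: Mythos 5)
Your derivation of the secular equation \eqref{EVbeta} and the eigenfunctions \eqref{H0EF} matches the paper's (insert $A\cos(kx)+B\sin(kx)$ into the boundary conditions and require a nontrivial solution), and the reduction to $m=0$ is the same. The genuine gap is in part~(1). The route you call ``clean'' --- reading reality off $\Re h[\psi]$ and $\Im h[\psi]$ as in Lemma~\ref{Num.range} --- cannot work: for $\alpha\neq 0$ one has $\Im h[\psi]=\alpha\bigl(|\psi(a)|^2-|\psi(-a)|^2\bigr)$, which has no sign, so the numerical range is a genuine two--dimensional sector (that is exactly what Lemma~\ref{Num.range} asserts) and containment of the spectrum in it proves nothing about reality. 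The paper's proof needs a specific trick you are missing: first gauge away $\alpha$ from the boundary conditions via $\phi(x):=e^{\ii\alpha x}\psi(x)$, turning the problem into \eqref{CTrEVp} with \emph{real} Robin data $\beta$ but a first--order term $2\ii\alpha\phi'$ in the equation; then derive \emph{two} integral identities, one by pairing the equation with $\overline{\phi''}$ (giving \eqref{PrReal1}) and one by pairing with $\overline{\phi}$ (giving \eqref{PrReal2}). Neither identity alone settles the sign, but their combination eliminates the indefinite boundary term $\alpha\bigl(|\phi(a)|^2-|\phi(-a)|^2\bigr)$ and yields \eqref{PrReal3}, $0=\Im\lambda\,\bigl(\|\phi'\|^2+\beta(|\phi(a)|^2+|\phi(-a)|^2)+\beta^2\|\phi\|^2\bigr)$, whose bracket is strictly positive for $\beta>0$. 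Without this two--identity cancellation your argument for part~(1) does not close.

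For part~(2) your overall strategy (global study of \eqref{EVbeta}, locating where real roots disappear) is in the same spirit as the paper, which rewrites \eqref{EVbeta} as $\tan(2ka)=2\beta k/(k^2-\alpha^2-\beta^2)$ (equation \eqref{EVbeta2}) and counts intersections of the two graphs, identifying the loss of two intersections with the alignment of the pole of the right--hand side at $k_0=\sqrt{\alpha^2+\beta^2}$ with a pole of the tangent at $k_n=(2n+1)\pi/(4a)$. Be aware, however, that your proposed criterion --- a double root of \eqref{EVbeta}, i.e.\ a critical point of $\alpha^2$ as a function of $k$ --- is not the mechanism the paper uses and does not literally reproduce the stated threshold: when $\alpha^2+\beta^2=k_n^2$ one checks that $k=k_n$ is a root of \eqref{EVbeta} with $\partial_k$ of the left--hand side equal to $\pm 2k_n(1+2a\beta)\neq 0$ for generic $\beta$, so it is a simple root there. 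The paper's $\alpha_n$ comes from the asymptote--coincidence counting argument (admittedly itself somewhat heuristic, together with the holomorphy from Theorem~\ref{Riesz.basis} and the limiting Neumann case $\alpha=\beta=0$), not from a vanishing discriminant, so you would need to reconcile your bifurcation analysis with the claimed formula rather than expect it to drop out of the double--root condition.
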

\begin{proof}
We proceed in a similar way as in the alternative
proof \cite[Sect.~6.1]{krejcirik-2006-39}
of the reality of the spectrum of $H_{\II(0)}^0(\alpha,0)$.
The original eigenvalue problem \eqref{EV.zero} with $m=0$
can be transformed, using $\phi(x):=e^{\ii \alpha x} \psi(x) $, into
\begin{equation}\label{CTrEVp}
\left\{
\begin{aligned}
-\phi''+2\ii\alpha\phi'+\alpha^2\phi &=\lambda \phi \quad {\rm in\quad } (-a,a), \\
\phi'(\pm a) \pm \beta\phi(\pm a)&=0.
\end{aligned}
\right.
\end{equation}
We multiply the equation \eqref{CTrEVp} by $\overline{\phi''}$
and integrate over $(-a,a)$.
Next we multiply the complex conjugated version of the equation~\eqref{CTrEVp}
by $\phi''$ and again we integrate over $(-a,a)$. By subtracting the results and integrating by parts with use of the boundary conditions in \eqref{CTrEVp}, we obtain the identity
\eq{- \alpha \beta^2 \left( |\phi(a)|^2-|\phi(-a)|^2    \right) =
\Im \lambda \left( \|\psi'\|^2_{L^2(\Jd)}+\beta (|\phi(a)|^2+|\phi(-a)|^2) \right).  \label{PrReal1} }
If we perform the same procedure, however, with multiplication by $\overline{\phi}$, after some integration by parts we receive the relation
\eq{\alpha \left( |\phi(a)|^2-|\phi(-a)|^2    \right) = \Im \lambda \, \|\phi\|^2_{L^2(\Jd)}. \label{PrReal2} }
Combining \eqref{PrReal1} with \eqref{PrReal2} leads to the identity
\eq{0 = \Im \lambda \left( \|\phi'\|^2_{L^2(\Jd)}+
\beta (|\phi(a)|^2+|\phi(-a)|^2) +\beta^2 \|\phi\|^2_{L^2(\Jd)} \right). \label{PrReal3}}
If $\beta$ is positive, then the whole term in the brackets is strictly positive and thus imaginary part of $\lambda$ must be zero.
This proves the first item of the proposition.

If $\beta$ is negative, then complex eigenvalues can appear. If we divide the equation \eqref{EVbeta} by $k^2$ and leave only $\sin (2ka)$ term on the left hand side, then it is clear that eigenvalues approach $(n \pi/2a)^2$ for $k$ real and large enough. After simple algebraic manipulation \eqref{EVbeta} becomes
\eq{\tan (2ka)=\frac{2 \beta k}{k^2-\alpha^2-\beta^2}  \label{EVbeta2}}
and the eigenvalues correspond to the intersections of the graphs of functions on left and right hand side of \eqref{EVbeta2}.
We denote $l(k)$ the function on the left hand side,
$r(k)$ the one on the right hand side, and $k_0:=\sqrt{\alpha^2+\beta^2}$.
The behaviour of $r(k)$ for $k\in\R$ is summarized in Table \ref{r.behaviour}.
\begin{table}[ht]
\begin{center}
\begin{tabular}{|c|c|c|}
\hline
$k$                      & sign & asymptotics \\ \hline \hline
$(-\infty, -k_0)$ &  $r(k)>0,\ r'(k)>0,\ r''(k)>0$ & $ \lim_{k\rightarrow -{k_0}_- }r(k)=+\infty$\\
                        &                              &  $  \lim_{k\rightarrow -\infty}r(k)=0$ \\ \hline
$(-k_0, 0)$ & $r(k)<0,\ r'(k)>0,\  r''(k)<0$ & $\lim_{k\rightarrow -{k_0}_+ }r(k)=-\infty $\\ \hline
$(0,k_0)$   & $r(k)>0,\ r'(k)>0,\  r''(k)>0$ & $\lim_{k\rightarrow  {k_0}_- }r(k)= +\infty $\\ \hline
$(k_0,\infty)$ &     $r(k)<0,\ r'(k)>0,\ r''(k)<0$ & $\lim_{k\rightarrow {k_0}_- }r(k)=-\infty$ \\
                        &                              &       $\lim_{k\rightarrow \infty}r(k)=0$ \\ \hline
\end{tabular}
\caption{The behaviour of $r(k)$.}
\label{r.behaviour}
\end{center}
\end{table}

%
Graphs of functions $l(k)$ and $r(k)$ are plotted in Figure \ref{GraphBetaN}.
\begin{figure}[ht]
\begin{center}
\subfloat[$\alpha=0$]{\includegraphics[width =0.45\textwidth]{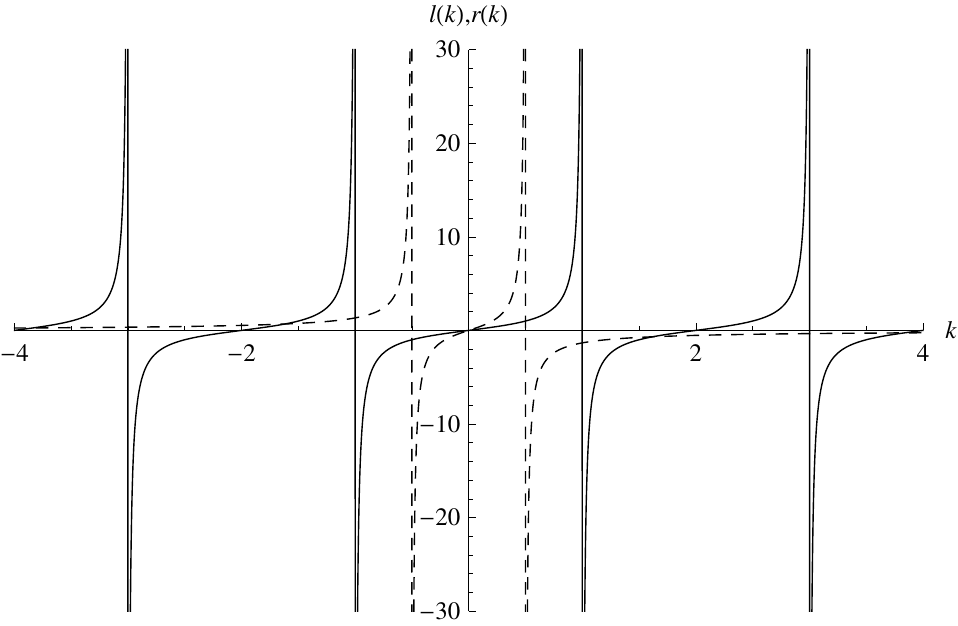}} \hspace{0.3cm}
\subfloat[$\alpha=3.1$]{\includegraphics[width =0.45\textwidth]{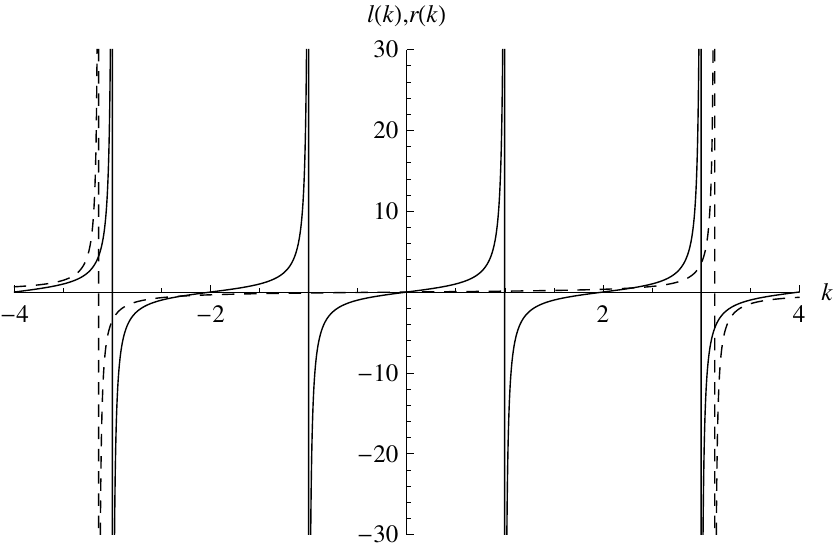}}
\caption{ Graphs of $l(k)$ (full line) and $r(k)$ (dashed line), $a=\pi/4$, $\beta=-0.5$.}
\label{GraphBetaN}
\end{center}
\end{figure}
It is clear from the holomorphic dependence of eigenvalues on $\alpha, \beta$
(a consequence of Theorem \ref{Riesz.basis})
that eigenvalues are close to $(n \pi/a)^2$,
corresponding to zeros of $l(k)=\tan (2ka)$, except those in the neighbourhood of $\alpha^2+\beta^2$.
Since $\alpha=\beta=0$ case corresponds to Neumann boundary conditions, for small $\alpha$ and $\beta$, all eigenvalues must be close to $(n \pi/2a)^2$.
Hence, if we fix $\beta$ and increase $\alpha$, then two intersections of graphs of $l(k)$ and $r(k)$ are ``lost"
precisely at the point where $\alpha_n^2+\beta^2=k_n^2$ for some $n\in\N$,
\ie, the asymptote of $r(k)$ corresponds to the asymptote of the tangent $l(k)$.
This implies the creation of complex conjugate pair of eigenvalues.
If we increase $\alpha$ more, two intersections appear again which means the annihilation of complex
conjugate pair, \ie, the restoration of two real eigenvalues. The two intersections are lost at the next critical value $\alpha_{n+1}$. Very rough estimates give the location of restoration of real eigenvalues in the interval
$(n\pi/2a,(2n+1)\pi/4a)$.

In view of the presented arguments, only one complex conjugated pair can appear in the
spectrum for fixed $\alpha$ and $\beta$ in the neighbourhood of $\alpha^2+\beta^2$,
and the other eigenvalues approach $(n \pi/2a)^2$ as the distance from $\alpha^2+\beta^2$ increases.
Moreover, for fixed $\beta$, the enlarging of $\alpha$
results into the shift of eigenvalues from almost Neumann ones $(n \pi/2a)^2, n\in \N$,
to Dirichlet ones $((n+1) \pi/2a)^2, n\in \N,$ for $\alpha$ large.

Finally, the equation for eigenvalues and eigenfunctions are found in a standard way.
The general solution $A\cos (kx) +B \sin (k x)$ of $-\psi''=k^2\psi$ is inserted into boundary conditions
\eqref{EV.zero} and the condition for existence of non-trivial solutions $A,B$ is
the eigenvalue equation \eqref{EVbeta}.
\end{proof}

Figures \ref{SEVCylBetaP}, \ref{SEVCylBetaN} represent
the $\alpha$-dependence of the first four eigenvalues
as obtained by a numerical analysis of~\eqref{EVbeta}.
The numerical results confirm the above described behaviour.
Let us remark that if $\beta$ is positive, then the graph of $r(k)$
is reflected by the $x$-axis and the effect of loosing intersections is not possible,
hence the spectrum remains real.
\begin{figure}[ht]
\begin{center}
\includegraphics[width=0.60\textwidth]{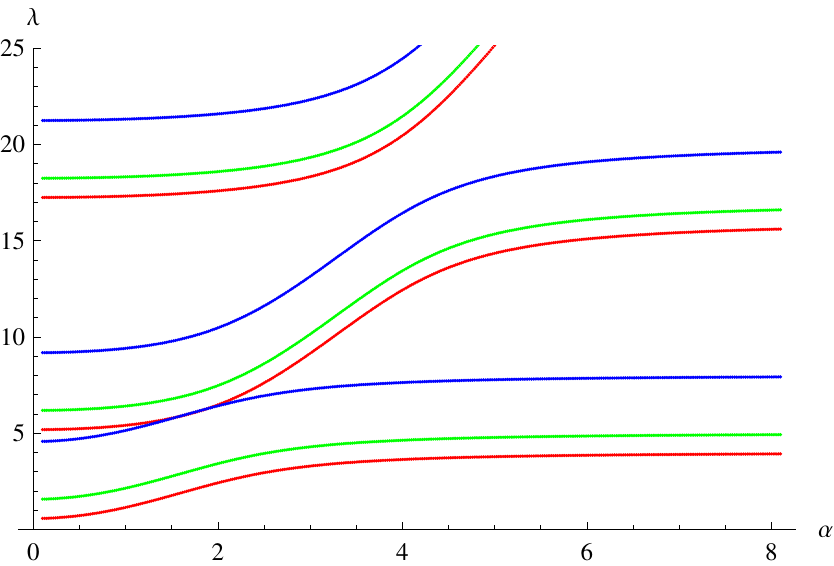}
\caption {$\alpha$-dependence of eigenvalues, zero curvature, $a=\pi/4$, $\beta=0.5$.
Red, green, and blue colour corresponds to $m=0,1,2$ respectively.}
\label{SEVCylBetaP}
\end{center}
\end{figure}
\begin{figure}[ht]
\begin{center}
\subfloat[Real part of $\lambda$]{\includegraphics[width =0.45\textwidth]{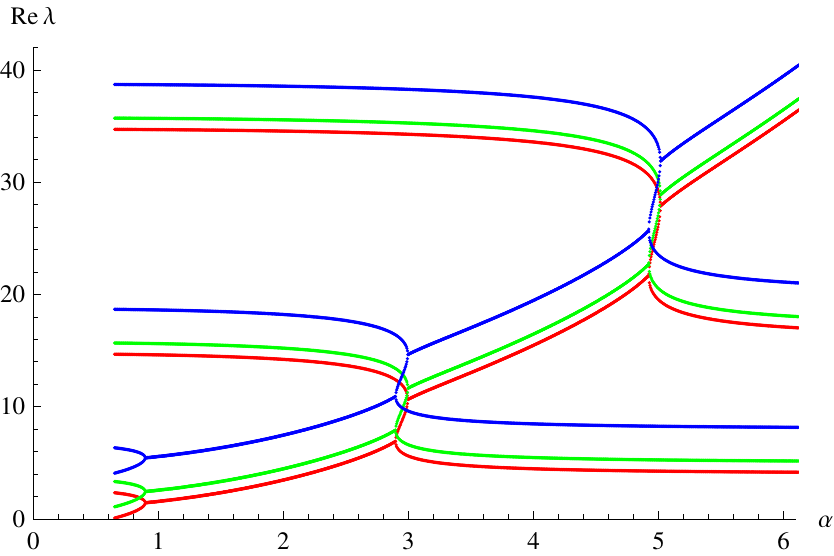}}
\subfloat[Imaginary part of $\lambda$]{\includegraphics[width =0.45\textwidth]{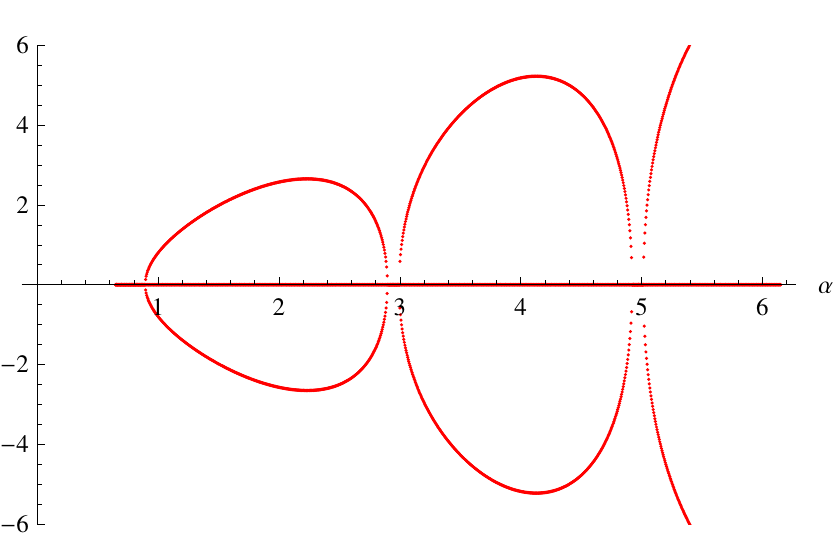}}
\caption {$\alpha$-dependence of eigenvalues, zero curvature, $a=\pi/4$, $\beta=-0.5$.
Red, green, and blue colour corresponds to $m=0,1,2$ respectively.}
\label{SEVCylBetaN}
\end{center}
\end{figure}

\subsubsection{Positive curvature}

The eigenvalue problem for the Hamiltonian $H_{\II(+1)}^m$ reads
\begin{equation}\label{SEVeq}
\left\{
\begin{aligned}
\displaystyle -\psi''(x) + \tan x \, \psi' (x) +\frac{m^2}{\cos^2x}\psi(x)
&=k^2 \psi(x) \quad {\rm in\quad } (-a,a), \\
\psi'(\pm a)+\ii \alpha \psi(\pm a)&=0.
\end{aligned}
\right.
\end{equation}
Solutions of \eqref{SEVeq} can be written down in terms of associated Legendre functions
$P^{(\mu)}_{\nu},Q^{(\mu)}_{\nu}$:
\eq{\psi(x)=C_1 \psi_1(x)+ C_2\psi(x) \equiv C_1 P^{(m)}_{\nu}(\sin x) + C_2 Q^{(m)}_{\nu}(\sin x),\label{SGS}}
where
\eq{\nu:=\frac{1}{2}\left(\sqrt{1+4 \lambda}-1 \right),   }
\eq{C_2 (\alpha  \psi_2(-a)-\ii \psi_2'(-a)) =C_1 (-\alpha  \psi_1(-a)+\ii \psi_1'(-a)) .\label{C2SSphere}}
Inserting the general solution \eqref{SGS} into boundary conditions in \eqref{SEVeq}
and consequent search for non-trivial constants $C_1,C_2$ yields the eigenvalue equation
\eq{\left|
\begin{array}{ll}
\psi_1'(a)+\ii \alpha \psi_1(a) &  \psi_2'(a)+\ii \alpha \psi_2(a) \\
\psi_1'(-a)+\ii \alpha \psi_1(-a) & \psi_2'(a)+\ii \alpha \psi_2(a)
\end{array} \right|=0. \label{SIEVeq}}

In order to analyse the spectrum in more details,
we transform the Hamiltonian $H_{\II(+1)}^m$ into a unitarily equivalent operator
of a more convenient form. The proof of the lemma is a straightforward calculation.
\begin{lemma}\label{U.eq.pos}
The unitary mapping $U_{(+1)}: L^2(\Jd,\dd x) \rightarrow L^2(\Jd,\dd\nu_{(+1)})$
\eq{\left(U_{(+1)}\psi\right)(x):= (\cos x )^{-\frac{1}{2}}  \,  \psi(x)\label{STrS}}
transforms $H_{\II(+1)}^m(\alpha,0)$ to
\eq{U_{(+1)}^{-1} H_{\II(+1)}^m(\alpha,0)U_{(+1)}
= H_{\II(0)}^0(\alpha,\mbox{$\frac{1}{2}$} \tan a )+V_{(+1)}^m, \label{TrHp} }
where
\eq{V_{(+1)}^m(x):=\frac{8 m^2-3-\cos 2x}{8 \cos^2 x}. \label{Vmp}}
\end{lemma}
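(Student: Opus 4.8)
The statement is verified by a direct computation, and the plan is to organize it around the logarithmic derivative of the conjugating weight, which makes both the cancellation of the first-order term and the appearance of the specific coupling $\beta=\frac12\tan a$ transparent. Writing $w(x):=(\cos x)^{-1/2}$, the unitarity of $U_{(+1)}$ is immediate, since $\|U_{(+1)}\psi\|_{L^2(\Jd,\dd\nu_{(+1)})}^2=\int_{-a}^{a}|w|^2|\psi|^2\cos x\,\dd x=\|\psi\|_{L^2(\Jd,\dd x)}^2$; the weight $w$ is strictly positive and bounded together with $w^{-1}$ on $[-a,a]$ because $a<\pi/2$, so both $U_{(+1)}$ and $U_{(+1)}^{-1}$ preserve $W^{2,2}(\Jd)$. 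It therefore suffices to compare the differential expressions and the boundary conditions. The single identity driving everything is $w'/w=\frac12\tan x$, from which $w''/w=\frac12\sec^2 x+\frac14\tan^2 x$.

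First I would conjugate the differential action. Applying $H_{\II(+1)}^m=-\partial^2+\tan x\,\partial+m^2\cos^{-2}x$ to $w\psi$ and multiplying by $w^{-1}$, the coefficient of $\psi'$ collects to $-2(w'/w)+\tan x=-\tan x+\tan x=0$, so the transformed operator is again of pure $-\partial^2$ type plus a multiplication operator. Collecting the zeroth-order terms gives the potential $-w''/w+(\tan x)(w'/w)+m^2\cos^{-2}x=-\frac12\sec^2 x+\frac14\tan^2 x+m^2\sec^2 x$; rewriting $\tan^2 x=\sec^2 x-1$ turns this into $(m^2-\frac14)\sec^2 x-\frac14$, and a short manipulation using $\cos 2x=2\cos^2 x-1$ shows this equals $(8m^2-3-\cos 2x)/(8\cos^2 x)$, which is exactly $V_{(+1)}^m$. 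Hence $U_{(+1)}^{-1}H_{\II(+1)}^m U_{(+1)}$ acts as $-\partial^2+V_{(+1)}^m$, matching the action of $H_{\II(0)}^0+V_{(+1)}^m$.

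Finally I would transform the boundary conditions, which is the step that pins down the value $\beta=\frac12\tan a$. If $\phi=w\psi$, then $\phi'(\pm a)=w(\pm a)\big[\psi'(\pm a)+\frac12\tan(\pm a)\,\psi(\pm a)\big]$, so the Robin-type condition $\phi'(\pm a)+\ii\alpha\phi(\pm a)=0$ defining $\Dom(H_{\II(+1)}^m)$ is equivalent, after dividing by $w(\pm a)\neq 0$ and using $\tan(-a)=-\tan a$, to $\psi'(\pm a)+(\ii\alpha\pm\frac12\tan a)\psi(\pm a)=0$. These are precisely the separated conditions \eqref{SepBC} with coupling functions $\alpha$ and $\beta=\frac12\tan a$, i.e.\ the boundary conditions of $H_{\II(0)}^0(\alpha,\frac12\tan a)$. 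Combined with the domain preservation noted above, this identifies the two operators together with their domains and completes the proof. There is no genuine obstacle here; the only place demanding care is the sign bookkeeping in the boundary term, where the odd weight-derivative $\frac12\tan(\pm a)$ is exactly what generates the $\pm\beta$ structure of the separated conditions with $\beta=\frac12\tan a$.
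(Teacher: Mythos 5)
Your computation is correct and is exactly the ``straightforward calculation'' that the paper's proof consists of (the authors state the lemma follows by direct computation and omit the details): the conjugation by $w=(\cos x)^{-1/2}$ with $w'/w=\tfrac12\tan x$ kills the first-order term, produces $(m^2-\tfrac14)\sec^2x-\tfrac14=V_{(+1)}^m$, and converts the Robin condition into the separated condition with $\beta=\tfrac12\tan a$. No discrepancy with the paper's argument; your write-up simply supplies the omitted details.
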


Equipped with the equivalent form of the Hamiltonian,
we prove the following result.
\begin{proposition}
For every $m\in\Z$ there exists a real number $\Lambda_{(+1)}^m$
such that all eigenvalues $\lambda$ with $\Re \lambda \geq \Lambda_{(+1)}^m$
are real and simple
(\ie\ the algebraic multiplicity being one).
The eigenvalues with $\Re \lambda < \Lambda_{(+1)}^m$ can be complex, ordered in complex conjugated pairs.

Eigenvalues are determined by equation \eqref{SIEVeq}
and eigenfunctions can be written in the form \eqref{SGS}
with~\eqref{C2SSphere}.
\label{SpectrumS}
\end{proposition}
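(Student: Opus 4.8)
The plan is to exploit the unitary equivalence of Lemma~\ref{U.eq.pos}, which reduces the spectral analysis of $H_{\II(+1)}^m(\alpha,0)$ to that of the operator $H_{\II(0)}^0(\alpha,\tfrac12\tan a)+V_{(+1)}^m$ acting in $L^2(\Jd,\dd x)$. Since $a<\pi/2$, the potential $V_{(+1)}^m$ from~\eqref{Vmp} is a bounded, real-valued and even function on $\Jd$; it is therefore a self-adjoint and $\P$-invariant multiplication operator, so that adjoining it preserves the $\T$-self-adjointness (\cf~Proposition~\ref{PTpseudo}) and the symmetry of the spectrum under complex conjugation (\cf~Corollary~\ref{Complex.pairs}). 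In particular, any non-real eigenvalues must occur in complex conjugate pairs, which already yields the last qualitative claim of the proposition.

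First I would split the form of the transformed operator into its real and imaginary parts, writing it as $S+\ii T$ in the form sense. Here $S$ is the self-adjoint operator $-\partial_x^2+V_{(+1)}^m$ with the real Robin boundary conditions $\psi'(\pm a)\pm\tfrac12\tan a\,\psi(\pm a)=0$ coming from the coupling $\beta=\tfrac12\tan a>0$, while the antisymmetric part $T$ is the boundary form $\psi\mapsto\alpha\big(|\psi(a)|^2-|\psi(-a)|^2\big)$ generated by the coupling $\ii\alpha$. The operator $S$ has purely discrete, real, simple spectrum $\{\mu_n\}$ with $\mu_n\to+\infty$ and gaps $\mu_{n+1}-\mu_n\to\infty$, as follows from the Sturm--Liouville asymptotics $\mu_n\sim(n\pi/2a)^2$; by the trace estimate~\eqref{W12.est} the perturbation $T$ is relatively form-bounded with respect to $S$ with relative bound zero.

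The heart of the proof is the high-energy regime. Scaling the antisymmetric part as $\ii t T$, $t\in[0,1]$, produces a holomorphic family in the sense of Kato (\cf~Theorem~\ref{Riesz.basis}.1). Using that $T$ displaces each eigenvalue only by a bounded amount --- the imaginary part of any eigenvalue equals $\alpha\big(|\psi(a)|^2-|\psi(-a)|^2\big)/\|\psi\|^2$, which stays bounded for the oscillatory high-energy eigenfunctions --- together with the unbounded growth of the gaps of $S$, I would fix a threshold $\Lambda_{(+1)}^m$ above which each eigenvalue is trapped, for every $t\in[0,1]$, inside a disk $D_n$ centred at $\mu_n$, of radius smaller than half the neighbouring gap and containing no other point of the spectrum. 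Each $D_n$ is symmetric about the real axis, and by $\T$-self-adjointness the trapped eigenvalue $\lambda_n(t)$ has $\overline{\lambda_n(t)}$ as an eigenvalue as well; since $D_n$ contains a single simple eigenvalue, necessarily $\overline{\lambda_n(t)}=\lambda_n(t)$, so $\lambda_n$ is real and simple. This furnishes the required $\Lambda_{(+1)}^m$, below which only finitely many eigenvalues remain; these may be complex but are paired by conjugation, and their possible double degeneracies are controlled by Theorem~\ref{Riesz.basis}.2.

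I expect the main obstacle to be the quantitative control in the high-energy step: one must verify that the perturbation shifts the eigenvalues by an amount that stays uniformly bounded, or at least grows strictly slower than the gaps $\mu_{n+1}-\mu_n$, so that the isolating disks $D_n$ genuinely persist for all $t\in[0,1]$. This hinges on a uniform bound for the boundary ratios $|\psi_n(\pm a)|^2/\|\psi_n\|^2$ of the high-energy eigenfunctions and on resolvent estimates that can be extracted from the numerical-range bound of Lemma~\ref{Num.range}. The remaining assertions are routine: the general solution~\eqref{SGS} follows by recognising the transverse equation in~\eqref{SEVeq} as the associated Legendre equation under the substitution $\sin x$, with $\nu(\nu+1)=\lambda$, and inserting it into the two separated boundary conditions yields the coefficient relation~\eqref{C2SSphere} and the secular determinant~\eqref{SIEVeq}.
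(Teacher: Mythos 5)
Your overall strategy --- transform via Lemma~\ref{U.eq.pos} and then isolate high-energy eigenvalues in conjugation-symmetric disks whose symmetry forces reality --- is in the right spirit, but you split the transformed operator the opposite way from the paper, and this is where a genuine gap appears. You take the self-adjoint part $S=-\partial_x^2+V_{(+1)}^m$ with real Robin data $\beta=\tfrac12\tan a$ as the unperturbed operator and treat the imaginary boundary coupling $\ii\alpha$ as the perturbation $T$. That perturbation is not a bounded operator; it is only relatively form-bounded with relative bound zero, and the trace estimate \eqref{W12.est} gives $|T[\psi]|\le\epsilon\,S[\psi]+C_\epsilon\|\psi\|^2$, which a priori allows the $n$-th eigenvalue to drift by $O(\epsilon\mu_n)$ --- faster than the gaps $\mu_{n+1}-\mu_n\sim\sqrt{\mu_n}$. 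Your justification that ``$T$ displaces each eigenvalue only by a bounded amount'' rests on the identity $\Im\lambda=\alpha\big(|\psi(a)|^2-|\psi(-a)|^2\big)/\|\psi\|^2$, but that only bounds the imaginary part of an eigenvalue that already exists; it says nothing about how far the eigenvalue has moved from $\mu_n$ (in particular about its real part), which is exactly what the disk-trapping argument needs. Closing this would require Birkhoff-type asymptotics for the characteristic determinant, or boundary-trace estimates for the resolvent of $S$ on circles around $\mu_n$ --- real work that you flag as the main obstacle but do not carry out.

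The paper's proof avoids the issue by choosing the other splitting: the unperturbed operator is the full non-self-adjoint zero-curvature Hamiltonian $H_{\II(0)}^0(\alpha,\tfrac12\tan a)$, whose spectrum is already known to be real because $\tfrac12\tan a>0$ (Proposition~\ref{ThmSpCyl}.1) and which is similar to a normal (indeed self-adjoint) operator by Theorem~\ref{Riesz.basis}, so its resolvent obeys $\|R(\lambda)\|\le C/\dist(\lambda,\sigma)$. The perturbation is then the genuinely bounded multiplication operator $V_{(+1)}^m$ of \eqref{Vmp}, which can move eigenvalues by at most $C\|V_{(+1)}^m\|$; since the gaps of the unperturbed real simple spectrum grow to infinity and complex eigenvalues can only be born in conjugate pairs after a collision (Corollary~\ref{Complex.pairs}), all but finitely many eigenvalues stay real and simple. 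If you wish to keep your route, you must first establish the $O(1)$ localization of the eigenvalues of $S+\ii T$ around those of $S$; with the paper's splitting that localization is immediate. The closing remarks on the Legendre solutions \eqref{SGS}, the coefficient relation \eqref{C2SSphere} and the secular equation \eqref{SIEVeq} are fine.
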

\begin{proof}
Let us consider the transformed Hamiltonian \eqref{TrHp}
and forget about the potential for a moment,
\ie, we understand the potential $V_{(+1)}^m$
as a perturbation of~$H_{\II(0)}^0(\alpha,\frac{1}{2}\tan a )$.
Since $\tan a$ is positive under the assumption $a < \pi/2$,
the reality of the spectrum is guaranteed by Proposition \ref{ThmSpCyl}.1.
The potential represents a bounded perturbation and it can shift
eigenvalues only by $C \|V_{(+1)}^m\|$.
Here the constant~$C$ comes from the estimate of the norm of the resolvent
$$\|R_{\II(0)}^0(\lambda)\|\leq \frac{C}{\Im \lambda} \,,$$
 which is valid for $H_{\II(0)}^0(\alpha,\frac{1}{2} \tan a )$
due to the similarity to a normal operator
(\cf~Theorem \ref{Riesz.basis}).
The separation distance $|\lambda_{n+1}-\lambda_n|$ of eigenvalues (ordered with respect to the real part)
of the unperturbed
operator $H_{\II(0)}^0(\alpha,\frac{1}{2} \tan a )$ grows to infinity and two eigenvalues must collide at first
to create a complex conjugate pair. Hence, the perturbed operator cannot have more than finitely many complex
eigenvalues. Recall that due to \mbox{$\PT$-symmetry}
(Corollary \ref{Complex.pairs})
the complex eigenvalues come in complex conjugated pairs.
\end{proof}
\begin{remark}\label{rem.pos}
In other words, we detected the effects of positive curvature. It~acts as the adding of real bounded potential $V_{(+1)}^m$ and  real ``$\beta$ like" term in the boundary conditions to the zero curvature Hamiltonian $H_{\II(0)}^0$. The positive $\frac{1}{2} \tan a$ term is decisive for the behaviour of the spectrum, the bounded potential $V_{(+1)}^m$ can affect substantially only the lowest eigenvalues. Nonetheless, we conjecture that the spectrum remain real for every $m\in \Z$.
\end{remark}

A numerical analysis of the equation~\eqref{SIEVeq}
for $\lambda=k^2$ is presented in Figure~\ref{SEVsphere}.
Obvious similarity with Figure~\ref{SEVCylBetaP} supports
the perturbative results.
\begin{figure}
    \centering
        \includegraphics[width=0.60\textwidth]{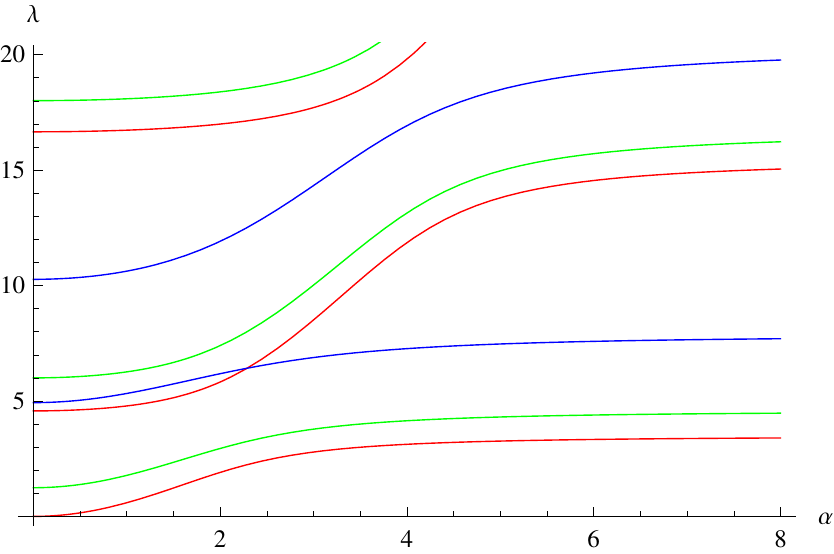}
    \caption{$\alpha$-dependence of eigenvalues, positive curvature, $a=\pi/4$. Red, green, and blue colour corresponds to $m=0,1,2$ respectively.}
    \label{SEVsphere}
\end{figure}

\subsubsection{Negative curvature}

The eigenvalue problem for the Hamiltonian $H_{\II(-1)}^m$ reads
\begin{equation}\label{PSEVeq}
\left\{
\begin{aligned}
-\psi''(x)-\tanh x\, \psi'(x) +\frac{m^2}{\cosh^2x} \psi(x)
&= k^2 \psi(x) \quad {\rm in \quad } (-a,a), \\
\psi'(\pm a)+\ii \alpha \psi(\pm a) &= 0.
\end{aligned}
\right.
\end{equation}
The solutions of~\eqref{PSEVeq} can be again
expressed via associated Legendre functions $P^{(\mu)}_{\nu},Q^{(\mu)}_{\nu}$,
but they have a little bit more complicated form then \eqref{SGS}:
\eq{\psi(x)=C_1 \psi_1(x)+ C_2\psi(x) \equiv C_1 \frac{P^{(\mu)}_{\nu}(\tanh x)}{\sqrt{\cosh x}}
+ C_2 \frac{Q^{(\mu)}_{\nu}(\tanh x)}{\sqrt{\cosh x}},\label{PSGS}}
where
\eq{\mu:=\ii m-\frac{1}{2}, \ \ \nu:=\frac{1}{2}\sqrt{1-4 \lambda}.   }
Relations between $C_1,C_2$ can be obtained from equation \eqref{C2SSphere}, however, with $\psi_1, \psi_2$ corresponding to the negative curvature solutions \eqref{PSGS};
the same is true for the eigenvalue equation \eqref{SIEVeq}.

To explain the behaviour of the spectrum in a deeper way,
we use the same strategy as in the positive curvature case.
The eigenvalue problem \eqref{PSEVeq} can be transformed by an analogous
unitary transformation leading to a modified zero curvature eigenvalue problem.
\begin{lemma}\label{TrPs}
The unitary mapping $U_{(-1)}: L^2(\Jd,\dd x) \rightarrow  L^2(\Jd,\dd\nu_{(-1)})$
\eq{\left(U_{(-1)}\psi\right)(x):=  (\cosh x) ^{-\frac{1}{2}} \, \psi(x)\label{STrPS}}
transforms $H_{\II(-1)}^m(\alpha,0)$ to
\eq{U_{(-1)}^{-1} H_{\II(-1)}^m(\alpha,0)U_{(-1)}= H_{\II(0)}^0(\alpha,-\frac{1}{2} \tanh a )+V_{(-1)}^m, \label{TrHm} }
where
\eq{V_{(-1)}^m(x):=\frac{8 m^2+3+\cosh 2x}{8 \cosh^2x}. \label{Vmm} }
\end{lemma}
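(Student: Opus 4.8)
The plan is to mirror the computation behind Lemma~\ref{U.eq.pos}, replacing the trigonometric weight by the hyperbolic one, and to verify three things in turn: that $U_{(-1)}$ is unitary, that the conjugation removes the first-order term and reproduces exactly the potential $V_{(-1)}^m$, and that the non-self-adjoint Robin condition is mapped to that of $H_{\II(0)}^0(\alpha,-\tfrac12\tanh a)$. First I would record that $U_{(-1)}$ is unitary: for $\psi\in L^2(\Jd,\dd x)$ one has $\|U_{(-1)}\psi\|_{\dd\nu_{(-1)}}^2=\int(\cosh x)^{-1}|\psi(x)|^2\cosh x\,\dd x=\|\psi\|_{\dd x}^2$, and since $U_{(-1)}$ is multiplication by a smooth, nonvanishing, bounded function with bounded inverse on the compact interval $\Jd$, it is a bijection whose inverse is multiplication by $(\cosh x)^{1/2}$; in particular it preserves $W^{2,2}(\Jd)$.

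Next, the core is a direct substitution. Writing $w:=(\cosh x)^{-1/2}$, the only identity I really need is $w'/w=-\tfrac12\tanh x$, whence $w'=-\tfrac12\tanh x\,w$ and $w''=\big(-\tfrac12\cosh^{-2}x+\tfrac14\tanh^2 x\big)w$. Putting $g=w\psi$ into $-g''-\tanh x\,g'+m^2\cosh^{-2}x\,g$ and multiplying by $w^{-1}=(\cosh x)^{1/2}$, the coefficient of $\psi'$ equals $-2w'/w-\tanh x=\tanh x-\tanh x=0$, so the first-order term cancels exactly (this is the whole purpose of the half-power weight). The coefficient of $\psi''$ is $-1$, as required, and the coefficient of $\psi$ becomes $-w''/w-\tanh x\,(w'/w)+m^2\cosh^{-2}x=\tfrac12\cosh^{-2}x+\tfrac14\tanh^2 x+m^2\cosh^{-2}x$.

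To identify this zeroth-order term with $V_{(-1)}^m$, I would put it over $8\cosh^2 x$ and use the elementary identity $2\sinh^2 x=\cosh 2x-1$: the numerator becomes $4+2\sinh^2 x+8m^2=8m^2+3+\cosh 2x$, which is exactly~\eqref{Vmm}. Hence the conjugated action is $-\psi''+V_{(-1)}^m\psi$, matching the action of $H_{\II(0)}^0$ together with the bounded potential $V_{(-1)}^m$. Finally, transforming the boundary conditions, from $g=w\psi$ and $g'=w'\psi+w\psi'$, evaluation at $\pm a$ with $w'(\pm a)=\mp\tfrac12\tanh a\,w(a)$ and $w(\pm a)=w(a)\neq0$ reduces $g'(\pm a)+\ii\alpha g(\pm a)=0$, after dividing by $w(a)$, to $\psi'(\pm a)+\big(\ii\alpha\mp\tfrac12\tanh a\big)\psi(\pm a)=0$; these are precisely the conditions of $H_{\II(0)}^0(\alpha,-\tfrac12\tanh a)$, the sign matching $\beta=-\tfrac12\tanh a$. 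Since $V_{(-1)}^m$ is bounded on $\Jd$, it leaves the operator domain unchanged, so the domains agree and~\eqref{TrHm} follows.

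As for difficulties, there are essentially none of a conceptual nature—the statement is a one-line unitary conjugation, and the proof is a straightforward calculation analogous to Lemma~\ref{U.eq.pos}. The only two places demanding care are the sign bookkeeping in the boundary term (one must obtain $-\tfrac12\tanh a$ rather than $+\tfrac12\tanh a$, which is exactly what distinguishes the negative-curvature case from the positive one), and the elementary hyperbolic identity used to cast the emerging potential into the closed form~\eqref{Vmm}.
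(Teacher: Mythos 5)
Your proposal is correct, and it is essentially the computation the paper has in mind: the paper omits the proof of this lemma entirely, having already declared the analogous positive-curvature statement (Lemma~\ref{U.eq.pos}) to be ``a straightforward calculation''. Your verification of unitarity, the cancellation of the first-order term via $w'/w=-\tfrac12\tanh x$, the identification of the zeroth-order term with $V_{(-1)}^m$ through $2\sinh^2 x=\cosh 2x-1$, and the sign bookkeeping yielding $\beta=-\tfrac12\tanh a$ in the transformed Robin condition all check out.
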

\begin{proposition}
For every $m\in\Z$ there exists a real number $\Lambda_{(-1)}^m$
such that all eigenvalues $\lambda$ with $\Re \lambda \geq \Lambda_{(-1)}^m$
are either real and simple (\ie~the algebraic multiplicity being one),
or there is one complex conjugated pair of eigenvalues with real part located in the neighbourhood of $\alpha^2+\beta^2$. The eigenvalues with $\Re \lambda < \Lambda_{(-1)}^m$ can be complex, ordered in complex conjugated pairs.

Eigenvalues are determined by equation \eqref{SIEVeq} with $\psi_1, \psi_2$ from \eqref{PSGS}.
Eigenfunctions can be written in the form \eqref{PSGS} with constants $C_1,C_2$ satisfying \eqref{C2SSphere} with $\psi_1, \psi_2$ from \eqref{PSGS}.
\end{proposition}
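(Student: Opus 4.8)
The plan is to follow the sphere argument of Proposition~\ref{SpectrumS} verbatim, with the unitary transformation of Lemma~\ref{U.eq.pos} replaced by that of Lemma~\ref{TrPs}. Applying $U_{(-1)}$ recasts $H_{\II(-1)}^m(\alpha,0)$ as the zero-curvature operator $H_{\II(0)}^0(\alpha,-\frac{1}{2}\tanh a)$ perturbed by the bounded potential $V_{(-1)}^m$ of~\eqref{Vmm}. The decisive structural difference from the positive-curvature case is that, since $\tanh a>0$ for every $a>0$, the effective boundary coupling $\beta=-\frac{1}{2}\tanh a$ is now \emph{negative}. Hence the reference spectrum is dictated by Proposition~\ref{ThmSpCyl}.2 rather than~\ref{ThmSpCyl}.1: the unperturbed operator $H_{\II(0)}^0(\alpha,-\frac{1}{2}\tanh a)$ is either purely real or carries exactly one complex conjugate pair, whose real part lies near $\alpha^2+\beta^2$. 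This is precisely the origin of the exceptional pair permitted in the statement.

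With this reference spectrum in hand, I would run the same perturbative scheme as for the sphere. The real eigenvalues of the unperturbed operator with large real part are simple and their consecutive separation $|\lambda_{n+1}-\lambda_n|$ grows without bound; treating $V_{(-1)}^m$ as a bounded perturbation, which can displace any eigenvalue by at most $C\|V_{(-1)}^m\|$, I would fix a threshold $\Lambda_{(-1)}^m$ dominating both $\alpha^2+\beta^2$ and this displacement. Above $\Lambda_{(-1)}^m$ the separation is too large for eigenvalues to collide, so they remain real and simple, the only admissible deviation being the single inherited conjugate pair near $\alpha^2+\beta^2$; the $\PT$-symmetry of $H_{\II(-1)}^m$ (Corollary~\ref{Complex.pairs}) then guarantees that any non-real eigenvalue indeed occurs in such a pair. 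Finally, the eigenvalue equation~\eqref{SIEVeq} and the eigenfunctions~\eqref{PSGS}, with $C_1,C_2$ linked through~\eqref{C2SSphere}, follow in the routine way by substituting the Legendre-function solutions of~\eqref{PSEVeq} into the boundary conditions.

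The hardest point will be the resolvent estimate that powers the perturbation argument. On the sphere, reality of the reference spectrum turned the similarity to a normal operator (Theorem~\ref{Riesz.basis}) into the clean global bound $\|R_{\II(0)}^0(\lambda)\|\leq C/\Im\lambda$. Here the reference eigenvalues remain simple even when the complex pair is present, so $H_{\II(0)}^0(\alpha,-\frac{1}{2}\tanh a)$ is still similar to a normal operator; but that normal operator now has a complex conjugate pair in its spectrum, and the bound $C/\Im\lambda$ is no longer valid in its vicinity. The delicate step is therefore to localize the estimate to the half-plane $\{\Re\lambda\geq\Lambda_{(-1)}^m\}$ lying above the exceptional neighbourhood of $\alpha^2+\beta^2$, where the distance from $\lambda$ to the spectrum of the normal model is still controlled by $\Im\lambda$, and to check that $C$ stays bounded as long as one remains away from the finitely many parameter values producing degeneracies. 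Once this localized bound is secured, $V_{(-1)}^m$ cannot generate further non-real eigenvalues in that region, and the argument closes exactly as in Proposition~\ref{SpectrumS}.
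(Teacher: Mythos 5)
Your proposal matches the paper's own proof, which simply says the argument is identical to that of Proposition~\ref{SpectrumS} with the unperturbed operator $H_{\II(0)}^0(\alpha,-\frac{1}{2}\tanh a)$ now falling under Proposition~\ref{ThmSpCyl}.2 because the effective boundary coupling is negative. Your additional remark about localizing the resolvent bound $\|R_{\II(0)}^0(\lambda)\|\leq C/\Im\lambda$ away from the exceptional complex pair is a genuine subtlety that the paper passes over in silence, so your write-up is if anything more careful than the original.
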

\begin{proof}
The proof is the same as in the positive curvature case,
\cf~the proof of Proposition \ref{SpectrumS}.
The unperturbed Hamiltonian $H_{\II(0)}^0(\alpha,-\frac{1}{2} \tanh a )$
corresponds to the case analysed in Proposition \ref{ThmSpCyl}.2.
\end{proof}

\begin{remark}\label{rem.neg}
The curvature effect is now represented by the bounded real potential $V_{(-1)}^m$ and the extra negative term $-\frac{1}{2} \tanh a$ in the boundary conditions.
\end{remark}

A result of the numerical analysis of the eigenvalue problem
is presented in Figure \ref{SEVpseudo}.
The resemblance to zero curvature case with negative~$\beta$
in boundary conditions is obvious.

\begin{figure}
\begin{center}
\subfloat[ Real part of $\lambda$ ]{
\includegraphics[ width =0.45\textwidth ]{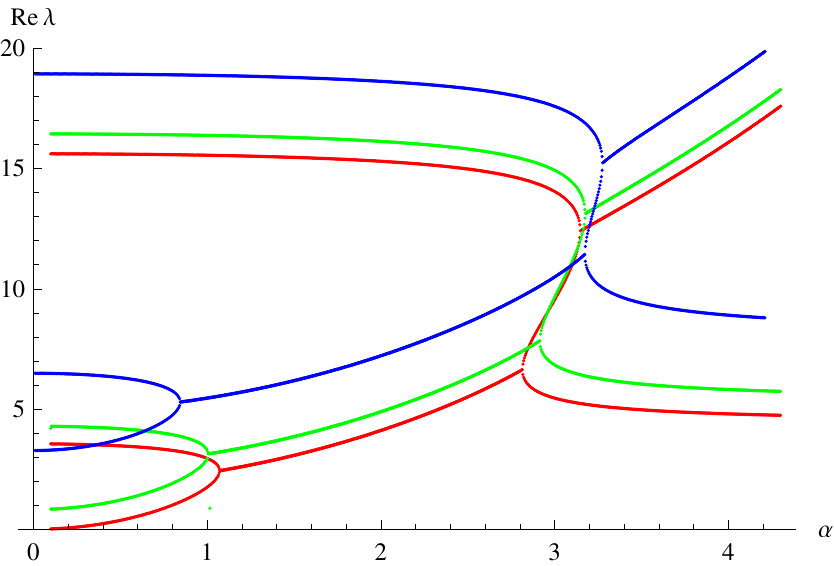}
}\subfloat[ Imaginary part of $\lambda$ ]{
\includegraphics [width =0.45\textwidth ]{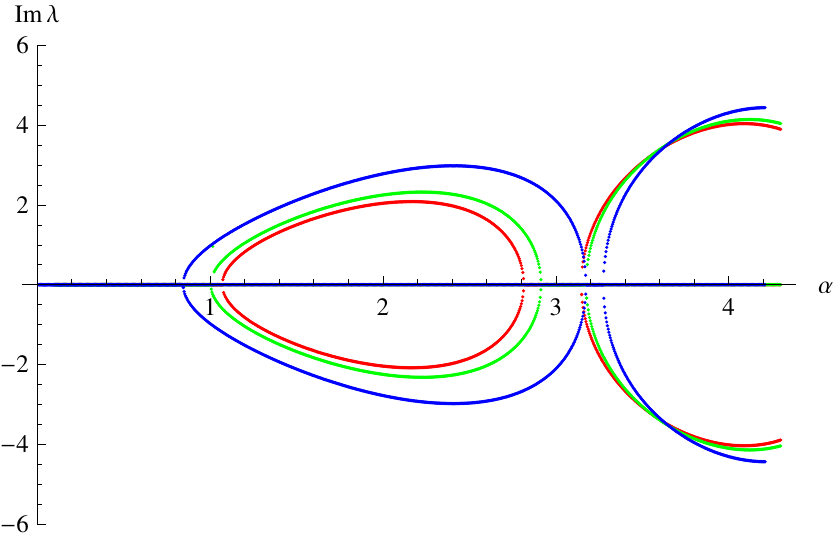}}
\caption {$\alpha$-dependence of eigenvalues, negative curvature, $a=\pi/4$. Red, green, and blue colour corresponds to $m=0,1,2$ respectively. See \href{http://gemma.ujf.cas.cz/~siegl/PTCurvedMan.html}{animation}, for an animated visualization of the $\alpha$-dependence of the eigenvalues.}
\label{SEVpseudo}
\end{center}
\end{figure}

\subsection{Connected boundary conditions}

The connected boundary conditions are, by their nature,
more complicated than the separated ones and moreover,
they are given by three real parameters $b,c,\phi$.
Like for the separated boundary conditions,
we can use the unitary transformations $U_{(\pm 1)}$
introduced in \eqref{STrS}, \eqref{STrPS}
to transform the problems to the zero curvature case,
however, with modified boundary conditions
and with additional bounded real potentials $V_{(\pm 1)}^m$
defined in \eqref{Vmp}, \eqref{Vmm}.
The modification of boundary conditions is presented in appropriate subsections below.

The spectrum is not analytically described so far even for the zero curvature model
and it is beyond the scope of this article to proceed with this analysis.
The main aim of this section is to show the effect of curvature,
\ie, the transformation of curved models to the zero curvature case.
Furthermore, we present some results of a numerical analysis
for the `lowest'
eigenvalues: $\phi$-dependence for selected values of $b, c$.
It is important to note that, unlike in the separated case,
we do not start with our parameters $b,c,\phi$
from a self-adjoint operator for $\phi=0$,
as it was the case for $\alpha =0$ in the case of separated boundary conditions.
We remark that the case \mbox{$b=c=0$}, $\phi=\pm \pi/2$
corresponds to irregular boundary conditions and the spectrum
of such operators is completely different
from the cases presented here (\cf~\cite{Siegl2009-06}).

\subsubsection{Zero curvature}

We impose connected boundary conditions (\ref{BC}\I) on the solutions of eigenvalue problem for $H_{\I(0)}^0(b,c,\phi)$ and we obtain the following equation for eigenvalues $\lambda=k^2$
\eq{-2 k+2 k \cos (2ak) \sqrt{1+b c}\, \cos\phi+\left(b k^2-c\right) \sin (2ak) =0 \label{EVeqConZ}}
and eigenfunctions
\eq{ \psi(x)=C_1 \cos (k x) + C_2\sin (kx),   \label{EFConZ} }
where the constants are further restricted by
\begin{multline}\label{EFConZC}
C_2 \left( \left(-1+\sqrt{1+b c} e^{\ii \phi }\right) \cos (ak) +b k \sin (a k) \right)
\\
= C_1 \left( \left(1+\sqrt{1+b c} e^{\ii \phi }\right) \sin (ak) -b k \cos (ak) \right).
\end{multline}
\begin{proposition}
Eigenvalues $\lambda=k^2$ of $H_{\I(0)}^0(b,c,\phi)$ are determined by equation \eqref{EVeqConZ}, eigenfunctions read \eqref{EFConZ} with \eqref{EFConZC}.

Eigenvalues for $m \neq 0$ can be obtained by the shift $\lambda \mapsto \lambda + m^2$ while the corresponding eigenfunctions remain the same.
\end{proposition}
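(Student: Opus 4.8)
The plan is to reduce the spectral problem for the transverse operator to a $2\times2$ linear-algebra condition. For $K=0$ and $m=0$ the operator $H_{\I(0)}^0$ acts as $-\dd^2/\dd x^2$ on $L^2(\Jd)$ (recall $\dd\nu_{(0)}=\dd x$), subject to the one-dimensional form of the connected boundary conditions~(\ref{BC}\I), \ie~$\Psi(a)=B\,\Psi(-a)$ with $\Psi:=(\psi,\psi')^{\mathsf T}$. Because $H_{\I(0)}^0$ has discrete spectrum (Theorem~\ref{Riesz.basis}.2), every spectral point is an eigenvalue, so it is enough to decide for which $\lambda=k^2$ the boundary value problem possesses a nontrivial solution. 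The general solution of $-\psi''=k^2\psi$ is $\psi(x)=C_1\cos(kx)+C_2\sin(kx)$ (valid also for complex~$k$, since complex eigenvalues are not excluded), and the task becomes to find the pairs $(C_1,C_2)\neq(0,0)$ compatible with the matching.

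First I would introduce the transfer matrix $M$ defined by $\Psi(a)=M\,\Psi(-a)$, which for $-\psi''=k^2\psi$ is
\begin{equation*}
  M=\begin{pmatrix} \cos(2ak) & k^{-1}\sin(2ak) \\ -k\sin(2ak) & \cos(2ak) \end{pmatrix}.
\end{equation*}
The connected condition then reads $(M-B)\,\Psi(-a)=0$, and since $\Psi(-a)=0$ would force $\psi\equiv0$ by uniqueness, a nontrivial eigenfunction exists precisely when $\det(M-B)=0$. Using the $2\times2$ identity $\det(M-B)=\det M+\det B-(M_{11}B_{22}+B_{11}M_{22}-M_{12}B_{21}-B_{12}M_{21})$ together with $\det M=1$ (the Wronskian), $\det B=(1+bc)-bc=1$, and $e^{\ii\phi}+e^{-\ii\phi}=2\cos\phi$, the cross term collapses and one obtains $\det(M-B)=2-2\sqrt{1+bc}\,\cos\phi\,\cos(2ak)+(c/k-bk)\sin(2ak)$; multiplying by~$k$ reproduces exactly~\eqref{EVeqConZ}. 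The constant relation~\eqref{EFConZC} is then read off by writing the single scalar equation coming from the first row of $(M-B)\,\Psi(-a)=0$ in the basis $\{\cos(kx),\sin(kx)\}$; when $\det(M-B)=0$ both rows are proportional, so this one equation fixes the ratio $C_1:C_2$ and hence the eigenfunction~\eqref{EFConZ}.

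For the shift statement I would simply note that for $K=0$ the transverse operator is $H_{\I(0)}^m=-\dd^2/\dd x^2+m^2$ and that the boundary conditions~(\ref{BC}\I) are independent of~$m$. Thus $\lambda\in\sigma(H_{\I(0)}^m)$ iff $-\psi''=(\lambda-m^2)\psi$ admits a nontrivial solution obeying the same matching, \ie~iff $\lambda-m^2=k^2$ solves~\eqref{EVeqConZ}; the eigenfunction, written through the same~$k$ with $k^2=\lambda-m^2$, is unchanged. This is precisely the rigid shift $\lambda\mapsto\lambda+m^2$.

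I do not expect a genuine obstacle here: the content is the determinant expansion, which is mechanical once $\det B=1$ and the double-angle formulae are used. The two points that still demand a little care are completeness and the degenerate value $k=0$. Completeness holds because discreteness of the spectrum identifies the eigenfunctions with the nontrivial solutions of the boundary value problem, so~\eqref{EVeqConZ} is both necessary and sufficient for $\lambda=k^2$ (with $k\neq0$) to be an eigenvalue, and no roots are spuriously introduced or lost on multiplying the condition by~$k$. The value $\lambda=0$, where the solution space degenerates to $\psi(x)=C_1+C_2x$, has to be tested directly against~$B$; this is a short finite computation and, as in the separated case, yields an eigenvalue only for exceptional values of the parameters.
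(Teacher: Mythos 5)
Your proof is correct and follows essentially the same route as the paper, which offers no explicit argument beyond inserting the general solution $C_1\cos(kx)+C_2\sin(kx)$ into the connected boundary conditions and demanding a nontrivial solution of the resulting $2\times 2$ homogeneous system; your packaging via $\det(M-B)=0$ with $\det M=\det B=1$ is just a tidier way of doing that determinant, and the $m^2$-shift argument is the intended one. Your added caution that multiplying by $k$ plants a spurious root at $k=0$ (so $\lambda=0$ must be tested separately with $\psi=C_1+C_2x$) is a genuine, if minor, improvement over the paper's presentation.
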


Figure \ref{CEVaCylinder}
illustrates the behaviour of eigenvalues for a certain choice of parameters.

\begin{figure}
\begin{center}
\subfloat[ Real part of $\lambda$ ]{
\includegraphics[ width =0.45\textwidth ]{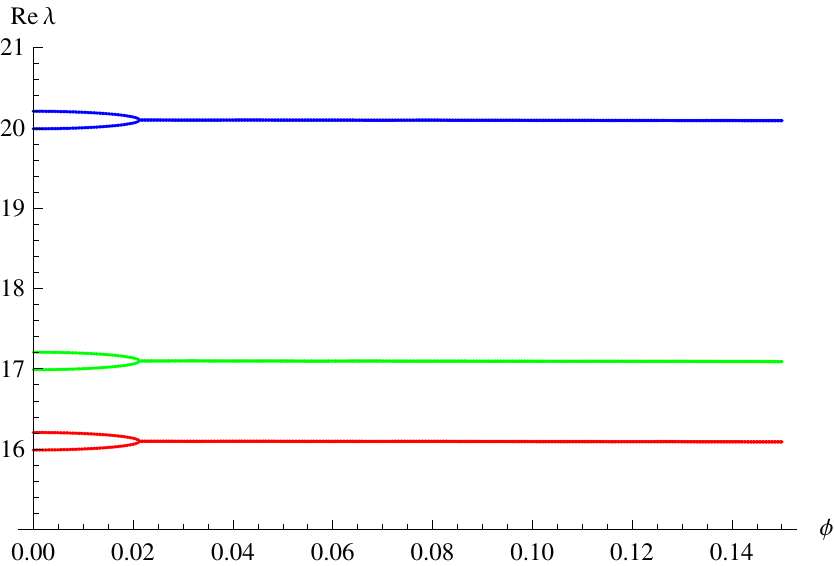}
}\subfloat[ Imaginary part of $\lambda$ ]{
\includegraphics [width =0.45\textwidth ]{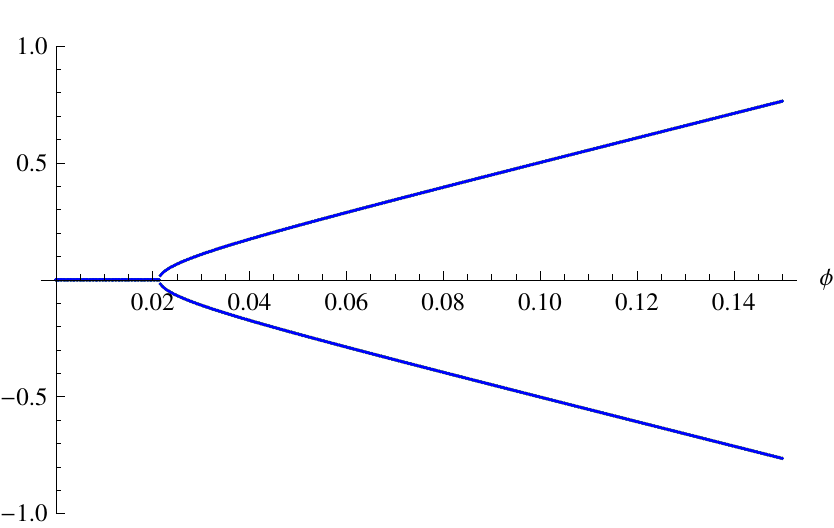}}
\caption {$\phi$-dependence of eigenvalues, zero curvature, $a=\pi/4$, $b=c=0.01$. Red, green, and blue colour corresponds to $m=0,1,2$ respectively. }
\label{CEVaCylinder}
\end{center}
\end{figure}

\subsubsection{Positive curvature}

The solutions of the eigenvalue problem for $H_{\I(+1)}^m(b,c,\phi)$ with connected boundary conditions (\ref{BC}\I) are the same as \eqref{SGS} except the constants $C_1, C_2$ now satisfy
\begin{multline}\label{C2CPSphere}
C_2\left( \sqrt{1+b c} e^{\ii \phi } \psi_2(-a)-\psi_2(a)+b \psi_2'(-a) \right)
\\
= C_1  \left(  \sqrt{1+b c} e^{\ii \phi } \psi_1(-a)-\psi_1(a)+b \psi_1'(-a)  \right).
\end{multline}
The equation for eigenvalues reads
\begin{equation}\label{SCEVeq}
\left|
\begin{array}{cc}
 -\sqrt{1+b c}\, e^{\ii \phi } \psi_1(-a)+\psi_1(a)-b \psi_1'(-a) & -\sqrt{1+b c}\, e^{\ii \phi } \psi_2(-a)+\psi_2(a)-b \psi_2'(-a) \\
 -c \psi_1(-a)-\sqrt{1+b c}\, e^{-\ii \phi } \psi_1'(-a)+\psi_1'(a) & -c \psi_2(-a)-\sqrt{1+b c}\, e^{-\ii \phi } \psi_2'(-a)+\psi_2'(a)
\end{array}
\right|=0.
\end{equation}

Figure~\ref{CEVaSphere}
illustrates the behaviour of eigenvalues for a certain choice of the parameters.
\begin{figure}
\begin{center}
\subfloat[ Real part of $\lambda$ ]{
\includegraphics[ width =0.45\textwidth ]{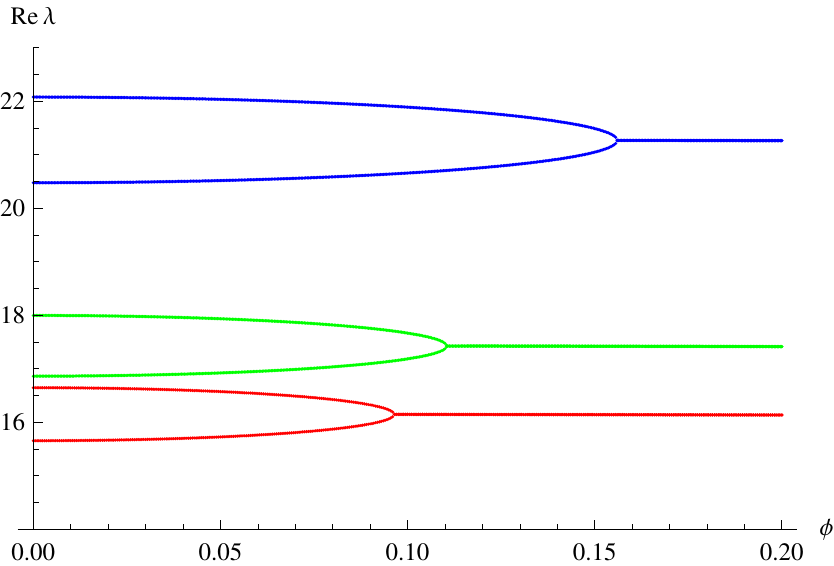}
}\subfloat[ Imaginary part of $\lambda$ ]{
\includegraphics [width =0.45\textwidth ]{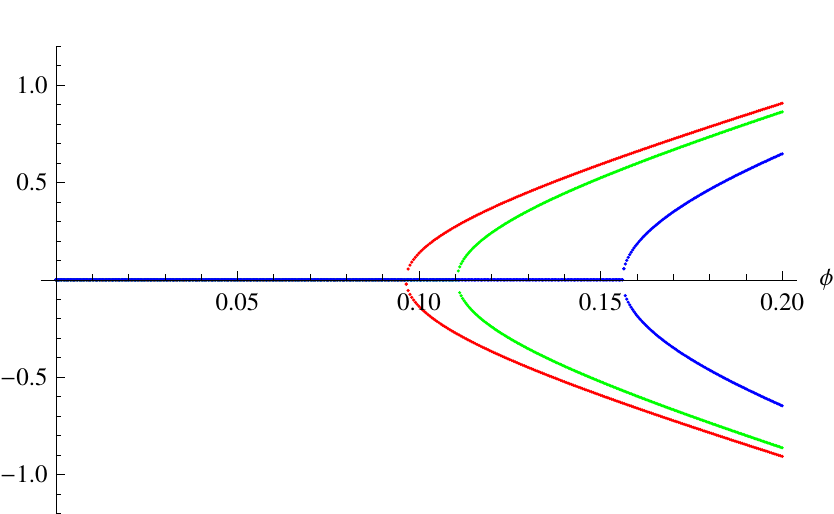}}
\caption {$\phi$-dependence of eigenvalues, positive curvature, $a=\pi/4$, \mbox{$b=c=0.01$}. Red, green, and blue colour corresponds to $m=0,1,2$ respectively. }
\label{CEVaSphere}
\end{center}
\end{figure}

We employ the unitary transformation $U_{(+1)}$
introduced in Lemma~\ref{U.eq.pos}
to map $H_{\I(+1)}^m(b,c,\phi)$ to a zero curvature Hamiltonian.

\begin{proposition}
The unitary mapping $U_{(+1)}$ defined in~\eqref{STrS}
transforms the Hamiltonian $H_{\I(+1)}^m(b,c,\phi)$ to
\begin{equation}\label{TrHpCon}
U_{(+1)}^{-1} H_{\I(+1)}^m(b,c,\phi)U_{(+1)}= \hat{H}_{\I(0)}^0 +V_{(+1)}^m,
\end{equation}
where $V_{(+1)}^m$ is defined in \eqref{Vmp} and
$\displaystyle \hat{H}_{\I(0)}^0:=-\frac{\dd^2}{\dd x^2}$ with the domain
consisting of~$\psi \in W^{2,2}(\Jd)$ satisfying
\begin{eqnarray}
\Psi(a)&=&B_{(+1)} \Psi(-a), \ \ {\rm with} \ \label{ModBCS}
\Psi(x):= \matice{\psi(x) \\ \psi'(x)} \ {\rm and}  \\
B_{(+1)}&:=&\matice{\sqrt{1+bc)}\, e^{\ii\phi}-\frac{1}{2}b\tan a & b \\ c-\sqrt{1+bc} \tan a \cos\phi +
\frac{1}{4} b \tan^2 a & \sqrt{1+bc}\,e^{-\ii\phi}-\frac{1}{2}b\tan a}. \nonumber
\end{eqnarray}

Eigenvalues $\lambda=k^2$ of $H_{\I(+1)}^m(b,c,\phi)$ are determined by equation \eqref{SCEVeq}, eigenfunctions
read \eqref{SGS} with constants $C_1,C_2$ given by \eqref{C2CPSphere}.
\end{proposition}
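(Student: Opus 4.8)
The plan is to follow the same two-step strategy as in Lemma~\ref{U.eq.pos}: transform the differential action, which is inherited unchanged, and then transform the boundary conditions, which is the only genuinely new ingredient.

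First, I note that $H_{\I(+1)}^m(b,c,\phi)$ and $H_{\II(+1)}^m(\alpha,0)$ have the \emph{same} differential expression and differ only through the boundary conditions on $\partial_2\Omega_0$. Hence the interior computation carried out in the proof of Lemma~\ref{U.eq.pos} applies verbatim: conjugating $-\partial^2+\tan x\,\partial+m^2/\cos^2x$ by the unitary map $U_{(+1)}$ of~\eqref{STrS} yields the action $-\dd^2/\dd x^2+V_{(+1)}^m$ with $V_{(+1)}^m$ as in~\eqref{Vmp}. Since $(\cos x)^{-1/2}$ and its inverse are smooth and bounded on $[-a,a]$ for $a<\pi/2$, the map $U_{(+1)}$ preserves $W^{2,2}(\Jd)$, so only the transformed boundary conditions remain to be identified.

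The core of the proof is the transformation of the connected condition $\Psi(a)=B\,\Psi(-a)$. Writing $\psi=U_{(+1)}\eta=(\cos x)^{-1/2}\eta$ and differentiating, a short computation gives
\begin{equation*}
\begin{pmatrix}\psi(x)\\\psi'(x)\end{pmatrix}
=(\cos x)^{-1/2}\,M(x)\begin{pmatrix}\eta(x)\\\eta'(x)\end{pmatrix},
\qquad
M(x):=\matice{1 & 0 \\ \tfrac12\tan x & 1}.
\end{equation*}
Inserting this into $\Psi(a)=B\,\Psi(-a)$, cancelling the common scalar factor (using $\cos(-a)=\cos a$), and exploiting $\tan(-a)=-\tan a$, whence $M(-a)=M(a)^{-1}$, I obtain the transformed condition with boundary matrix $B_{(+1)}=M(a)^{-1}B\,M(-a)=M(a)^{-1}B\,M(a)^{-1}$. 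A single $2\times2$ multiplication then reproduces the matrix displayed in~\eqref{ModBCS}. This is the only real calculation and the sole place requiring careful sign bookkeeping---through $\tan(-a)=-\tan a$ and the derivative of $(\cos x)^{-1/2}$---so I expect it to be the main, though entirely routine, obstacle.

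For the second assertion, the eigenvalue equation $H_{\I(+1)}^m\psi=\lambda\psi$ has the same differential part as in the separated case, so its general solution is again~\eqref{SGS} in terms of the associated Legendre functions $P^{(m)}_\nu,Q^{(m)}_\nu$. Imposing the connected boundary condition on $\psi=C_1\psi_1+C_2\psi_2$ produces two homogeneous linear equations for $(C_1,C_2)$, one from each component of the vector identity $\Psi(a)=B\,\Psi(-a)$; the vanishing of their $2\times2$ determinant is precisely the eigenvalue equation~\eqref{SCEVeq}, while either equation alone yields the proportionality~\eqref{C2CPSphere} between $C_1$ and $C_2$. No analytic input beyond the explicit form of $B$ is required here.
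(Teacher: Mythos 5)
Your proposal is correct and follows exactly the computation the paper leaves implicit (the paper states this proposition without proof, calling the analogous Lemma~\ref{U.eq.pos} ``a straightforward calculation''): the conjugation identity $\Psi(x)=(\cos x)^{-1/2}M(x)\,(\eta(x),\eta'(x))^{T}$ with $M(-a)=M(a)^{-1}$ gives $B_{(+1)}=M(a)^{-1}BM(a)^{-1}$, and multiplying out indeed reproduces the matrix in~\eqref{ModBCS}. The treatment of the eigenvalue equation and eigenfunctions is likewise the standard argument the paper intends.
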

\begin{remark}
The boundary conditions \eqref{ModBCS} are $\PT$-symmetric, but they are no more $\P$-pseudo-Hermitian.
This result shows that although we reduced the problem to the zero curvature case
(in the sense of previous sections), the investigation of spectrum must be done with more
general boundary conditions than $\PT$-symmetric and $\P$-pseudo-Hermitian at the same time.
\end{remark}

\subsubsection{Negative curvature}

The solutions of the eigenvalue problem for $H_{\I(-1)}^m(b,c,\phi)$ with connected boundary conditions (\ref{BC}\I)
are the same as in the separated conditions case \eqref{PSGS}, but the relation between constants $C_1, C_2$ is
given by \eqref{C2CPSphere} with $\psi_1, \psi_2$ corresponding to the negative curvature solutions \eqref{PSGS};
the same is also valid for the eigenvalue equation \eqref{SCEVeq}.

Figure \ref{CEVaPSphere}
illustrates the behaviour of eigenvalues for a certain choice of parameters.
\begin{figure}
\begin{center}
\subfloat[ Real part of $\lambda$ ]{
\includegraphics[ width =0.45\textwidth ]{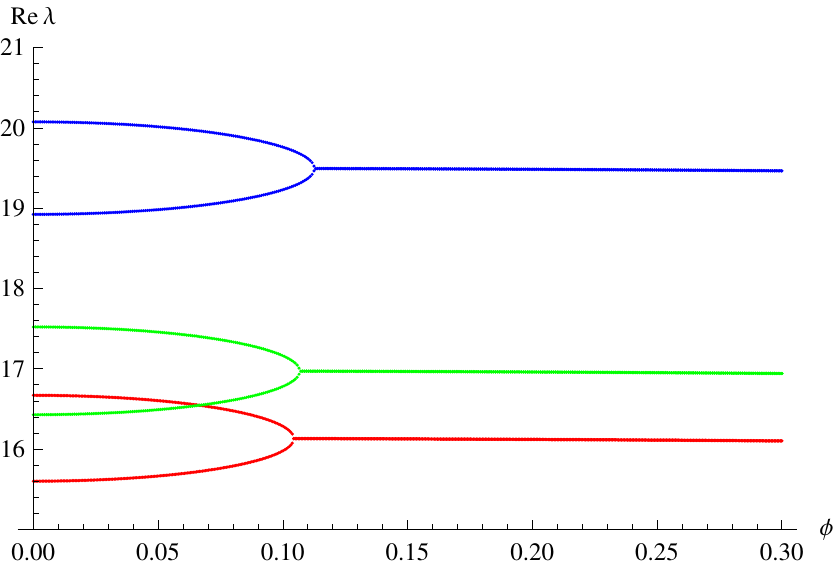}
}\subfloat[ Imaginary part of $\lambda$ ]{
\includegraphics [width =0.45\textwidth ]{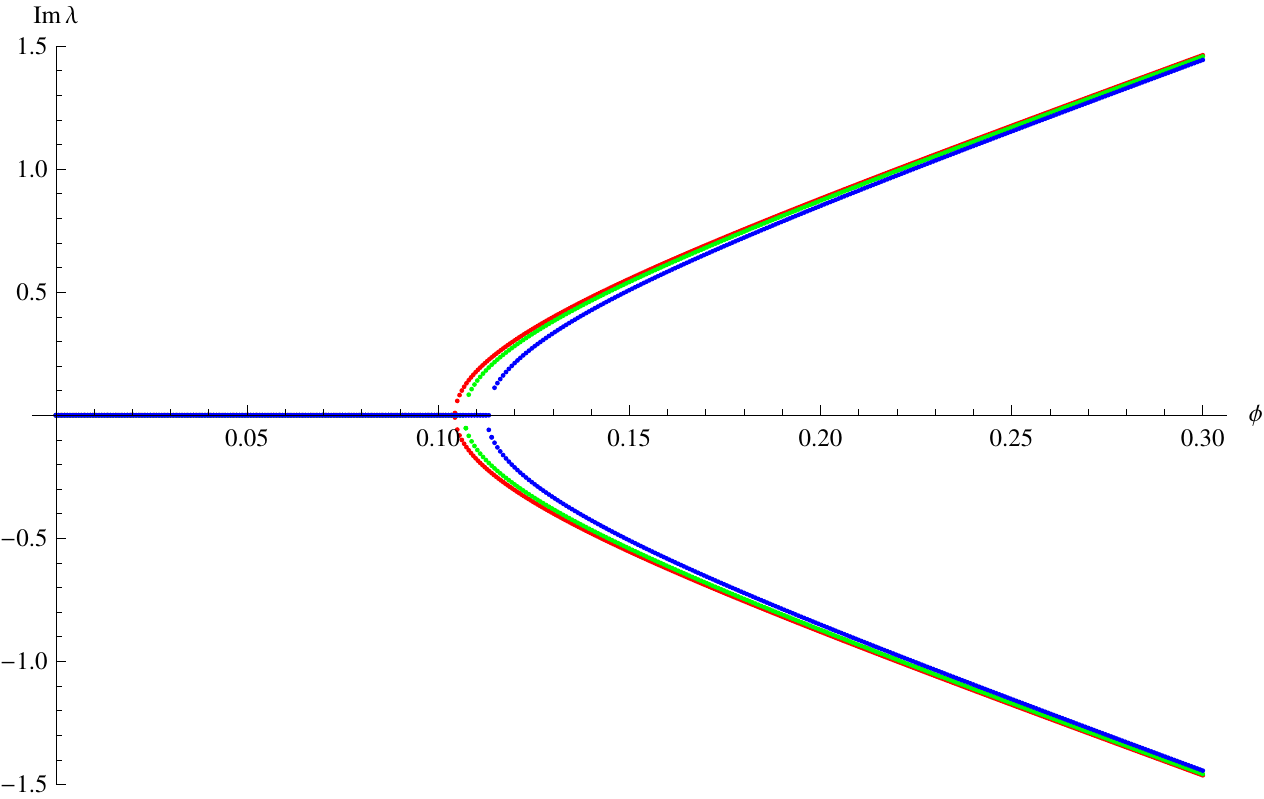}}
\caption {$\phi$-dependence of eigenvalues, negative curvature, $a=\pi/4$, \mbox{$b=c=0.01$}.
Red, green, and blue colour corresponds to $m=0,1,2$ respectively. }
\label{CEVaPSphere}
\end{center}
\end{figure}

\begin{proposition}
The unitary mapping $U_{(-1)}$ defined in~\eqref{STrPS}
transforms the Hamiltonian $H_{\I(-1)}^m(b,c,\phi)$ to
\eq{U_{(-1)}^{-1} H_{\I(-1)}^m(b,c,\phi)U_{(-1)}= \tilde{H}_{\I(0)}^0 +V_{(-1)}^m, \label{TrHmCon} }
where $V_{(-1)}^m(x)$ is defined in \eqref{Vmm} and $\displaystyle \tilde{H}_{\I(0)}^0:=-
\frac{\dd^2}{\dd x^2}$ with the domain consisting of $\psi \in W^{2,2}(\Jd)$ satisfying
\begin{eqnarray}
\Psi(a)&=&B_{(-1)} \Psi(-a), \ \ {\rm with} \ \Psi(x):= \matice{\psi(x) \\ \psi'(x)} \ {\rm and}
\label{ModBCPS} \\
B_{(-1)}&:=&\matice{\sqrt{1+bc)}\, e^{\ii\phi}+\frac{1}{2}b\tanh a & b \\ c+\sqrt{1+bc} \tanh a \cos\phi +
\frac{1}{4} b \tanh^2 a & \sqrt{1+bc}\,e^{-\ii\phi}+\frac{1}{2}b\tanh a}. \nonumber
\end{eqnarray}

Eigenvalues $\lambda=k^2$ of $H_{\I(-1)}^m(b,c,\phi)$ are determined by equation \eqref{SCEVeq}
with $\psi_{1},\psi_2$ from \eqref{PSGS}. The eigenfunctions read \eqref{PSGS}, where constants $C_1,C_2$
are given by \eqref{C2CPSphere} with $\psi_{1},\psi_2$ from \eqref{PSGS}.
\end{proposition}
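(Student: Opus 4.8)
The plan is to split the argument into two independent pieces: the conjugation of the differential expression, which is identical to the separated negative-curvature case, and the conjugation of the boundary conditions, which is the only genuinely new ingredient. The eigenvalue/eigenfunction description then drops out mechanically once the transformed domain is in hand.

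First I would observe that the differential expression of $H_{\I(-1)}^m$ coincides with that of $H_{\II(-1)}^m$, namely $-\partial^2-\tanh x\,\partial+m^2/\cosh^2 x$; the connected and separated Hamiltonians differ only through their domains. Since the conjugation $U_{(-1)}^{-1}(\cdot)\,U_{(-1)}$ acts on the differential expression alone and does not see the boundary data, the computation already carried out for the differential part of Lemma~\ref{TrPs} applies verbatim and yields $-\frac{\dd^2}{\dd x^2}+V_{(-1)}^m$ with $V_{(-1)}^m$ as in~\eqref{Vmm}. Moreover $(\cosh x)^{\pm 1/2}$ is smooth with bounded derivatives on $[-a,a]$, so multiplication by it preserves $W^{2,2}(\Jd)$, and $U_{(-1)}$ maps $\Dom(\tilde{H}_{\I(0)}^0)$ onto $\Dom(H_{\I(-1)}^m)$ provided the two boundary conditions are shown to correspond.

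The main step is therefore to transform the connected condition. Writing $\psi=U_{(-1)}\chi$, i.e.\ $\psi(x)=(\cosh x)^{-1/2}\chi(x)$, and differentiating, one gets the pointwise identity $\bigl(\psi(x),\psi'(x)\bigr)^{T}=(\cosh x)^{-1/2}M(x)\,\bigl(\chi(x),\chi'(x)\bigr)^{T}$ with the lower-triangular matrix $M(x):=\matice{1 & 0 \\ -\frac{1}{2}\tanh x & 1}$. Substituting this into the connected condition $\bigl(\psi(a),\psi'(a)\bigr)^{T}=B\,\bigl(\psi(-a),\psi'(-a)\bigr)^{T}$ of~(\ref{BC}\I) and cancelling the scalar prefactors $(\cosh(\pm a))^{-1/2}$, which are equal because $\cosh$ is even, leaves $M(a)\bigl(\chi(a),\chi'(a)\bigr)^{T}=B\,M(-a)\bigl(\chi(-a),\chi'(-a)\bigr)^{T}$. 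Hence the transformed transfer matrix is $B_{(-1)}=M(a)^{-1}B\,M(-a)$; using that $\tanh$ is odd, so that $M(-a)=M(a)^{-1}=\matice{1 & 0 \\ \frac{1}{2}\tanh a & 1}$, and multiplying out the $2\times 2$ product — with $e^{\ii\phi}+e^{-\ii\phi}=2\cos\phi$ producing the $(2,1)$-entry — reproduces exactly the matrix displayed in~\eqref{ModBCPS}.

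Finally, the eigenvalue ODE coincides with that of the separated negative-curvature problem, so its two linearly independent solutions are those in~\eqref{PSGS}; inserting $\psi=C_1\psi_1+C_2\psi_2$ into $\Psi(a)=B\Psi(-a)$ gives a homogeneous linear system for $(C_1,C_2)$ whose solvability condition is the vanishing determinant~\eqref{SCEVeq} with $\psi_1,\psi_2$ from~\eqref{PSGS}, while one row fixes the ratio of constants as in~\eqref{C2CPSphere}. I do not expect a deep obstacle: the assertion is essentially computational. The one delicate point is the derivative map $M(x)$, since the off-diagonal term $-\frac12\tanh x$ relating $\psi'$ to $\chi'$ feeds directly into the $(2,1)$-entry of $B_{(-1)}$ and is invisible in the diagonal entries; one must also confirm that the weights at $\pm a$ genuinely cancel, which they do by the evenness of $\cosh$.
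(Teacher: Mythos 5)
Your proposal is correct and is precisely the ``straightforward calculation'' the paper leaves implicit for this proposition: conjugating the differential expression exactly as in Lemma~\ref{TrPs} to produce $-\dd^2/\dd x^2+V_{(-1)}^m$, and transforming the transfer matrix via $B_{(-1)}=M(a)^{-1}BM(-a)$ with $M(x)=\matice{1 & 0 \\ -\frac{1}{2}\tanh x & 1}$, which reproduces \eqref{ModBCPS} entry by entry (including the $2\cos\phi$ term in the $(2,1)$-entry). The concluding derivation of the eigenvalue determinant \eqref{SCEVeq} and the constant ratio \eqref{C2CPSphere} likewise matches the paper's route.
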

\begin{remark}
The boundary conditions \eqref{ModBCPS} are $\PT$-symmetric, however not \mbox{$\P$-pseudo-Hermitian},
as for the positive curvature case. Thus again, it is necessary to investigate more general
boundary conditions in zero curvature eigenvalue problem.
\end{remark}

\section{Concluding remarks}\label{Sec.end}

The goal of this paper was to introduce
a new class of $\PT$-symmetric Hamiltonians defined in curved manifolds
and describe the effects of curvature on the spectrum.
Although we were able to find these effects for both separated
and connected boundary conditions,
the absence of results on reality of the spectrum for the latter
(even in the case of zero curvature)
did not allow us to present the conclusions in an entirely descriptive
and explicit way.
Let us therefore summarize the main features of the model
for the separated Robin type boundary conditions~\eqref{Robin} only.

In Table~\ref{Tab.summary} we schematically (and very roughly)
describe qualitative properties of the spectrum we observed
in the constant-curvature cases.
\begin{table}[ht]
\begin{center}
\begin{tabular}{|c|c|l|}
\hline
curvature   &  spectrum     & eigenvalues \\ \hline \hline
zero            & $\R$      & only some $\alpha$-dependent, crossings   \\ \hline
positive    & $\R$          & all $\alpha$-dependent, no crossings       \\  \hline
negative    & $\C$          & all $\alpha$-dependent, crossings, \\
                &               & creation and annihilation of complex pairs \\ \hline
\end{tabular}
\caption{A heuristic summary of our analytical and numerical analysis.}\label{Tab.summary}
\end{center}
\end{table}
The entry describing the positive curvature case
includes our conjecture (supported by numerical analysis)
that all eigenvalues are real.

One of the most instructive results in the paper are probably
Lemmata~\ref{U.eq.pos} and~\ref{TrPs}, which enable one
to understand the effect of curvature in terms of
an additional effective potential and boundary-coupling interaction.
For the $s$-wave modes (\ie~$m=0$ in the decomposition~\eqref{H.dec})
and infinitesimally thin strips (\ie~$a \ll l$),
it follows from the lemmata
that the positive and negative curvature acts
as an attractive and repulsive interaction, respectively.
This is in agreement with a spectral analysis of similar models
in the self-adjoint case of Dirichlet boundary conditions
\cite{Krejcirik-2003-45,Krejcirik-2006-2006}.
However, the additional boundary interaction is not negligible
for positive widths~$a$, and its effect is actually completely opposite
(\cf~Remarks \ref{rem.pos}, \ref{rem.neg}):
the positive and negative curvature gives rise to
an attractive and repulsive Robin-type boundary condition, respectively.
The interplay between these two effects is further complicated
by the presence of the repulsive centrifugal term for $|m| \geq 1$,
and the numerical analysis confirms that the overall picture
of the spectrum can be quite complex.

It follows from previous comments and remarks that there remain several open problems,
\eg~the proof of the reality of all eigenvalues in the positive curvature model.
Nonetheless, we would like to mention also some other interesting directions
of potential future research:
the spectral effect of curvature in non-constant curvature and non-constant boundary-coupling functions setting,
the existence of Riesz basis for such setting or models defined on unbounded domains (waveguides) in curved
spaces. The last case can be viewed as a natural continuation of \cite{borisov-2007}
where a planar $\PT$-symmetric waveguide was studied.

\section*{Acknowledgement}
P.S. is thankful to B.~Mityagin for very valuable discussions.
The work was partially supported by the Czech Ministry of Education,
Youth and Sports within the project LC06002.
P.S. appreciates also the support of CTU grant No.~CTU0910114.

\addcontentsline{toc}{section}{References}
{\footnotesize
\bibliographystyle{acm}
\bibliography{references}
}

\end{document}